\definecolor{mygreen}{RGB}{20,150,80}
\definecolor{myred}{RGB}{150,0,0}
\renewcommand{\epsilon}{\varepsilon}
\newcommand{\stackeq}[2]{\ensuremath{\stackrel{\text{#1}}{#2}}}
\DeclareMathOperator{\E}{\ensuremath{\normalfont \textbf{E}}}
\newcommand{\addLayerNoInput}[0]{\ensuremath{\textsc{AddLayer}}}
\newcommand{\modifiedAddLayerNoInput}[0]{\ensuremath{\textsc{ModifiedAddLayer}}}
\newcommand{\computeLayersNoInput}[0]{\ensuremath{\textsc{ComputeLayers}}}
\newcommand{\PreProcessing}[1]{\ensuremath{\textsc{PreProcessing}(#1)}}
\newcommand{\addLayer}[1]{\ensuremath{\textsc{AddLayer}(#1)}}
\newcommand{\computeLayers}[1]{\ensuremath{\textsc{ComputeLayers}(#1)}}
\newcommand{\unifzeroone}[0]{\ensuremath{\mathsf{Unif}[0, 1]}}
\newcommand{\hiddencomment}[1]{}
\newcommand{\HEDCS}[2]{\ensuremath{#1\text{-}\text{\normalfont HEDCS}_{#2}}}
\newcommand{\mc}[1]{\ensuremath{\mathcal{#1}}}
\newcommand{\maximalmatchingsize}[1]{\ensuremath{\widetilde{\mu}(#1)}}
\newcommand{\APXKTWOBIPARTITE}[0]{.612}
\newcommand{\APXKTWOGENERAL}[0]{.609}
\newcommand{\APXKTHREEBIPARTITE}[0]{.563}
\newcommand{\APXKTHREEGENERAL}[0]{.532}
\DeclareMathOperator{\poly}{poly}
\newcommand{\Ot}{\ensuremath{\widetilde{O}}}
\newtheorem{theorem}{Theorem}[section]
\newtheorem{lemma}[theorem]{Lemma}
\newtheorem{proposition}[theorem]{Proposition}
\newtheorem{corollary}[theorem]{Corollary}
\newtheorem{definition}[theorem]{Definition}
\newtheorem{claim}[theorem]{Claim}
\newtheorem{observation}[theorem]{Observation}
\newtheorem{remark}[theorem]{Remark}
\newtheorem{open}{Open Problem}
\definecolor{mylightgray}{RGB}{230,230,230}
\algnewcommand{\IIf}[1]{\State\algorithmicif\ #1\ \algorithmicthen}
\algnewcommand{\EndIIf}{\unskip\ \algorithmicend\ \algorithmicif}
\newenvironment{highlighttechnical}{
\par\addvspace{0.1cm}
\begin{tcolorbox}[width=\textwidth,
                  boxsep=5pt,
                  left=1pt,
                  right=1pt,
                  top=2pt,
                  bottom=2pt,
                  boxrule=1pt,
                  arc=0pt,
                  colframe=black
                  ]
}{
\end{tcolorbox}
}
\newenvironment{myproof}{
\vspace{-0.5cm}
\begin{proof}
}{
\end{proof}
}
\newenvironment{restate}[1]{\textbf{#1} (restated)\textbf{.}\begin{itshape}}{\end{itshape}\vspace{0.3cm}}
\renewcommand{\paragraph}{%
  \@startsection{paragraph}{4}%
  {\z@}{10pt}{-1em}%
  {\normalfont\normalsize\bfseries}%
}
\title{New Trade-Offs for Fully Dynamic Matching\\ via Hierarchical EDCS}
\author{
Soheil Behnezhad\\{\em Stanford University}\\ \texttt{beh@cs.stanford.edu} \\ \and 
Sanjeev Khanna\thanks{Supported in part by NSF awards CCF-1763514, CCF-1934876, and CCF-2008305.}
\\{\em University of Pennsylvania} \\  \texttt{sanjeev@cis.upenn.edu} \\
}
\date{}
\begin{document}

\maketitle

\begin{abstract}
	We study the maximum matching problem in {\em fully dynamic} graphs: a graph is undergoing both edge insertions and deletions, and the goal is to efficiently maintain a large matching after each edge update. This problem has received considerable attention in recent years. The known algorithms naturally exhibit a trade-off between the quality of the matching maintained (i.e., the approximation ratio) and the time needed per update. While several interesting results have been obtained, the optimal behavior of this trade-off remains largely unclear. Our main contribution is a new approach to designing fully dynamic approximate matching algorithms that in a {\em unified manner} not only (essentially) recovers all previously known trade-offs that were achieved via very different techniques, but reveals some new ones as well.
	
	\medskip
	Specifically, we introduce a generalization of the {\em edge-degree constrained subgraph} (EDCS) of Bernstein and Stein (2015) that we call the {\em hierarchical EDCS} (HEDCS). We also present a randomized algorithm for efficiently maintaining an HEDCS. In an $m$-edge graph with maximum degree $\Delta$, for any integer $k \geq 0$ that is essentially the number of levels of the hierarchy in HEDCS, our algorithm takes $\Ot(\min\{\Delta^{1/(k+1)}, m^{1/(2k+2)}\})$ worst-case update-time and maintains an (almost) $\alpha(k)$-approximate matching where we show:
	\begin{itemize}[itemsep=5pt,topsep=5pt]
		\item $\alpha(0) = 1$, $\alpha(1) = \frac{2}{3}$, $\alpha(\tfrac{1}{\delta}) \geq (\tfrac{1}{2} + \Omega_\delta(1))$ for any $\delta > 0$, and $\alpha(\log \Delta) \geq \frac{1}{2}$.\\[0.1cm]
		These bounds recover all previous trade-offs known for dynamic matching in the literature up to logarithmic factors in the update-time.
		\item $\alpha(2) > \APXKTWOBIPARTITE$ for bipartite graphs, and $\alpha(2) > \APXKTWOGENERAL$ for general graphs.\\[0.1cm]
		Note that these approximations are obtained in $\widetilde{O}(\min\{\Delta^{1/3}, m^{1/6}\})$ update-time.
		\item $\alpha(3) > \APXKTHREEBIPARTITE$ for bipartite graphs, and $\alpha(3) > \APXKTHREEGENERAL$ for general graphs.\\[0.1cm]
		Note that these approximations are obtained in $\widetilde{O}(\min\{\Delta^{1/4}, m^{1/8}\})$ update-time.
	\end{itemize}
\end{abstract}

\clearpage

\setcounter{tocdepth}{2}
\tableofcontents

\clearpage

\clearpage
\section{Introduction}

The maximum matching problem in graphs plays a central role in combinatorial optimization, and hence has been extensively studied in the classical setting where we are given a static graph, and the goal is to compute a maximum matching of the graph. However, in many applications of the matching problem, the input graph may be dynamically changing via edge insertions and deletions. A natural question is if it is possible to efficiently maintain a near-optimal matching of a graph as it undergoes changes. In this paper, we study design of efficient {\em fully dynamic} algorithms for the {\em maximum matching} problem. Specifically, given a graph $G$ that undergoes both edge insertions and deletions, the goal is to maintain a matching of $G$ that
\begin{enumerate}[label=$(\roman*)$,topsep=0pt,itemsep=0pt]
	\item is approximately as large as the maximum matching of $G$ after each update, and
	\item the time-complexity needed to address each update is small.
\end{enumerate}

Throughout the paper, we will denote by $n$ the number of vertices in $G$, by $m$ the maximum number of edges in $G$ at any time, and by $\Delta$ the maximum degree in $G$ at any time. We will say that a matching is an {\em $\alpha$-approximate} for some $\alpha \in [0,1]$ if its size is at least an $\alpha$-fraction of the maximum matching size.

When the goal is to maintain an {\em exact} maximum matching, then there are conditional lower bounds \cite{AbboudW14,HenzingerKNS15,Dahlgaard-ICALP16} ruling out any $O(n^{1-\epsilon})$ update-time algorithm (see also \cite{Sankowski07,BrandNS19} for some progress on the algorithmic side). As a result, much of the focus in the literature has been on approximate solutions (see e.g.   \cite{BaswanaGS11,BaswanaGS18,NeimanS-STOC13,GuptaPeng-FOCS13,BhattacharyaHI-SiamJC18,BhattacharyaHN-SODA17,BhattacharyaHN-STOC16,Solomon-FOCS16,CharikarS18-ICALP,ArarCCSW-ICALP18,BernsteinFH-SODA19,BehnezhadDHSS-FOCS19,BehnezhadLM-SODA20,Wajc-STOC20,BernsteinDL-STOC21,BhattacharyaK21-ICALP21,RoghaniSW21-arXiv} and the references therein). These works indeed show that settling for an approximate matching does translate into much improved update times.
We highlight some of these results below.

Baswana, Gupta, and Sen~\cite{BaswanaGS11, BaswanaGS18} showed that a maximal matching, and hence a $1/2$-approximate matching, can be maintained in $O(\log n)$ amortized update time (see also the follow-up work by Solomon \cite{Solomon-FOCS16}). At a high-level, their algorithm is based on the insight that if a vertex of degree $d$ is matched to a random neighbor, then in expectation, $\Omega(d)$ updates need to occur before this edge is affected.
Gupta and Peng~\cite{GuptaPeng-FOCS13} showed that for any fixed $\epsilon > 0$, a $(1-\epsilon)$-approximate matching can be maintained with a worst-case update time of $O(\min\{\Delta, m^{1/2}\})$. At a high-level, their approach is based on recomputing a $(1-\epsilon)$-approximate matching once a sufficiently large number of updates have occurred. This idea directly gives the desired result when the matching size is large, and the authors then show that whenever the matching size is small, the underlying graph can be appropriately reduced in size. Bernstein and Stein~\cite{BernsteinSteinICALP15,BernsteinSteinSODA16} introduced a powerful data structure called the {\em edge-degree constrained subgraph} (EDCS), a sparse subgraph of the original graph guaranteed to contain an (almost) $2/3$-approximate matching. The authors showed that this data structure can be maintained with $O(\min\{\Delta^{1/2}, m^{1/4}\})$ amortized update time, yielding a much faster algorithm for maintaining a $2/3$-approximate matching. 
A different approach, based on augmenting a half-approximation using short augmenting paths, was subsequently used by~\cite{BehnezhadLM-SODA20} (see also \cite{BhattacharyaHN-STOC16,Wajc-STOC20}) to show that a $(1/2+\Omega_\epsilon(1))$-approximate matching can be maintained in $\Ot(\Delta^{\epsilon})$ update time for any $\epsilon > 0$.

The results above clearly highlight a trade-off between the approximation ratio of the maintained matching and the update-time. While these results capture many interesting trade-offs, two salient features of the current state of the art are $(i)$ there are many interesting regions where the trade-off between approximation ratio and update time is not understood, and $(ii)$ very different techniques are used in obtaining results at different parts of this trade-off spectrum. 

The main contribution of this work is a new approach to designing fully dynamic algorithms for approximate maximum matching that in a {\em unified manner} not only (essentially) recovers all previously known trade-offs (up to logarithmic factors in the update-time) but reveals some new ones as well. Specifically, we prove the following theorem:

\smallskip

\begin{highlighttechnical}
\begin{theorem}\label{thm:main}
For any integer $k \ge 0$ and any $\epsilon > 0$, there is a randomized algorithm that maintains an $(\alpha(k)-\epsilon)$-approximate maximum matching of a fully dynamic graph against an oblivious adversary with worst-case update time $\min\{\Delta^{1/(k+1)}, m^{1/(2k+2)}\} \cdot \poly(k, 1/\epsilon, \log n)$ where 
\vspace{-0.2cm}
$$
\alpha(0) = 1, \,\, \alpha(1) = 2/3, \,\, \alpha(2) \ge \APXKTWOGENERAL, \,\, \alpha(3) \ge \APXKTHREEGENERAL, \,\, ..., \,\, \alpha(\log \Delta) \geq 1/2.
\vspace{-0.25cm}
$$
If the graph is bipartite, then we show $\alpha(2) \geq \APXKTWOBIPARTITE$ and $\alpha(3) \geq \APXKTHREEBIPARTITE$.
\end{theorem}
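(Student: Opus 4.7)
The plan is to introduce a $k$-level Hierarchical EDCS (HEDCS) as a nested sequence of subgraphs $G = H_0 \supseteq H_1 \supseteq \cdots \supseteq H_k$, where each $H_i$ satisfies a Bernstein--Stein-style edge-degree constraint relative to $H_{i-1}$ with threshold $\beta_i$, and to prove \Cref{thm:main} by (i) showing that such a structure always contains an $(\alpha(k)-\epsilon)$-approximate matching and (ii) maintaining it under updates within the claimed running time. Concretely, for every level $i$, every edge $(u,v) \in H_i$ should satisfy $\deg_{H_i}(u) + \deg_{H_i}(v) \leq \beta_i$, and every edge $(u,v) \in H_{i-1}\setminus H_i$ should satisfy $\deg_{H_i}(u) + \deg_{H_i}(v) \geq (1-\lambda)\beta_i$. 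I would choose the thresholds as a geometric sequence with ratio $\Delta^{1/(k+1)}$, so that $H_k$ has maximum degree $\approx \Delta^{1/(k+1)}$. Then $k=0$ degenerates to $G$ itself and gives a $(1-\epsilon)$-approximation via a Gupta--Peng-style ``recompute after $\epsilon\mu$ updates'' strategy, $k=1$ recovers the standard EDCS with the $2/3$-bound, and $k = \log \Delta$ produces an essentially constant-degree bottom layer on which a maximal matching (hence $1/2$-approximation) can be maintained cheaply.

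For the approximation analysis I would generalize the Bernstein--Stein fractional-matching argument by setting up a factor-revealing LP whose variables encode weights of matching edges at each level $H_i$ and whose constraints encode the hierarchical edge-degree slack conditions together with a fractional analog of the Tutte--Berge / Hall inequality. Solving or dually bounding this LP should yield $\alpha(k)$, strictly improving with $k$ and saturating at $1/2$. The bipartite bounds $\alpha(2) > \APXKTWOBIPARTITE$ and $\alpha(3) > \APXKTHREEBIPARTITE$ would use integrality of the bipartite matching polytope to tighten the LP; the general-graph bounds $\alpha(2) > \APXKTWOGENERAL$ and $\alpha(3) > \APXKTHREEGENERAL$ would follow from the same scheme after accounting for odd-set constraints. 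The endpoint $\alpha(\log \Delta) \geq 1/2$ can be verified separately, since by that point $H_k$ already supports a maximal matching.

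For the maintenance algorithm I would propagate updates through the hierarchy top-down: an edge inserted into $G$ may need to be added to each $H_i$, and when a vertex's degree exceeds $\beta_i$ the algorithm randomly samples edges to evict, while a violation of condition (ii) triggers insertion of a random low-edge-degree candidate. A potential-function analysis analogous to the single-level EDCS should show that the expected number of structural changes per level per update is $\Ot(1)$, so the per-update cost is dominated by the local verification work, which is $\Ot(\beta_i) = \Ot(\Delta^{(k-i+1)/(k+1)})$ amortized. Summing geometrically over levels yields $\Ot(\Delta^{1/(k+1)}\cdot \poly(k))$. The alternative $m^{1/(2k+2)}$ bound would come from a standard bucketing reduction that sparsifies the graph when $\Delta^2 \gg m$, and de-amortization to worst-case time would use background rebuilding against the oblivious adversary.

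The principal obstacles I expect are: (a) establishing the concrete improved ratios $\alpha(2)$ and $\alpha(3)$, since a naive recursion on the $2/3$-bound collapses and a genuinely multi-level LP analysis is needed, with the bipartite vs.\ general distinction requiring two separate optimizations; (b) de-amortizing the random eviction/insertion routines so the stated worst-case update time holds while the HEDCS invariants remain satisfied with high probability against an oblivious adversary; and (c) extracting the matching from $H_k$ at query time efficiently, which will likely require maintaining an auxiliary approximate matching of $H_k$ (using Gupta--Peng recomputation for small $k$ and a Baswana--Gupta--Sen-style maximal matching for large $k$) rather than recomputing it on each query.
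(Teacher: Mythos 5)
Your plan diverges from the paper's construction in a way that undermines both halves of the argument. The paper's HEDCS (\Cref{def:HEDCS}) is an \emph{increasing} chain $\emptyset = H_0 \subseteq H_1 \subseteq \cdots \subseteq H_k = H$ with a \emph{single, small} threshold $\beta$ (a constant for constant $k$), where an edge at level $i$ only has to respect the degree cap inside $H_i$, and every edge of $G \setminus H$ must have edge-degree $\geq \beta-1$ in the final $H$; the whole object has maximum degree $\leq \beta-1$. You instead propose a \emph{decreasing} chain $G = H_0 \supseteq H_1 \supseteq \cdots \supseteq H_k$ of successive EDCS-style sparsifications with geometrically decreasing thresholds $\beta_i$, i.e.\ an iterated EDCS. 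For that structure the approximation claim is exactly the part you cannot defer to ``a factor-revealing LP to be set up'': composing EDCS extractions loses a factor at every level (naively $(2/3)^k$, which is below $1/2$ already at $k=2$), and you give no mechanism by which your nested conditions avoid this loss, let alone produce the specific constants $\alpha(2)\geq \APXKTWOGENERAL$ and $\alpha(3)\geq \APXKTHREEGENERAL$. In the paper those constants are obtained by an LP over degree \emph{profiles} $(\deg_{H_1\setminus H_0}(v),\ldots,\deg_{H_k\setminus H_{k-1}}(v))$ tailored to the increasing-chain relaxation (\Cref{sec:fk-LP}), combined with a Hall-witness argument and, for general graphs, a random bipartition via LLL plus dependent rounding (\Cref{sec:prop-apx-HEDCS-proof}); none of this transfers to your structure, and odd-set constraints are not how the paper handles non-bipartiteness.

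The running-time accounting also does not close. With thresholds forming a geometric sequence ending at $\beta_k \approx \Delta^{1/(k+1)}$, the upper levels have $\beta_i \approx \Delta^{(k-i+1)/(k+1)}$, so your own estimate of $\Ot(\beta_i)$ work per level per update is dominated by the top level at $\Ot(\Delta^{k/(k+1)})$ --- the geometric sum goes the wrong way and exceeds the target $\Ot(\Delta^{1/(k+1)})$ unless you supply an additional argument that high levels are touched only very rarely, which you do not. The paper gets the $\Delta^{1/(k+1)}$ bound by a completely different mechanism: edges are independently subsampled into $G_1 \subseteq \cdots \subseteq G_{k+1}=G$ at rates $p_i \approx \epsilon\Delta^{i/(k+1)-1}$, level $H_i$ is built greedily only from the sampled edges not already covered by $H_{i-1}$, a sparsification lemma (\Cref{lem:sparsification}) shows only $\Ot(\mu_i\beta^2/p_i)$ edges remain uncovered, and everything from level $i$ up is lazily rebuilt once every $\approx \frac{\epsilon}{k}\mu_i/p_i$ updates --- the point being that an oblivious adversary's deletions land in the sampled $H_i$ with probability only $p_i$, so the rebuild period is long enough to amortize (and is then de-amortized by spreading the rebuild over updates). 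Your top-down eviction scheme has no analogue of this sampling-based protection or of the sparsification guarantee, so neither the claimed amortized bound nor its worst-case version follows from what you wrote.
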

\end{highlighttechnical}

Therefore, our algorithm takes as input an integer parameter $k \geq 0$ such that as $k$ goes from $0$ to $\log \Delta$, the update time improves from $\Ot(\min\{\Delta, m^{1/2}\})$ to $\widetilde{O}(1)$. The approximation ratio, on the other hand, goes from almost $1$ to almost $1/2$ as $k$ goes from $0$ to $\log \Delta$.  

Additionally, we prove $\alpha(1/\delta) \geq \frac{1}{2} + \frac{1}{2^{2^{O(1/\delta)}}}$ for any $\delta > 0$. Thus our algorithm can also beat half-approximation with any arbitrarily small polynomial update-time.

\Cref{table:results} summarizes these trade-offs and compares them with prior works.

{
\renewcommand{\arraystretch}{1.4}
\newcommand{\tablecaption}[0]{Approximation/update-time trade-offs of our algorithm for different values of parameter $k$. The algorithm is randomized and the bound on the update-time is worst-case.
By an approximation factor of $\sim \alpha$ we mean the algorithm can get $(1-\epsilon)\alpha$-approximation for any fixed $\epsilon > 0$.}

\begin{table}
\centering
\resizebox{\linewidth}{!}{%
\begin{tabular}{|l|l|l|l|}
\hline
\rowcolor[HTML]{EFEFEF} 
Approximation            & Update-Time                                      & $k$                    & Note                                                                                                      \\ \hline
$\sim 1$                 & $\Ot(\min\{\Delta, m^{1/2}\})$                   & $k=0$                  & This matches \cite{GuptaPeng-FOCS13}.                                                    \\ \hline
$\sim 2/3$               & $\Ot(\min\{\Delta^{1/2}, m^{1/4}\})$             & $k=1$                  & This matches \cite{BernsteinSteinICALP15,BernsteinSteinSODA16}.                         \\ \hline
$\APXKTWOGENERAL$ ($\APXKTWOBIPARTITE$ bipartite)                & $\Ot(\min\{\Delta^{1/3}, m^{1/6}\})$             & $k=2$                  & \textbf{This is a new trade-off.}                                                                         \\ \hline
$\APXKTHREEGENERAL$ ($\APXKTHREEBIPARTITE$ bipartite)                   & $\Ot(\min\{\Delta^{1/4}, m^{1/8}\})$             & $k=3$                  & \textbf{This is a new trade-off.}                                                                         \\ \hline
\multicolumn{4}{|c|}{$\vdots$}                                                                                                                                                                                   \\ \hline
$1/2+\Omega_\delta(1)$ & $\Ot_{\delta}(\min\{\Delta^{\delta}, m^{\delta/2}\})$ & $k=\sfrac{1}{\delta}-1$ & This matches \cite{BehnezhadLM-SODA20}.               \\ \hline
$\sim 1/2$               & $\Ot(1)$                                         & $k = \Theta(\log \Delta)$   & This matches \cite{BaswanaGS11, BaswanaGS18,Solomon-FOCS16}. \\ \hline
\end{tabular}
}
\caption{\tablecaption}
\label{table:results}
\end{table}
}

\subsection{Overview of Techniques}\label{sec:techniques}
\vspace{-0.2cm}

As our main tool, we introduce a generalization of the {\em edge-degree constrained subgraph} (EDCS) of Bernstein and Stein \cite{BernsteinStein15ArXiv, BernsteinSteinSODA16} that we call the {\em hierarchical EDCS} (HEDCS). Before formalizing our generalization HEDCS of EDCS, let us recall the notion of EDCS.

For any edge $e=(u, v)$ in a graph $G$ we use the notation $\deg_G(e) := \deg_G(u) + \deg_G(v)$ to denote the degree of $e$ in $H$, which is simply sum of the degrees of its endpoints. For an integer $\beta > 1$, a subgraph $H$  of a graph $G$ is called a {\em $\beta$-EDCS} of $G$ if:
\begin{enumerate}[label=$(\roman*)$]
	\item For any $e \in H$, $\deg_H(e) \leq \beta$.
	\item For any $e \in G \setminus H$, $\deg_H(e) \geq \beta - 1$.
\end{enumerate}

Interestingly, these two local constraints suffice to guarantee that subgraph $H$ of $G$ includes a $(2/3-O(\epsilon))$-approximate maximum matching of $G$ if $\beta \geq 1/\epsilon$ (see \cite{BehnezhadEDCS21,AssadiBernsteinSOSA19}).

The HEDCS is a hierarchical generalization of EDCS, where in addition to $\beta$, we have a parameter $k$ which is the number of levels in the hierarchy. Formally, the HEDCS is defined as:

\begin{highlighttechnical}
	\begin{definition}[\textbf{Hierarchical Edge-Degree Constrained Subgraphs (HEDCS)}]\label{def:HEDCS}~ 
		
	Let $\beta \geq 2$ and $k \geq 1$ be integers. We say that a graph $H$ is a \HEDCS{\beta}{k} of $G$ iff there is a hierarchical decomposition $\emptyset = H_0 \subseteq H_1 \subseteq H_2 \subseteq \ldots \subseteq H_k = H$ satisfying the following:
\begin{enumerate}[label=$(\roman*)$]
	\item For every $1 \leq i \leq k$ and any edge $e \in H_i \setminus H_{i-1}$, $\deg_{H_i}(e) \leq \beta$.
	\item For any edge $e \in G \setminus H$, $\deg_{H}(e) \geq \beta - 1$.
\end{enumerate}
\end{definition}
\end{highlighttechnical}

Note that a \HEDCS{\beta}{k} for $k=1$ is equivalent to a $\beta$-EDCS. However, as $k$ becomes larger than $1$, the edge-degree constraints in a $\HEDCS{\beta}{k}$ $H$ of $G$ become more relaxed, and for some edges $e$, we may now have $\deg_H(e) > \beta$ (this can, e.g., happen for any edge in $H_{k-1}$). This relaxation of edge-degree constraints gradually weakens the $2/3$-approximation guarantee of an EDCS as $k$ becomes larger but in return, we show that it becomes easier to maintain a $\HEDCS{\beta}{k}$ in dynamic graphs as we increase the number of levels $k$.

\vspace{-0.2cm}
\paragraph{Maintaining an HEDCS:} We give an algorithm that for any $\beta$ and $k$ can (lazily) maintain a \HEDCS{\beta}{k} in a fully dynamic graph with update-time $\widetilde{O}(\min\{\Delta^{1/(k+1)}, m^{1/(2k+2)}\})$. Here, and for the sake of intuition, we will only overview the key insights behind maintaining a \HEDCS{\beta}{k} in  amortized update-time $\widetilde{O}(\Delta^{1/(k+1)})$.

Our definition of \HEDCS{\beta}{k} allows for a $k$-step greedy way of constructing it: construct $H_1$, then construct $H_2$ by adding some edges to $H_1$, then construct $H_3$ by adding edges to $H_2$, and so on so forth. The crucial observation here is that each edge $e \in H_i$ is constrained by property $(i)$ of HEDCS to have edge-degree $\leq \beta$ only in subgraph $H_i$, regardless of which edges belong to $H_{i+1}, \ldots, H_k$. Hence, it is safe to increase the edge-degree of $e \in H_i$ in the higher levels to beyond $\beta$. As a result of this greedy construction, right after constructing $H_i$, any edge $e$ with $\deg_{H_i}(e) \geq \beta - 1$ will for sure satisfy the constraint $\deg_H(e) \geq \beta-1$ of property $(ii)$ of HEDCS. Thus, this edge $e$ can be safely ignored in constructing the higher levels.

To make use of the greedy construction above, we first random sample the edges of $G$ into subgraphs $G_1 \subseteq G_2 \subseteq \ldots \subseteq G_{k+1} = G$, where each $G_i$ includes each edge of $G$ with probability $p_i \approx \Delta^{\frac{i}{k+1} - 1}$. We construct $H_1$ only using the edges of $G_1$, then construct $H_2$ by adding some of the edges of $G_2 \setminus G_1$ to $H_1$, then construct $H_3$ by adding some of the edges of $G_3 \setminus G_2$ to $H_2$, etc. However, instead of considering all the edges in $G_i \setminus G_{i-1}$ in constructing $H_i$, we ignore those edges in $G_i \setminus G_{i-1}$ that are already covered by $H_{i-1}$. To make sure that this helps with pruning the set of edges that we consider in each level, we specifically construct each $H_i$ in a way that guarantees a {\em sparsification property}. That is, the set of edges left uncovered by $H_i$ in the remaining graph, will be in the order $\widetilde{O}(\mu_i / p_i)$ where $\mu_i$ is the size of the largest matching in $G_i$.

One main challenge in maintaining this HEDCS $H$ in a fully dynamic graph is that the edges that are removed from each $H_i$ may result in uncovered edges in the remaining graph, invaliding property $(ii)$ of HEDCS. The crucial observation is that each edge removal of the adversary belongs to $H_i$ (and thus $G_i$) with probability at most $p_i$. Hence, the adversary needs to issue $\approx \epsilon \mu_i/p_i$ updates to remove $\epsilon \mu_i$ edges of $H_i$. We can thus take a lazy approach in maintaining our solution. For every $i \in [k]$, we can roughly speaking ``pretend'' for $\epsilon \mu_i /p_i$ updates that no edge of $H_i$ is removed (i.e., we assume those removed still exist in the graph) and thus all edges covered by $H_i$ remain covered. After $\epsilon \mu_i/p_i$ updates, we recompute all of $H_i, \ldots, H_k$ from scratch, and amortize the cost over these updates. Since $H_{i-1}$, as discussed, only leaves $\widetilde{O}(\mu_{i-1} / p_{i-1})$ edges uncovered, we are able to construct all of $H_i, \ldots, H_k$ in time $\widetilde{O}(\mu_{i-1}/p_{i-1})$. Amortizing this cost over $\epsilon \mu_i/p_i$ updates leads to a bound of $\approx \frac{\mu_{i-1}/p_{i-1}}{\epsilon \mu_i/p_i} \lesssim \frac{p_i}{p_{i-1}} = \Delta^{1/(k+1)}$ update-time for each level $i \in [k]$. 

It is worth noting that our algorithm for maintaining a \HEDCS{\beta}{k} is very different from the algorithms of Bernstein and Stein \cite{BernsteinSteinSODA16} for maintaining an EDCS. In particular, \cite{BernsteinSteinSODA16} maintain a $\beta$-EDCS for $\beta \approx \sqrt{\Delta}$ deterministically and their update-time bound is amortized. In our construction, however, $\beta$ is much smaller and a constant value often suffices (for $k=1$, particularly, where we recover the (almost) $2/3$-approximation of \cite{BernsteinSteinSODA16}, $\beta$ is a constant). Additionally, we use randomization in a crucial way but achieve a worst-case update-time in return.

\vspace{-0.2cm}
\paragraph{Approximation ratio of HEDCS:} To understand the approximation ratio of $\HEDCS{\beta}{k}$, we study a function $\alpha(k)$ which essentially tracks how the native $2/3$-approximation guarantee of EDCS gradually weakens with increasing $k$, when $\beta$ is sufficiently large. The precise analysis of the function $\alpha(k)$ becomes challenging even for small values of $k$. However, for any $\beta$, the computation of $\alpha(k)$ can be expressed as a linear program (LP) (formalized in \Cref{sec:fk-LP}). As a result, for $k=2, 3$, we compute the value of $\alpha(k)$ by solving this LP for a sufficiently large value of $\beta$ which we then feed into our dynamic algorithm.

For larger values of $k$ and $\beta$ sufficiently large with respect to $k$, we analytically compute the value $\alpha(k)$ and show that it is at least $1/2 + \Omega(1/2^{2^{2k}})$ for any $k \ge 1$. This, in particular, means that for any fixed $\epsilon > 0$, the update time can be driven down to $\Ot_\epsilon(\min\{\Delta^{\epsilon}, m^{\epsilon/2}\})$ while still obtaining an approximation ratio that is strictly better than $1/2$, namely, $1/2+\Omega_\epsilon(1)$.

\subsection{Organization}

We start by presenting some notation and relevant results from previous works in \Cref{sec:prelims}.  We then present the hierarchical EDCS (HEDCS) data structure and its properties in \Cref{sec:HEDCS}. We also present here an LP-based approach for analyzing the approximation ratio achieved by the HEDCS data structure, and show the performance implied by it when the number of hierarchy levels $k$ is small. We defer the analysis of approximation achieved by HEDCS data structure when $k$ is allowed to asymptotically grow to \Cref{sec:analysis-of-fkb}.
Finally, in \Cref{sec:MainAlg} we present our randomized algorithm for maintaining the HEDCS data structure, and analyze its performance.

\section{Preliminaries}
\label{sec:prelims}

\paragraph{Notation:}
We denote the input graph by $G=(V, E)$. The vertex-set $V$ includes $n$ vertices that are fixed. However, the edge-set $E$ is  dynamic. Particularly, edges can be both inserted and deleted from $E$. We use $\Delta$ as a fixed upper bound on the graph's maximum degree at all times.

All graphs that we define in this work are on the same vertex-set $V$ as the input graph $G$. As such, when it is clear from the context, we may treat these (sub)graphs as essentially sets of edges. Particularly, for a subgraph $H$ we may use $|H|$ to denote the number of edges in $H$, or may use $H \setminus H'$ for two graphs $H$ and $H'$ on vertex set $V$ to denote a subgraph on the same vertex-set, including edges of $H$ that do not belong to $H'$.

For any graph $H$, we use $\mu(H)$ to denote the size of a maximum matching in $H$ and use $\maximalmatchingsize{H}$ to denote the size of a maximal matching of $H$. (We particularly use $\maximalmatchingsize{H}$ when we want the value to be computable in linear time.) For any edge $e = (u, v)$ we define $\deg_H(e) := \deg_H(u) + \deg_H(v)$ to be the {\em edge-degree} of $e$ in graph $H$. We note that so long as the endpoints of $e$ belong to the vertex-set of $H$, $\deg_H(e)$ is well-defined and $e$ does not need to belong to the edge-set of $H$. For any edge $e=(u, v)$, we say $e$ is {\em $(H, \beta)$-underfull} if $\deg_H(e) < \beta - 1$ and {\em $(H, \beta)$-overfull} if $\deg_H(e) > \beta$.

Throughout the paper, the $\Ot(f)$ notation hides $\poly(\log n)$ factors, i.e., $\Ot(f) = f \cdot \poly(\log n)$.

\paragraph{Oblivious Adversary and Worst-Case Update-Time:}
Our dynamic algorithms are based on the standard {\em oblivious adversary assumption}. The sequence of updates in this model are provided by a computationally unbounded adversary that knows the algorithm to be used. However, the updates should not depend on the coin flips of the dynamic algorithm. Equivalently, one may assume that the sequence of updates are  fixed before the dynamic algorithm starts to operate.

As standard, we say a fully dynamic algorithm has ``worst-case update-time'' $T$ if every update is w.h.p. addressed in $T$ time by the algorithm.

\paragraph{Tools:} We will use the following algorithms from prior work.

\begin{proposition}[\cite{MicaliV-FOCS80,Vazirani-Arxiv12}]\label{prop:apxmatching}
	Given any $m$-edge graph $G=(V, E)$ and any parameter $\epsilon > 0$, there is a static algorithm to find a $(1-\epsilon)$-approximate maximum matching of $G$ in $O(m/\epsilon)$ time.
\end{proposition}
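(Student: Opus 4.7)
The plan is to combine a classical Hopcroft--Karp style phasing argument with the per-phase subroutine of Micali--Vazirani. At a high level, I would maintain a matching $M$, initially empty, and repeatedly augment $M$ along a maximal set of vertex-disjoint shortest augmenting paths until no augmenting path of length at most $\ell := 2\lceil 1/\epsilon\rceil - 1$ remains. Each such iteration is a ``phase,'' and I would then show that (a) the algorithm terminates after at most $O(1/\epsilon)$ phases, (b) the final matching is $(1-\epsilon)$-approximate, and (c) each phase runs in $O(m)$ time, for a total of $O(m/\epsilon)$.

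For correctness, the key lemma is the Hopcroft--Karp bound: if $M^*$ is any maximum matching and every $M$-augmenting path in $G$ has length at least $2k+1$, then $|M| \geq \tfrac{k}{k+1}|M^*|$. The proof is the standard symmetric-difference argument: the components of $M\triangle M^*$ decompose into vertex-disjoint alternating paths and cycles, each augmenting path contributes at least $k$ edges to $M$, and counting gives $|M^*|-|M| \leq |M|/k$. Applied with $k = \lceil 1/\epsilon\rceil$, this yields $|M| \geq (1-\epsilon)|M^*|$ as soon as no augmenting path of length $\leq \ell$ exists. The phase-count bound follows from the classical monotonicity fact that after augmenting along a maximal set of shortest augmenting paths of length $s$, the shortest augmenting path has length strictly greater than $s$; hence the shortest augmenting path length is monotonically increasing by at least $2$ per phase, and after $O(1/\epsilon)$ phases it must exceed $\ell$.

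The main obstacle, and the reason the result is nontrivial for general graphs, is implementing a single phase in $O(m)$ time. For bipartite graphs this is easy: a layered BFS from the unmatched vertices on one side, followed by a DFS that peels off vertex-disjoint augmenting paths through the BFS layers, takes $O(m)$ time in total, and this is essentially one iteration of Hopcroft--Karp. For general graphs, odd cycles create \emph{blossoms} that must be contracted and expanded carefully, and a naive implementation loses the $O(m)$ per-phase bound. The Micali--Vazirani algorithm overcomes this by a sophisticated double depth-first search (``DDFS'') procedure that identifies blossoms on the fly, maintains ``bases'' of nested blossoms using a union-find style data structure with appropriate amortization, and, crucially, finds a maximal vertex-disjoint family of minimum-length augmenting paths in $O(m\alpha(m,n))$ (essentially $O(m)$) time per phase. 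Rather than reprove Micali--Vazirani, I would cite their phase subroutine as a black box and only verify that its guarantee (a \emph{maximal} vertex-disjoint set of minimum-length augmenting paths in one scan) is exactly what the Hopcroft--Karp style analysis above requires.

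Assembling these pieces, the total running time is $O(1/\epsilon)$ phases, each of cost $O(m)$, giving $O(m/\epsilon)$ overall. One final remark I would add is that the algorithm needs no knowledge of $|M^*|$: it simply runs phases until the search subroutine reports no augmenting path of length $\leq \ell$, at which point the Hopcroft--Karp lemma certifies the $(1-\epsilon)$-approximation guarantee.
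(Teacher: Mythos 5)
Your proposal is correct and is exactly the standard derivation that the paper implicitly relies on by citing Micali--Vazirani: the paper offers no proof of this proposition, treating it as a black-box consequence of running the Micali--Vazirani phase subroutine for $O(1/\epsilon)$ Hopcroft--Karp-style phases. Your phase-count bound, the symmetric-difference approximation lemma, and the $O(m)$ (up to an inverse-Ackermann factor) per-phase cost together give precisely the stated $O(m/\epsilon)$ bound, so there is nothing to add.
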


In our algorithm, we will need a subroutine that maintains a $c$-approximation to the {\em size} of maximum matching in $\poly(\log n)$ time, where $c$ can be any arbitrarily large constant. Since there are already highly efficient $2$-approximate algorithms, we will use them for this purpose but emphasize that we can instead use any other $O(1)$-approximate algorithm.

\begin{proposition}[See \cite{BernsteinFH-SODA19} or \cite{BehnezhadDHSS-FOCS19}]\label{prop:dynamic-maximal-matching}
	There is a randomized algorithm that maintains a maximal matching of an $n$-vertex fully dynamic graph against an oblivious adversary in $\poly(\log n)$ worst-case update-time.
\end{proposition}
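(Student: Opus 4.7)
My plan is to follow the level-based randomized framework of Baswana, Gupta, and Sen, combined with worst-case scheduling techniques of the sort developed in \cite{BernsteinFH-SODA19,BehnezhadDHSS-FOCS19}, to convert amortized bounds into worst-case bounds. The core data structure is a hierarchical partition of the vertices into $L = O(\log n)$ levels, with each edge oriented from the higher-level endpoint to the lower-level endpoint (breaking ties by vertex ID). Each vertex $v$ maintains a matched partner selected uniformly at random from its out-neighborhood, i.e.\ from among the $d_v$ edges incident to $v$ that point downward from $v$'s current level. Maximality is enforced by requiring that every free vertex has no free out-neighbor; once a vertex becomes free it immediately attempts to re-match with a uniform random out-neighbor.

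The key probabilistic observation against an oblivious adversary is that the currently matched edge of $v$ is a uniform random sample of its $d_v$ out-edges, so any fixed deletion hits it with probability at most $1/d_v$. Hence the expected number of adversarial updates needed to force $v$ to rematch is $\Omega(d_v)$, while the rematch itself costs $O(d_v)$ (sampling a new out-neighbor and checking its status). Amortizing yields $O(\log n)$ expected cost per update, since each vertex can only change level $O(\log n)$ times within a single charging window. Standard concentration arguments (Azuma or Chernoff applied over the randomness of the sampled partners) upgrade this to a high-probability bound.

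To turn the amortized bound into a worst-case one, I would never perform a rematch atomically. When a rematch at level $\ell$ for $v$ is triggered, its work is distributed into $\poly(\log n)$-sized chunks spread across the next $\Theta(d_v^{\ell})$ updates, during which the algorithm serves queries from a frozen snapshot of the level-$\ell$ neighborhood augmented by a small list of pending modifications; when the new partner is ready, it is installed atomically in $\poly(\log n)$ time. A potential-function argument, standard in this literature, shows that the number of concurrently active rebuild tasks per level is $\poly(\log n)$ with high probability, so the total worst-case work per update is $\poly(\log n)$.

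The main obstacle, and the part that requires the most care, is preserving the invariant that the maintained matching is maximal while rebuilds are in flight, since an edge that becomes free during a frozen window must still be matched if its other endpoint is simultaneously free. This is handled by the pending-modifications list: any event that would violate maximality triggers an immediate low-cost local fix (forcing a fresh random probe from $v$'s current out-neighborhood) while the large rebuild proceeds in the background. Combining this with the oblivious-adversary assumption, so that the adversary's update sequence is independent of the algorithm's randomness, yields the desired $\poly(\log n)$ worst-case update time with high probability.
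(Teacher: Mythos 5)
First, note that the paper does not prove \Cref{prop:dynamic-maximal-matching} at all: it is imported as a black box from \cite{BernsteinFH-SODA19} (or \cite{BehnezhadDHSS-FOCS19}), so the question is whether your sketch would stand as a self-contained proof of that cited result. It would not, because it assumes away the two steps that are precisely the technical content of those papers. The first gap is the upgrade from expected amortized to worst-case w.h.p.\ time. The Baswana--Gupta--Sen analysis charges work to epochs whose lengths are random, adversary-interleaved, and mutually dependent; saying that ``standard concentration arguments (Azuma or Chernoff applied over the randomness of the sampled partners)'' yield a high-probability bound is not a proof, and your auxiliary claim that ``each vertex can only change level $O(\log n)$ times within a single charging window'' is not an invariant of the BGS hierarchy (vertices can move between levels many times; the charging is per epoch, not per level change). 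Turning the expectation-based amortized guarantee into a $\poly(\log n)$ worst-case w.h.p.\ guarantee is exactly the contribution of \cite{BernsteinFH-SODA19}, and it requires a substantially more careful epoch/scheduling analysis than what you describe.

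The second, more serious gap is maximality while rebuilds are in flight. You spread a rematch of $v$ over $\Theta(d_v)$ updates, and then restore maximality during the frozen window via ``an immediate low-cost local fix (forcing a fresh random probe from $v$'s current out-neighborhood).'' That probe is not low cost: finding and verifying a free out-neighbor takes time proportional to the out-degree, which can be polynomial in $n$, so either the fix itself blows the $\poly(\log n)$ worst-case budget or you must serve it from the frozen snapshot plus a pending-modifications list --- in which case you need to prove that the pending list stays polylogarithmic and that the surrogate structure still certifies maximality with respect to the \emph{true} current graph. That invariant-preservation argument is the crux of the deamortization in \cite{BernsteinFH-SODA19,BehnezhadDHSS-FOCS19}, and as written your scheme either exceeds the claimed update time or can momentarily output a non-maximal matching. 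To make this rigorous you would essentially have to reproduce the machinery of those papers, which is why the present paper cites the result rather than proving it.
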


We also use the following algorithm to argue that if the maximum matching of $G$ becomes small at any point during the updates, then there is already an algorithm that can efficiently maintain a $(1-\epsilon)$-approximation during those updates. We use this algorithm to assume that at all times $\mu(G)$ is larger than, say $\log n$, which is useful for our probabilistic events. See \Cref{rem:mularge}.

\begin{proposition}[\cite{GuptaPeng-FOCS13}]\label{prop:alg-small-matching}
	Let $\mu' = \Omega(1)$ and $0 < \epsilon < 1$ be any parameters. There is a deterministic algorithm that maintains a matching $M$ of a fully dynamic graph $G$ with worst case update-time $O(\mu'/\epsilon^2 + \log^3 n)$ satisfying the following: at any time during the updates where $\mu(G) \leq \mu'$ it also holds that $|M| \geq (1-\epsilon)\mu(G)$.
\end{proposition}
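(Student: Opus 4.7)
My plan is to combine two standard dynamic-graph techniques---periodic full recomputation of the output matching and lazy incremental updates in between---with a kernelization step that exploits the assumption $\mu(G)\le\mu'$. Choose a recomputation period $T=\lfloor\epsilon\mu'/4\rfloor$. Every $T$ updates, the algorithm recomputes $M$ as a $(1-\epsilon/2)$-approximate maximum matching of the current graph; between recomputations, it drops any edge of $M$ that is deleted from $G$, and ignores all other updates. Since $|M|$ and $\mu(G)$ each change by at most one per update, starting from $|M_{t_0}|\ge(1-\epsilon/2)\mu(G_{t_0})$ a short direct calculation yields $|M_t|\ge(1-\epsilon)\mu(G_t)$ at every time $t\in[t_0,t_0+T]$ for which $\mu(G_t)\le\mu'$. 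The standard ``shadow-matching'' de-amortization---compute the next matching in a background thread, spreading its work evenly over the following $T$ updates and atomically swapping when ready---converts the amortized per-update cost into a worst-case one.

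The crux is implementing each recomputation in total time $O((\mu')^2/\epsilon+\operatorname{poly}(\log n))$, which amortized over $T=\Theta(\epsilon\mu')$ updates gives the target $O(\mu'/\epsilon^2)$. Alongside $M$, the algorithm maintains a maximal matching $M_{\max}$ of $G$ together with balanced-BST adjacency structures. While $|M_{\max}|\le 2\mu'$, $M_{\max}$ is updated deterministically in $O(\mu'+\log^2 n)$ per operation by scanning the $O(\mu')$ currently matched endpoints for a replacement edge whenever a matching edge is deleted. Whenever $\mu(G_t)\le\mu'$ we have $|M_{\max}|\le\mu'$, so $C:=V(M_{\max})$ is a vertex cover of size at most $2\mu'$, and the graph can be kernelized to an auxiliary $G'$ on $O(\mu')$ vertices with $O((\mu')^2)$ edges whose maximum matching equals $\mu(G)$ and whose matchings lift back to $G$ in linear time. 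Applying \Cref{prop:apxmatching} to $G'$ then produces a $(1-\epsilon/2)$-approximate matching in $O((\mu')^2/\epsilon)$ time.

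I expect the main obstacle to be the correctness of the kernel $G'$. A naive reduction that attaches a single private ``twin'' vertex $f_v$ to every $v\in C$ having an outside neighbor overcounts $\mu(G)$ whenever distinct $C$-vertices share outside neighbors, and so does not preserve the maximum matching. The right construction proceeds via a crown decomposition (or, equivalently, the Nemhauser--Trotter half-integral LP) on the bipartite graph between $C$ and $V\setminus C$: iteratively fold crowns $(I,H)$ with $I\subseteq V\setminus C$ independent, $|I|>|H|$, and $H\subseteq C$ admitting a matching into $I$, until the non-$C$ side has size $O(|C|)=O(\mu')$. This preserves $\mu$ exactly, yields an $O((\mu')^2)$-edge kernel, and is built in $O((\mu')^2)$ time from the $C$-adjacency information. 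Combined with the $O(\log^3 n)$ overhead from the BST operations and de-amortization bookkeeping, this gives the claimed worst-case per-update cost $O(\mu'/\epsilon^2+\log^3 n)$.
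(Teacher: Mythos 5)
The first step that breaks is your deterministic maintenance of $M_{\max}$. When a matched edge $(u,v)$ is deleted, restoring maximality requires finding a \emph{free} neighbor of $u$ (and of $v$); those candidates are precisely not among the $O(\mu')$ matched vertices, and $u$ may have $\Theta(n)$ neighbors, so ``scanning the currently matched endpoints'' neither restores maximality nor certifies that $V(M_{\max})$ covers every edge --- an edge between two free vertices can survive, and then your kernel misses part of the graph and the approximation claim collapses. No deterministic worst-case $O(\mu'+\poly\log n)$ maximal-matching routine is known (the polylog ones, e.g.\ \Cref{prop:dynamic-maximal-matching}, are randomized, and the proposition must be deterministic). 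The paper sidesteps this by invoking the deterministic $O(\log^3 n)$ worst-case approximate vertex cover of \cite{BhattacharyaHN-SODA17}; that is exactly where the $\log^3 n$ term comes from. Relatedly, your crown/Nemhauser--Trotter kernel cannot be built in $O((\mu')^2)$ time as claimed: the bipartite graph between $C$ and $V\setminus C$ may have $\Theta(n\mu')$ edges, and finding crowns requires looking at them. The reduction the paper uses (Gupta--Peng's core subgraph: all edges with both endpoints in $C$, plus $|C|+1$ \emph{arbitrary} edges from each cover vertex to the outside) is correct --- it is not the single-twin-vertex strawman you rule out --- preserves a maximum matching, and is read off the adjacency lists in $O(|C|^2)$ time, so the crown machinery is both unnecessary and too expensive here.

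Second, the fixed rebuild period $T=\lfloor\epsilon\mu'/4\rfloor$ does not give the stated guarantee, because $\mu(G)$ may be far below $\mu'$. Your drift bound gives $|M_t|\ge(1-\epsilon/2)\mu(G_t)-2T$, which implies $|M_t|\ge(1-\epsilon)\mu(G_t)$ only when $\mu(G_t)\ge 4T/\epsilon=\mu'$, i.e.\ exactly at the boundary of the regime where the guarantee is required. Concretely, with $\mu'=10^3$, $\epsilon=0.1$ and $\mu(G_{t_0})=10$, the adversary deletes the at most ten edges of $M_{t_0}$ well within one period, leaving $|M_t|=0$ while $\mu(G_t)$ is still positive, so the $(1-\epsilon)$ guarantee fails. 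The rebuild schedule must adapt to the current matching/cover size: rebuild after $\Theta(\epsilon\,|M|)$ updates, with rebuild cost $O(|C|^2/\epsilon)=O(\mu(G)^2/\epsilon)$, which amortizes to $O(\mu(G)/\epsilon^2)\le O(\mu'/\epsilon^2)$ whenever the guarantee is non-vacuous, combined with the paper's threshold rule that if $|C|>12\mu'$ then $\mu(G)>2\mu'$ and the output is unconstrained for the whole window. Your background-thread de-amortization is the standard trick and is fine once the schedule is repaired.
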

\begin{myproof}[Proof sketch]
	The idea is due to \cite{GuptaPeng-FOCS13}. Consider a graph $G$ and suppose that $C$ is a vertex cover of size $O(\mu(G))$ of $G$. Consider a {\em core subgraph} of $G$ that includes all the edges of $G$ with both endpoints in $C$ and also includes $|C|+1$ arbitrary edges of each vertex in $C$. It is not hard to see that a core graph includes a maximum matching of $G$ --- see \cite[Section~3]{GuptaPeng-FOCS13} for the proof.
	
	Suppose that we maintain a $3$-approximate vertex cover $C$ at all times. This can be done deterministically in $O(\log^3 n)$ worst-case update time using the algorithm of \cite{BhattacharyaHN-SODA17}. If at any point during the updates $|C| > 12\mu'$ then we know $\mu(G) \geq |C|/6 > 2\mu'$ so even returning the empty matching satisfies the proposition. Otherwise, we can construct the core graph in $O(|C^2|) = O(\mu'^2)$ time, find a maximum matching of it in $O(\mu'^2/\epsilon)$ time using \Cref{prop:apxmatching} and amortize the cost over the next $\epsilon \mu'$ updates where the maximum matching size can only change by a small amount. As a result, we get a $(1-\epsilon)$-approximation with $O(\mu'/\epsilon^2 + \log^3n )$ amortized update-time. The update time can also be made worst-case by standard techniques. See \cite{GuptaPeng-FOCS13} for more details.
\end{myproof}

The following algorithm also follows from \cite{GuptaPeng-FOCS13} which is helpful when $\Delta$ is small.

\begin{proposition}[\cite{GuptaPeng-FOCS13}]\label{prop:lowdegree}
	Let $\Delta$ be an upper bound on the maximum degree of a fully dynamic graph $G$ at all times. For any $\epsilon > 0$, one can maintain a $(1-\epsilon)$-approximate maximum matching of $G$ in worst-case update-time $O(\Delta/\epsilon^2)$.
\end{proposition}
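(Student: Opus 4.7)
My plan is to combine the static $(1-\epsilon)$-approximation algorithm of \Cref{prop:apxmatching} with periodic rebuilding, using the bounded maximum degree to control the cost of each rebuild. The key structural observation is that if the maximum degree is at most $\Delta$, then for any maximal matching $M'$ of $G$ we have $|E(G)| \le 2\Delta |M'| \le 2\Delta \mu(G)$, since every edge of $G$ must share an endpoint with some edge of $M'$ and each of the $2|M'|$ endpoints of $M'$ has at most $\Delta$ incident edges. So the edge count $m$ is always tied to $\mu(G)$ through $\Delta$, and a full static recomputation costs $O(m/\epsilon) = O(\Delta \mu(G)/\epsilon)$ time.

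The algorithm maintains a matching $M$ that was computed by \Cref{prop:apxmatching} as a $(1-\epsilon/2)$-approximation of $G$ at some earlier ``checkpoint'' moment, when the maximum matching size was $\mu_0$. Over $t$ subsequent edge updates, $\mu(G)$ can change by at most $t$ and $|M|$ can lose at most $t$ edges (from deletions of its own edges), so the current approximation ratio is at least $((1-\epsilon/2)\mu_0 - t)/(\mu_0 + t)$. A short calculation shows this stays at least $1-\epsilon$ as long as $t \le c\,\epsilon\, \mu_0$ for a small absolute constant $c$. Thus we may safely reuse $M$ (with deleted edges removed) for a window of $\Theta(\epsilon \mu_0)$ updates before a new rebuild is needed.

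The amortized cost of a rebuild is therefore $O(\Delta \mu_0/\epsilon)$ work spread across $\Theta(\epsilon \mu_0)$ updates, i.e.\ $O(\Delta/\epsilon^2)$ per update; when $\mu_0$ is too small for this to even reach $1$ (say $\mu_0 \le 1/\epsilon$), we simply rebuild on every update at cost $O(m/\epsilon) \le O(\Delta \mu_0/\epsilon) = O(\Delta/\epsilon^2)$, so the bound still holds. To convert this amortized bound into a worst-case bound, I would use the standard de-amortization trick: at each checkpoint we launch a background recomputation of a fresh $(1-\epsilon/2)$-approximate matching on the current graph, performing $O(\Delta/\epsilon^2)$ units of this work per update while continuing to answer queries from the old matching; after $\Theta(\epsilon \mu_0)$ updates the new matching is ready and we atomically swap it in, also replaying on it the (few) edge deletions that occurred during the build.

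The main subtlety is handling the build window correctly when $\mu(G)$ changes substantially during the build itself; because the window length is chosen proportional to $\mu_0$ and the matching size can shift by at most the number of updates in the window, the approximation guarantee and the bound on leftover deletions to replay are both preserved up to constants, and $\mu_0$ itself can be tracked to within a factor via the $2$-approximation of \Cref{prop:dynamic-maximal-matching} in $\poly(\log n)$ time so that the scheduling of checkpoints is always valid. The rest is bookkeeping.
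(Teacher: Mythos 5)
Your argument is correct and is essentially the standard Gupta--Peng lazy-rebuild argument that the paper simply cites for this proposition (the same $m \le 2\Delta\mu(G)$ bound, recomputation every $\Theta(\epsilon\mu_0)$ updates via \Cref{prop:apxmatching}, and the standard spreading-of-work de-amortization the paper itself uses elsewhere). No gaps worth flagging; note only that you do not even need \Cref{prop:dynamic-maximal-matching} to track $\mu_0$, since the size of the matching computed at each checkpoint already gives the needed constant-factor estimate.
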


\section{Hierarchical Edge-Degree Constrained Subgraphs (HEDCS)}\label{sec:HEDCS}
\label{sec:HEDCS}

In this section, we focus on HEDCS, give a few useful definitions for it, and prove some of its key properties. The dynamic algorithm for maintaining an HEDCS is then presented in \Cref{sec:dynamicalg}.

\subsection{Basic Properties of HEDCS}

One useful property of any \HEDCS{\beta}{k} is that its maximum degree is $\leq \beta - 1$, regardless of the value of $k$. This sparsity of HEDCS is particularly useful for maintaining it.

\begin{observation}\label{obs:HEDCS-max-degree}
	Every \HEDCS{\beta}{k} $H$ has maximum degree at most $\beta - 1$. 
\end{observation}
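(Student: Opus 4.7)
The plan is to fix an arbitrary vertex $v$ in a \HEDCS{\beta}{k} $H$, set $d := \deg_H(v)$, and show $d \le \beta - 1$ by applying property $(i)$ of \Cref{def:HEDCS} to a carefully chosen edge incident to $v$. The natural first attempt would be to pick any edge $e = (v,u) \in H$, locate the smallest index $i$ with $e \in H_i \setminus H_{i-1}$, and use $\deg_{H_i}(e) \le \beta$; but this only bounds $\deg_{H_i}(v)$, not $\deg_{H_k}(v) = \deg_H(v)$, so it is too weak. The fix is to choose the \emph{largest} level that sees an edge at $v$ instead of the smallest.

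More precisely, I would let $i^*$ be the maximum index in $\{1,\dots,k\}$ such that some edge incident to $v$ lies in $H_{i^*} \setminus H_{i^*-1}$, and let $e^* = (v,u)$ be such an edge (this $i^*$ exists whenever $d \ge 1$, and the case $d = 0$ is trivial). By the maximality of $i^*$, no edge incident to $v$ is added at any level strictly above $i^*$, so every edge of $H$ incident to $v$ already lies in $H_{i^*}$. Together with $H_{i^*} \subseteq H$, this gives the key identity
\[
\deg_{H_{i^*}}(v) \;=\; \deg_H(v) \;=\; d.
\]

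Now property $(i)$ of \Cref{def:HEDCS} applied to $e^* \in H_{i^*} \setminus H_{i^*-1}$ yields
\[
\deg_{H_{i^*}}(v) + \deg_{H_{i^*}}(u) \;=\; \deg_{H_{i^*}}(e^*) \;\le\; \beta.
\]
Since $e^* \in H_{i^*}$, we have $\deg_{H_{i^*}}(u) \ge 1$, so $d = \deg_{H_{i^*}}(v) \le \beta - 1$, which is the desired bound.

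The main obstacle is really just the choice of the right witness level: property $(i)$ only constrains the edge-degree measured inside the level at which an edge first appears, so to convert this into a bound on the vertex degree of the whole of $H$, one must select the level after which no further edges at $v$ are added. Once that observation is in place the argument is a one-line application of property $(i)$; property $(ii)$ of HEDCS is not needed for this observation at all.
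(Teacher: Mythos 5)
Your proof is correct and follows essentially the same route as the paper: both arguments pick an edge of $v$ at the highest level $i^*$, use that all edges at $v$ then lie in $H_{i^*}$ so $\deg_{H_{i^*}}(v) = \deg_H(v)$, and conclude via property $(i)$ together with the extra $+1$ contributed by the other endpoint. The only difference is cosmetic (direct argument versus the paper's proof by contradiction).
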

\begin{myproof}
	Fix a hierarchical decomposition $H_1 \subseteq \ldots \subseteq H_k$ of $H$. We define the level of any edge $e \in H$ to be the unique value of $i \in [k]$ such that $e \in H_i \setminus H_{i-1}$.
	
	Towards contradiction suppose $\deg_H(v) \geq \beta$ for some vertex $v$. Take an arbitrary edge $e$ of $v$ with the highest level. Suppose that the level of $e$ is $i$, i.e., $e \in H_i \setminus H_{i-1}$. It holds that $\deg_{H_i}(e) \geq \deg_{H_i}(v) + 1 = \deg_H(v) + 1 \geq \beta + 1$ contradicting property $(i)$ of HEDCS. 
\end{myproof}

While we do not use the next two simple observations in our proofs, it might be instructive to state them here regardless.

\begin{observation}\label{obs:HEDCSkk'}
	For any integers $k' \geq k$ and $\beta$, any \HEDCS{\beta}{k} $H$ is also a \HEDCS{\beta}{k'}.
\end{observation}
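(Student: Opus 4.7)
The plan is to prove \Cref{obs:HEDCSkk'} by a simple padding argument on the hierarchical decomposition. Given that $H$ is a \HEDCS{\beta}{k}, fix any witnessing decomposition $\emptyset = H_0 \subseteq H_1 \subseteq \ldots \subseteq H_k = H$ satisfying properties $(i)$ and $(ii)$ of \Cref{def:HEDCS}. I would construct a $k'$-level hierarchical decomposition of the same graph $H$ by defining $H_i' = H_i$ for $0 \leq i \leq k$ and $H_i' = H$ for $k < i \leq k'$. In other words, I pad the top of the hierarchy by repeating $H$ until we have $k'$ levels. The chain of inclusions $\emptyset = H_0' \subseteq H_1' \subseteq \ldots \subseteq H_{k'}' = H$ is immediate.

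Next I would verify the two defining properties of \HEDCS{\beta}{k'} for this padded decomposition. For property $(i)$, consider any $1 \leq i \leq k'$ and any edge $e \in H_i' \setminus H_{i-1}'$. If $i \leq k$, then $H_i' \setminus H_{i-1}' = H_i \setminus H_{i-1}$, so the bound $\deg_{H_i'}(e) = \deg_{H_i}(e) \leq \beta$ follows directly from property $(i)$ of the original decomposition. If instead $i > k$, then $H_i' = H_{i-1}' = H$, so $H_i' \setminus H_{i-1}' = \emptyset$ and the condition holds vacuously. For property $(ii)$, since $H_{k'}' = H$ is the same graph as before, the condition $\deg_H(e) \geq \beta - 1$ for every $e \in G \setminus H$ is inherited unchanged from the original \HEDCS{\beta}{k} decomposition.

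There is no real obstacle here; the observation is essentially syntactic, exploiting the fact that the two properties of HEDCS concern only the incremental edge sets $H_i \setminus H_{i-1}$ (which can be made empty by repetition) and the final union $H_k$ (which is preserved). Hence the padded sequence is a valid hierarchical decomposition showing that $H$ is a \HEDCS{\beta}{k'}, completing the proof.
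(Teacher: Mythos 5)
Your proof is correct and takes essentially the same padding approach as the paper, which also extends the hierarchical decomposition by trivial extra levels. If anything, your choice of repeating $H$ at the top (so that $H'_{k'} = H$ and the nesting is preserved) is the cleaner way to write the padding, whereas the paper's proof appends empty sets, which as literally stated needs a small touch-up to keep the chain of inclusions.
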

\begin{myproof}
	Let $H_1, \ldots, H_k$ be a hierarchical decomposition of $H$ and let $H_{k+1} = \emptyset, \ldots, H_{k'} = \emptyset$. It is easy to see that $H_1, \ldots, H_{k'}$ satisfies properties of \HEDCS{\beta}{k'}, thus $H$ is also a \HEDCS{\beta}{k'}.
\end{myproof}

\begin{observation}
	For any parameters $k \geq 1$ and $\beta \geq 2$, any graph $G$ has a \HEDCS{\beta}{k}. 
\end{observation}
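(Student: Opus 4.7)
The plan is a straightforward reduction to the known existence of a $\beta$-EDCS, plus an appeal to that result. Explicitly, given any $\beta$-EDCS $H^{\star}$ of $G$, I would declare the trivial hierarchical decomposition $H_1 = H_2 = \cdots = H_k = H^{\star}$. Property $(i)$ of \Cref{def:HEDCS} is then vacuous at every level $i \ge 2$, since $H_i \setminus H_{i-1} = \emptyset$, and at level $i = 1$ it coincides with the edge-degree cap $\deg_{H_1}(e) \le \beta$ of an EDCS. Property $(ii)$, namely $\deg_H(e) \ge \beta - 1$ for all $e \in G \setminus H$, is identical to the uncovered-edge condition of an EDCS since $H = H_k = H^{\star}$. (Equivalently, one can appeal to \Cref{obs:HEDCSkk'} applied to the case $k = 1$.)

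It therefore remains to show that every graph admits a $\beta$-EDCS for $\beta \ge 2$, which is the original existence result of Bernstein and Stein \cite{BernsteinSteinICALP15,BernsteinSteinSODA16}. If a self-contained proof is preferred, local search works: start from $H = \emptyset$ and repeatedly either add some $e \in G \setminus H$ with $\deg_H(e) \le \beta - 2$ or remove some $e \in H$ with $\deg_H(e) \ge \beta + 1$. Termination follows from the potential $\Phi(H) := (\beta - \tfrac{3}{2})|H| - \sum_{v} \binom{\deg_H(v)}{2}$, under which each such addition changes $\Phi$ by $(\beta - \tfrac{3}{2}) - \deg_H(e) \ge \tfrac{1}{2}$ and each such removal changes it by $\deg_H(e) - 2 - (\beta - \tfrac{3}{2}) \ge \tfrac{1}{2}$, where $\deg_H(e)$ is measured before the removal. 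Since $\Phi$ takes values in $\tfrac{1}{2}\mathbb{Z}$ and is bounded above, the process halts at a subgraph with no violations, i.e., a $\beta$-EDCS.

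No step poses a real obstacle: the entire argument is a one-line stacking construction together with the EDCS existence result. The only mild care is in choosing the constant in $\Phi$---the half-integer coefficient $\beta - \tfrac{3}{2}$ is precisely what makes both add- and remove-steps strictly favorable, which is why integer constants such as $\beta$ or $\beta - 1$ do not work.
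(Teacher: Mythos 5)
Your proposal is correct and follows essentially the same route as the paper: stack a $\beta$-EDCS into the trivial hierarchical decomposition (the paper phrases this reduction via \Cref{obs:HEDCSkk'}) and invoke the existence of a $\beta$-EDCS for every $\beta \geq 2$. The only difference is that where the paper simply cites Bernstein--Stein for EDCS existence, you supply a self-contained local-search argument, and your potential $\Phi(H) = (\beta - \tfrac{3}{2})|H| - \sum_{v} \binom{\deg_H(v)}{2}$ indeed works: each insertion of an underfull edge increases $\Phi$ by $(\beta - \tfrac{3}{2}) - \deg_H(e) \geq \tfrac{1}{2}$, each deletion of an overfull edge by $\deg_H(e) - 2 - (\beta - \tfrac{3}{2}) \geq \tfrac{1}{2}$, and $\Phi \leq (\beta - \tfrac{3}{2})\,|E(G)|$, so the process terminates at a $\beta$-EDCS.
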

\begin{myproof}
	Any graph $G$ is known to have a $\beta$-EDCS for any $\beta \geq 2$ \cite{BernsteinStein15ArXiv}. Since a $\beta$-EDCS is equivalent to a \HEDCS{\beta}{1}, the statement follows from \Cref{obs:HEDCSkk'}.
\end{myproof}

\subsection{Approximation Ratio of HEDCS: Basic Definitions}

We now turn to measuring the approximation ratio guaranteed by an HEDCS, and give a few definitions for this purpose.

For any integers $\beta > \beta^- \geq 1$, and $k \geq 1$ we define a number $f(k, \beta, \beta^-)$ that plays a crucial role in bounding the approximation ratio achieved by a \HEDCS{\beta}{k} (for now think of $\beta^-$ as a number that is very close to $\beta$ but is smaller, we will specify its value soon). In the definition below, by a {\em bipartite} HEDCS we mean an HEDCS defined on a bipartite graph. 

\begin{definition}[\textbf{Function $f$}]\label{def:f}
	For integers $k \geq 1$ and $\beta > \beta^- \geq 1$, let $f(k, \beta, \beta^-)$ be the largest number such that in every bipartite $\HEDCS{\beta}{k}$ with vertex parts $P$ and $Q$ and $\geq \beta^-|P|/2$ edges, 
	$$
	|Q| \geq f(k, \beta, \beta^-) |P|.
	$$
\end{definition}

Based on $f$, we define a function $\alpha$ that is more convenient to use for our approximations:

\begin{definition}[\textbf{Function $\alpha$}]\label{def:alpha}
	For any integers $k \geq 1$ and $\beta > \beta^- \geq 1$ we define
	$$
	\alpha(k, \beta, \beta^-) = \frac{2f(k,  \beta, \beta^-)}{2f(k, \beta, \beta^-) + 1}.
	$$
\end{definition}

Let us now relate HEDCS to the value of function $\alpha$ defined above.

\hiddencomment{I actually think for the second point below, we can show that if $\beta \geq \log(1/\epsilon)/\epsilon^2$, then for every integer $\beta' \geq 1$, it holds that $\mu(H \cup U) \geq (1-\epsilon)\alpha(k, \beta'+2k-1, \beta)$. This is slightly more interesting because the consntant in the factor revealing LP won't affect the value of $\beta'$ and so the update-time will be independent of the analysis. Anyway, this is minor.}

\newcommand{\generalbeta}[1]{\ensuremath{c(#1)^2\log(#1)}}
\newcommand{\propapxstatement}[0]{
Let $H$ and $U$ be subgraphs of a graph $G$, let $\beta \geq 2$ be any integer, and suppose that $H$ is a $\HEDCS{\beta}{k}$ of $G \setminus U$. Then:
	\begin{itemize}
		\item If $G$ is bipartite, then $\mu(H \cup U) \geq \alpha(k, \beta, \beta - 1) \mu(G)$.
		\item If $\beta \geq \generalbeta{\beta' k}$ for some integer $\beta'$ and a sufficiently
large constant $c \geq 1$, then $\mu(H \cup U) \geq \alpha(k, \beta' + 2k-1, \beta')\mu(G)$. This holds even if $G$ is non-bipartite.
	\end{itemize}
}

\begin{proposition}[Approximation guarantee of HEDCS]\label{prop:apx-HEDCS}
	\propapxstatement{}
\end{proposition}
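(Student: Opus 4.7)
My plan is to handle the bipartite case first by a direct argument using the function $f$, then reduce the non-bipartite case to the bipartite case via a random bipartization that preserves $M^*$.

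\emph{Bipartite case.} I fix maximum matchings $M$ of $H \cup U$ and $M^*$ of $G$, set $\mu = |M|, \mu^* = |M^*|$, and let $V_1 := V \setminus V(M)$. By maximality of $M$, every $H \cup U$-neighbor of any $v \in V_1$ lies in $V(M)$. A standard partition of $M^*$-edges by endpoint class yields $|P| \ge 2(\mu^* - \mu)$ for $P := V_1 \cap V(M^*)$; I split $P = P_A \cup P_B$ across the bipartition. For each edge $e \in M^* \setminus (H \cup U)$, property (ii) of HEDCS provides $\deg_H(e) \ge \beta - 1$; aggregating these constraints and picking whichever side---say $P_A$---contributes at least half of the total degree sum produces a bipartite subgraph $H' \subseteq H$ whose one part is contained in $P_A$ and whose other part lies in $V(M) \cap B$, satisfying $|E(H')| \ge (\beta-1)|P_A|/2$. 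Since the hierarchical decomposition of $H$ restricts to $H'$ preserving property (i) (and property (ii) is vacuous on $H'$ viewed as an HEDCS of itself), $H'$ is itself a bipartite $\HEDCS{\beta}{k}$. The definition of $f$ then gives $|V(H') \cap V(M) \cap B| \ge f(k, \beta, \beta-1)\, |P_A|$. A symmetric application to $P_B$ (whose $H$-neighborhood lies in $V(M) \cap A$, disjoint from $V(M) \cap B$), together with $|P| \ge 2(\mu^* - \mu)$, $|V(M)| = 2\mu$, and the defining identity $\alpha = 2f/(2f+1)$, yields the bipartite claim $\mu \ge \alpha(k, \beta, \beta - 1)\, \mu^*$ after a short rearrangement.

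\emph{General case via random bipartization.} For general $G$, I fix a maximum matching $M^*$ and sample a random bipartition $(L, R)$: for each $(u, v) \in M^*$, I independently assign one endpoint to $L$ and the other to $R$ uniformly; for each vertex outside $V(M^*)$, I place it in $L$ or $R$ uniformly. Let $G_{LR}, H_{LR}, U_{LR}$ denote the bipartite subgraphs of edges crossing $L$ and $R$. By construction $M^* \subseteq G_{LR}$, hence $\mu(G_{LR}) \ge \mu(G)$. The main technical claim is that, with positive probability, the bipartite graph $H_{LR}$ (possibly after pruning) qualifies as a $\HEDCS{\beta' + 2k - 1}{k}$ of $G_{LR} \setminus U_{LR}$ under the natural hierarchy $H_{i, LR} := H_i \cap (L \times R)$. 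For each vertex $v$ and level $i$, $\deg_{H_{i, LR}}(v)$ is a sum of up to $\deg_{H_i}(v) \le \beta$ nearly-independent Bernoulli-$(1/2)$ indicators (the only correlation comes from $v$'s own $M^*$-edge, which causes no loss), so Chernoff gives $\deg_{H_{i, LR}}(v) = \deg_{H_i}(v)/2 \pm O(\sqrt{\beta \log(\beta' k / \delta)})$. Choosing $\beta \ge c(\beta' k)^2 \log(\beta' k)$ with $c$ sufficiently large makes this slack at most one unit of edge-degree per level, and a union bound over the relevant edges and the $k$ levels yields positive probability that both HEDCS properties hold with the shifted parameters $\beta' + 2k - 1$ (upper bound at each level) and $\beta' - 1$ (lower bound from the halved property~(ii)), with the level-wise slack absorbed into the $2k-1$ shift. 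Applying the bipartite case concludes the argument: $\mu(H \cup U) \ge \mu(H_{LR} \cup U_{LR}) \ge \alpha(k, \beta' + 2k - 1, \beta')\, \mu(G_{LR}) \ge \alpha(k, \beta' + 2k - 1, \beta')\, \mu(G)$.

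\emph{Main obstacle.} The most delicate step is recovering the correct factor in $\alpha = 2f/(2f+1)$ for the bipartite case: a one-sided application of $f$ only gives the weaker ratio $f/(f+1)$, so the proof has to symmetrize across both $P_A$ and $P_B$ and exploit that their $H$-neighborhoods live in disjoint parts of $V(M)$ of combined size exactly $2\mu$. In the non-bipartite reduction, the chief technical burden is the joint concentration across all $k$ hierarchy levels together with the pruning step required to turn the bipartized graph into a genuine $\HEDCS{\beta'+2k-1}{k}$, and this is precisely what dictates the quadratic-in-$k$ (and logarithmic) lower bound on $\beta$ in the hypothesis of the proposition.
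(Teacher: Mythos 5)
Your bipartite argument has a genuine gap, and it is exactly at the step you flag as delicate. First, the premise $|E(H')|\ge(\beta-1)|P_A|/2$ does not follow: an $M^*$-edge incident to a vertex of $P$ may well lie in $H\cup U$ (its other endpoint being $M$-matched), in which case property $(ii)$ gives no constraint at all; and even for $e=(v,w)\in M^*\setminus(H\cup U)$ with $v\in P_A$, the mass in $\deg_H(e)\ge\beta-1$ may sit entirely on $w$, which need not lie in $P_A\cup P_B$, so the subgraph of $H$-edges incident to $P_A$ can be nearly empty --- ``picking the side contributing at least half'' cannot route that mass back to $P$. Second, even granting the edge count, your symmetrization only yields $|V(M)\cap B|\ge f\,|P_A|$ and $|V(M)\cap A|\ge f\,|P_B|$, hence $2\mu(H\cup U)\ge f\,|P|\ge 2f(\mu(G)-\mu(H\cup U))$, i.e.\ the ratio $f/(f+1)$ (about $1/2$ when $f\approx 1$) rather than $\alpha=2f/(2f+1)$ (about $2/3$). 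The factor of two is recovered in the paper (\Cref{lem:HEDCS-bipartite}, following Assadi--Bernstein) by a structurally different device: an extended Hall's witness $A$ with $B=N_{H\cup U}(A)$ supplies $\ge\mu(G)-\mu(H\cup U)$ maximum-matching edges from $A$ to $\bar B$ that are \emph{automatically} outside $H\cup U$, the function $f$ is applied once with $P$ being \emph{both} endpoints of these edges, and the $Q$-side lands inside $\bar A\cup B$, a K\H{o}nig vertex cover of size exactly $\mu(H\cup U)$ --- not inside $V(M)$ of size $2\mu(H\cup U)$. Your scheme has no analogue of this size-$\mu$ container, so it cannot reach the stated bound.

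The general case also has two gaps. With $\beta\ge c(\beta'k)^2\log(\beta'k)$ \emph{independent of $n$}, the per-vertex, per-level failure probability of your Chernoff bound is only $\poly\!\big(1/(\beta'k)\big)$, so a union bound over the relevant vertices/edges (up to $\Theta(n\beta)$ of them) does not give positive success probability; the paper instead invokes the Lov\'asz Local Lemma, using that each bad event depends on at most $\beta^2k$ others (\Cref{lem:general-via-LLL}). More importantly, after halving, the bipartized graph has edge-degrees around $\beta/2\pm O(\sqrt{\beta\log(k\beta)})$, which is far larger than $\beta'+2k-1$, so it is \emph{not} a \HEDCS{\beta'+2k-1}{k} of the bipartized base graph, and the $2k-1$ shift cannot be ``concentration slack.'' The paper needs a separate step here: scale the halved HEDCS down by $\beta'/(\tfrac12\beta+\gamma)$ to a fractional HEDCS and apply dependent rounding level by level, incurring the additive $2k-1$ in edge-degree, to prove $\alpha(k,\tfrac12\beta+\gamma,\tfrac12\beta-\gamma)\ge(1-o(1))\,\alpha(k,\beta'+2k-1,\beta')$. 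Without this comparison your reduction never connects the parameters produced by the bipartization to the parameters appearing in the statement.
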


Subgraph $U$ in \Cref{prop:apx-HEDCS} will be important for our particular application. But it would be instructive to let $U = \emptyset$. Doing so, note that we get $\mu(H) \geq \alpha(k, \beta, \beta-1) \mu(G)$ if $H$ is a $\HEDCS{\beta}{k}$ of a bipartite graph $G$. Hence, $H$ is guaranteed to include an $\alpha(k, \beta, \beta-1)$-approximate matching of its base graph $G$ in this case. The same can be done for general graphs, albeit with a slightly different dependence on the parameter $\beta$.

The proof of \Cref{prop:apx-HEDCS} builds on the proof of \cite{AssadiBernsteinSOSA19} that an EDCS obtains a near $2/3$-approximation. We provide the details of the needed modifications in \Cref{sec:prop-apx-HEDCS-proof}.

So it only remains to lower bound the value of function $\alpha(k, \beta, \beta^-)$ for various $k$, $\beta$, and $\beta^-$ to understand the approximation ratio achieved via HEDCS. Let us start with a trivial bound.

\begin{observation}\label{obs:alpha-lt-half}
	For any $k \geq 1$ and any $\beta > \beta^- \geq 1$, $\alpha(k, \beta, \beta^-) \geq \frac{\beta^-/(\beta-1)}{\beta^-/(\beta-1) + 1}$.
\end{observation}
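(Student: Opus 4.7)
The plan is a short counting argument that uses nothing beyond \Cref{obs:HEDCS-max-degree} and the monotonicity of the function $x \mapsto \frac{2x}{2x+1}$. Since $\frac{2x}{2x+1}$ is strictly increasing in $x \geq 0$, and by definition
\[
\alpha(k, \beta, \beta^-) = \frac{2f(k, \beta, \beta^-)}{2f(k, \beta, \beta^-) + 1},
\]
it suffices to prove the corresponding lower bound on $f$, namely
\[
f(k, \beta, \beta^-) \;\geq\; \frac{\beta^-}{2(\beta-1)}.
\]

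To establish this, I would take an arbitrary bipartite \HEDCS{\beta}{k} $H$ with parts $P$ and $Q$ satisfying $|H| \geq \beta^- |P|/2$, and then double-count edges from the $Q$ side. By \Cref{obs:HEDCS-max-degree}, every vertex of $H$ has degree at most $\beta - 1$, so in particular every vertex of $Q$ does. Since $H$ is bipartite with parts $P,Q$, each edge is incident to exactly one vertex of $Q$, giving $|H| \leq (\beta-1)|Q|$. Combining this with the hypothesis $|H| \geq \beta^- |P|/2$ yields $|Q| \geq \frac{\beta^-}{2(\beta-1)}|P|$, which by the definition of $f$ proves the claimed bound.

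Plugging this lower bound on $f$ into the formula for $\alpha$ and using monotonicity gives
\[
\alpha(k, \beta, \beta^-) \;\geq\; \frac{2 \cdot \frac{\beta^-}{2(\beta-1)}}{2 \cdot \frac{\beta^-}{2(\beta-1)} + 1} \;=\; \frac{\beta^-/(\beta-1)}{\beta^-/(\beta-1) + 1},
\]
as required. There is no real obstacle here: the whole statement is an immediate consequence of the maximum degree bound on an HEDCS, and the argument is oblivious to the number of hierarchy levels $k$ because \Cref{obs:HEDCS-max-degree} already absorbs this dependence.
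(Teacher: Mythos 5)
Your proof is correct and follows essentially the same route as the paper: bound the degree of every $Q$-vertex by $\beta-1$ via \Cref{obs:HEDCS-max-degree}, deduce $|Q|(\beta-1) \geq |H| \geq \beta^-|P|/2$ to get $f(k,\beta,\beta^-) \geq \frac{\beta^-}{2(\beta-1)}$, and plug this into the formula for $\alpha$ using monotonicity of $x \mapsto \frac{2x}{2x+1}$. No gaps; nothing further is needed.
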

\begin{myproof}
	Let $H$ be a \HEDCS{\beta}{k} with vertex parts $P$ and $Q$ and at least $|H| \geq \beta^-|P|/2$ edges. Since the maximum degree in $H$ is at most $\beta - 1$ by \Cref{obs:HEDCS-max-degree}, we have $|Q| (\beta - 1) \geq |H| \geq \beta^-|P|/2$. Rearranging the terms, we get $|Q| \geq \frac{\beta^-}{2 (\beta-1)} |P|$ and thus $f(k, \beta, \beta^-) \geq \frac{\beta^-}{2(\beta-1)}$. As such, we get
	$$
	\alpha(k, \beta, \beta^-) = \frac{2f(k, \beta, \beta^-)}{2f(k, \beta, \beta^-)+1} \geq \frac{2 \cdot \frac{\beta^-}{2(\beta-1)}}{2 \cdot \frac{\beta^-}{2(\beta-1)}+1} \geq \frac{\beta^-/(\beta-1)}{\beta^-/(\beta-1)+1}.\qedhere
	$$
\end{myproof}

\Cref{obs:alpha-lt-half} implies that $\alpha(k, \beta, \beta-1) \geq 1/2$ for any $\beta > \beta^- \geq 1$ and $k \geq 1$. From \Cref{prop:apx-HEDCS}, we thus get that a $\HEDCS{\beta}{k}$ for every choice of $k \geq 1$ and $\beta \geq 2$ includes an at least $\sfrac{1}{2}$-approximation for bipartite graphs. For general graphs too, \Cref{obs:alpha-lt-half} and \Cref{prop:apx-HEDCS} together imply that by increasing $\beta$, the approximation ratio of a \HEDCS{\beta}{k} can get arbitrarily close to at least $\sfrac{1}{2}$-approximation.

As we will see, however, much better lower bounds can be proven for $\alpha(k, \beta, \beta-)$ when $k$ is moderately small. In particular, by adapting techniques from \cite{AssadiBernsteinSOSA19} one can show $f(1, \beta, \beta^-)$ gets arbitrarily close to 1 if $\beta$ is large and $\beta^-$ is close to $\beta$. This implies that $\alpha(1, \beta, \beta^-)$ can get arbitrarily close to $2/3$, recovering the $2/3$-approximation guarantee of EDCS.

The analysis of function $\alpha(k, \beta, \beta^-)$, however, is more challenging for $k > 1$. In \Cref{sec:fk-LP} we give an LP-based approach that can lower bound $\alpha(k, \beta, \beta^-)$ for moderately small values of $k$ and $\beta$. We use the bounds achieved by this approach for our claimed approximations for $k \in \{2, 3\}$. Later in \Cref{sec:analysis-of-fkb}, we present a different approach for bounding $\alpha$, implying that for $k = O(1/\epsilon)$ and an appropriate $\beta$, the approximation guarantee is $\frac{1}{2} + \Omega_\epsilon(1)$, i.e., strictly better than half.

\subsection{Approximation Ratio of HEDCS: A Factor Revealing LP}\label{sec:fk-LP}

In this section, we show how to lower bound the value of $f(k, \beta, \beta^-)$ (and thus $\alpha(k, \beta, \beta^-)$) by running a linear program (LP). This LP is particularly useful when the values of $\beta$ and $k$ are not too large. In particular, we use this LP to reveal the approximation factor of our algorithm for $k=2$ and $k=3$.

\newcommand{\Pset}[0]{\ensuremath{\mc{P}}}
\newcommand{\Qset}[0]{\ensuremath{\mc{Q}}}
\newcommand{\ELP}[0]{\ensuremath{E_{LP}}}
\newcommand{\LP}[1]{\ensuremath{LP(#1)}}

The LP is written based on three parameters $\beta$, $\beta^-$, and $k$. Let us start with a number of definitions. Define sets $\Pset = \{0, \ldots, \beta\}^k$ and $\Qset = \{0, \ldots, \beta\}^k$. For any $p \in \Pset$ (resp. $q \in \Qset$) and any $i \in [k]$, we use $p_i$ (resp. $q_i$) to denote the $i$-th entry of $p$ (resp. $q$).  We define $\ELP$ to denote all triplets $(p, q, j) \in \Pset \times \Qset \times [k]$ such that $\sum_{i=1}^j p_i + q_i \leq \beta$.

The LP has four types of variables. First, for any $(p, q, j) \in \ELP$ we have a variable $x(p, q, j)$. Second, for any $p \in \Pset$ we have a variable $n_P(p)$. Third, for any $q \in \Qset$ we have a variable $n_Q(q)$. Fourth, we have a single variable $r$ that is also the objective value to be minimized.

The LP can now be formalized as follows; we use \LP{k, \beta, \beta^-} to denote its optimal value.

\begin{equation*}
\begin{array}{ll@{}ll}
\text{minimize}  & r &\\ \medskip
\text{subject to}&   n_P(p) \cdot p_j = \displaystyle\sum_{q: (p, q, j) \in \ELP} x(p,q,j) \text{\hspace{1cm}} &\text{for all } p \in \Pset \text{ and } j \in [k] \\ \medskip
				 &   n_Q(q) \cdot q_j = \displaystyle\sum_{p: (p, q, j) \in \ELP} x(p,q,j) \text{\hspace{1cm}} &\text{for all } q \in \Qset \text{ and } j \in [k] \\ \medskip
				 &   \sum_{q \in \Qset} n_Q(q) = r\\ \medskip
				 &   \sum_{p \in \Pset} n_P(p) = 1\\ \medskip
				 &   \displaystyle \sum_{(p, q, j) \in \ELP} x(p, q, j) \geq \beta^- /2\\
                 &   x(p, q, j) \geq 0 &\text{for all } (p, q, j) \in \ELP\\ \medskip
                 &   n_Q(q) \geq 0, n_P(p) \geq 0 & \text{for all $p \in \Pset$ and $q \in \Qset$.}
\end{array}
\end{equation*}

In the next lemma, we show that \LP{k, \beta, \beta^-} lower bounds the value of $f(k, \beta, \beta^-)$.

\begin{lemma}\label{lem:LPlowerboundsf}
	For any $k \geq 1$, $\beta$, and $\beta^-$, we have $f(k, \beta, \beta^-) \geq \LP{k, \beta, \beta^-}$.
\end{lemma}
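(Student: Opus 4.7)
The plan is to show that every bipartite HEDCS yields a feasible solution to $\LP{k, \beta, \beta^-}$ whose objective value equals $|Q|/|P|$; since the LP minimizes, it follows that $|Q|/|P| \geq \LP{k, \beta, \beta^-}$ for every such HEDCS, which gives the claimed lower bound on $f(k, \beta, \beta^-)$ by definition.

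To set up the feasible solution, fix a bipartite $\HEDCS{\beta}{k}$ $H$ with parts $P, Q$ and at least $\beta^-|P|/2$ edges, together with a hierarchical decomposition $\emptyset = H_0 \subseteq H_1 \subseteq \cdots \subseteq H_k = H$. For a vertex $u \in P$, define its \emph{degree profile} $p(u) = (p_1(u), \ldots, p_k(u)) \in \{0, \ldots, \beta\}^k$ by letting $p_i(u)$ be the number of edges of $u$ lying in $H_i \setminus H_{i-1}$; define $q(v) \in \{0, \ldots, \beta\}^k$ analogously for $v \in Q$. (The ranges are valid because \Cref{obs:HEDCS-max-degree} implies $\deg_H(u) \leq \beta - 1$, so in particular each $p_i(u) \leq \beta$.) For each $(p, q, j)$ with $p \in \Pset$, $q \in \Qset$, $j \in [k]$, let $X(p, q, j)$ be the number of edges $(u, v) \in H_j \setminus H_{j-1}$ with $p(u) = p$ and $q(v) = q$. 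Finally, set
\[
n_P(p) = \frac{|\{u \in P : p(u) = p\}|}{|P|}, \quad n_Q(q) = \frac{|\{v \in Q : q(v) = q\}|}{|P|}, \quad x(p, q, j) = \frac{X(p, q, j)}{|P|}, \quad r = \frac{|Q|}{|P|}.
\]

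Next I would check feasibility. For the support condition $(p, q, j) \in \ELP$: whenever $x(p, q, j) > 0$ there is an edge $e = (u, v) \in H_j \setminus H_{j-1}$ with $p(u) = p$, $q(v) = q$, and property $(i)$ of HEDCS gives $\deg_{H_j}(e) \leq \beta$; since $\deg_{H_j}(u) = \sum_{i=1}^j p_i$ and $\deg_{H_j}(v) = \sum_{i=1}^j q_i$, this is exactly $\sum_{i=1}^j p_i + q_i \leq \beta$. For the $P$-side degree equations, fix $p \in \Pset$ and $j \in [k]$: counting level-$j$ edge-endpoints at $P$-vertices of profile $p$ two ways gives $n_P(p) \cdot p_j \cdot |P| = \sum_q X(p, q, j)$, i.e.\ $n_P(p) \cdot p_j = \sum_q x(p, q, j)$; the $Q$-side equations are symmetric. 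The normalization $\sum_p n_P(p) = 1$ holds by construction, and $\sum_q n_Q(q) = |Q|/|P| = r$. Finally, $\sum_{(p, q, j) \in \ELP} x(p, q, j) = |H|/|P| \geq \beta^-/2$ by the edge-count hypothesis, and non-negativity is immediate.

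Since this is a feasible LP solution with objective $r = |Q|/|P|$, we conclude $|Q|/|P| \geq \LP{k, \beta, \beta^-}$. Because this holds for every bipartite $\HEDCS{\beta}{k}$ with at least $\beta^-|P|/2$ edges, \Cref{def:f} yields $f(k, \beta, \beta^-) \geq \LP{k, \beta, \beta^-}$.

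There is no genuine obstacle here; the argument is essentially by construction. The only mild subtlety worth a sentence in the proof is that property $(ii)$ of HEDCS (saying missing edges have high degree-sum in $H$) is never used: the LP is encoding only the edge-count lower bound together with the per-level edge-degree cap from property $(i)$, which is fine because we are only after a lower bound on $f$, not a tight characterization. A small technical point is also that the profile coordinates must be allowed to take all values in $\{0, \ldots, \beta\}$ (as $\Pset$ and $\Qset$ do); this is the reason the LP uses $\beta$ rather than $\beta - 1$ as the coordinate upper bound even though \Cref{obs:HEDCS-max-degree} gives a slightly better pointwise bound.
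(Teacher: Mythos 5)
Your proof is correct and follows essentially the same route as the paper: encode each vertex's per-level degree profile, set $n_P, n_Q, x, r$ as normalized counts, and verify feasibility via double counting plus property $(i)$ of the HEDCS to place the support inside $\ELP$. The only (minor, and arguably cleaner) difference is that you argue that \emph{every} qualifying HEDCS yields a feasible solution of objective $|Q|/|P|$, so $\LP{k,\beta,\beta^-}$ lower bounds all such ratios and hence $f$, whereas the paper instantiates the construction on an extremal HEDCS attaining $f(k,\beta,\beta^-)$.
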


\begin{myproof}
	Suppose that $f(k, \beta, \beta^-) = \rho$. From the definition of $f(k, \beta, \beta^-)$, we get that there exists a bipartite $\HEDCS{\beta}{k}$ $H$ with vertex parts $P$ and $Q$ and at least $\beta^-|P|/2$ edges, such that $|Q| = \rho |P|$. Based on this graph $H$, we construct a feasible solution to the LP, and show that its objective value is $\rho$. This clearly suffices to prove $f(k, \beta, \beta^-) \geq \LP{k, \beta, \beta^-}$.
	
	Let $(H_1, \ldots, H_k)$ be a hierarchical decomposition of $H$ satisfying \HEDCS{\beta}{k} constraints. Let $s = (s_1, \ldots, s_k)$ be a vector with each $s_i \in \{0, \ldots, \beta\}$. We say a vertex $v$ in $H$ has {\em degree-profile} $s$ if for any $i \in [k]$, $\deg_{H_i \setminus H_{i-1}}(v) = s_i$. For any $p \in \Pset$, we use $P(p)$ to denote the subset of vertices in part $P$ of $H$ that have degree-profile $p$. Similarly, for any $q \in \Qset$, we use $Q(q)$ to denote the subset of vertices in part $Q$ of $H$ with degree-profile $q$. For any $p \in \Pset$, $q \in \Qset$, and $j \in [k]$, we use $H(p, q, j)$ to denote the subset of edges of $H$ that belong to $H_j \setminus H_{j-1}$, with the $P$-endpoint having degree-profile $p$ and the $Q$-endpoint having degree profile $q$.
	
	Consider the following values for the variables of the LP:
	\begin{itemize}
		\item For any $p \in \Pset$, we set $n_P(p) = |P(p)|/|P|$.
		\item For any $q \in \Qset$, we set $n_Q(q) = |Q(q)|/|P|$. (Note that the denominator is $|P|$ and not $|Q|$.)
		\item For any $(p, q, j) \in \ELP$ we set $x(p, q, j)$ to be $|H(p, q, j)|/|P|$.
		\item We set $r = |Q|/|P| = \rho$.
	\end{itemize}
	
	Let us now verify that this is a feasible solution for the LP.
	
	For the LHS of the first constraint, we have $n_P(p) \cdot p_j = \frac{|P(p)| p_j}{|P|}$ and for the RHS, we have $\sum_{q: (p, q, j) \in \ELP} |H(p, q, j)|/|P|$. The $|P|$ factors cancel out from both sides, and we just need to show
	$$
		|P(p)| p_j = \sum_{q: (p, q, j) \in \ELP} |H(p, q, j)|.
	$$
	The LHS is the number of vertices in $P$ with degree-profile $p$ times $p_j$. Since every vertex with degree-profile $p$ by definition has exactly $p_j$ edges in $H_j \setminus H_{j-1}$, the LHS counts the number of edges of $H_j \setminus H_{j-1}$ connected to vertices of $P$ with degree-profile $p$. Note from definition of $H(p, q, j)$ that the RHS counts exactly the same quantity as we sum over all possible degree-profiles in the $Q$-side and thus count all the $H_j \setminus H_{j-1}$ edges with the $P$-endpoint having degree-profile $p$.
	
	The second constraint can be verified to be satisfied in exactly the same way as the first.
	
	For the LHS of the third constraint, we have $\sum_{q \in \Qset} n_Q(q) = \sum_{q \in \Qset} |Q(q)|/|P| = \frac{1}{|P|}\sum_{q \in \Qset} |Q(q)|$. Since every vertex in $Q$ has a unique degree-profile, the sum equals $|Q|$. Hence, the LHS of the third constraint equals $|Q|/|P| = \rho$. Since we set $r = \rho$, the third constraint is also satisfied.
	
	For the LHS of the fourth constraint, we have $\sum_{p \in \Pset} n_P(p) = \sum_{p \in \Pset} |P(p)|/|P| = \frac{1}{|P|} \sum_{p \in \Pset} |P(p)|$. The sum counts the number of vertices in $P$, thus this indeed equals one as required by the LP.
	
	For the fifth constraint, observe that the LHS equals $\frac{1}{|P|}\sum_{(p, q, j) \in \ELP} |H(p, q, j)|$. We claim that $\frac{1}{|P|}\sum_{(p, q, j) \in \ELP} |H(p, q, j)| \geq |H|/|P|$. Combined with our discussion of the first paragraph of the proof that $|H| \geq \beta^-|P|/2$, this suffices to prove that the fifth constraint holds. To prove this claim, take an edge $e \in H$, and let $(p, q, j)$ be such that $e \in H(p, q, j)$. We show that $(p, q, j) \in \ELP$ which means means the sum $\sum_{(p, q, j) \in \ELP} |H(p, q, j)|$ counts each edge of $H$ at least once, proving the claim. From the definition of degree-profiles, it can be confirmed that $\deg_{H_j}(e) = \sum_{i=1}^j p_i + q_i$; now since $e \in H_j \setminus H_{j-1}$, from property $(i)$ of \HEDCS{k}{\beta}, we get that $\deg_{H_j}(e) \leq \beta$. Hence, $\sum_{i=1}^j p_i + q_i \leq \beta$ and thus $(p, q, j) \in \ELP$ by definition of \ELP{}.
	
	Finally, the non-negativity constraints can be easily verified to hold since $P(q), Q(q), H(p, q, j), $ and $P$ are all sets and hence have non-negative size.
\end{myproof}

{
\renewcommand{\arraystretch}{1.3}

\begin{table}
\centering

\begin{tabular}{|c|c|c|c|c|l|}
\hline
\rowcolor[HTML]{EFEFEF} 
$k$ & $\beta$ & $\beta^-$ & $f(k, \beta, \beta^-)$ & $\alpha(k, \beta, \beta^-)$ & Note                                                    \\ \hline
2   & 220     & 217       & $\geq .780$          & $\geq .609$               & Used for $k=2$ and general graphs in \Cref{thm:main}.   \\ \hline
2   & 142     & 141       & $\geq .789$          & $\geq .612$               & Used for $k=2$ and bipartite graphs in \Cref{thm:main}. \\ \hline
3   & 47      & 42        & $\geq .569$          & $\geq .532$               & Used for $k=3$ and general graphs in \Cref{thm:main}.   \\ \hline
3   & 35      & 34        & $\geq .645$          & $\geq .563$               & Used for $k=3$ and bipartite graphs in \Cref{thm:main}. \\ \hline
\end{tabular}

\caption{Lower bounds on the values of $f(k, \beta, \beta^-)$ and $\alpha(k, \beta, \beta^-)$ obtained via \LP{k, \beta, \beta^-}.}
\label{table:LP}
\end{table}

}

From \Cref{lem:LPlowerboundsf}, we get that it suffices to run \LP{k, \beta, \beta^-} to lower bound the value of $f(k, \beta, \beta^-)$, and thus $\alpha(k, \beta, \beta^-)$, which determines the approximation ratio of HEDCS for different parameters. \Cref{table:LP} gives some of these results that we use in our approximation guarantees.\footnote{The code is available upon request.}

\section{Maintaining a Hierarchical EDCS in Fully Dynamic Graphs}\label{sec:dynamicalg}
\label{sec:MainAlg}

In this section, we describe an algorithm that maintains a \HEDCS{\beta}{k} in fully dynamic graphs efficiently (the maintained structure actually deviates slightly from a \HEDCS{\beta}{k} since the updates are handled lazily, but the matching maintained is approximately as large as that of a \HEDCS{\beta}{k}). 

The formal guarantee of the algorithm is as follows.

\begin{highlighttechnical}
\begin{theorem}\label{thm:dynamicalg}
	Let $k \geq 0$ and $\beta \geq 2$ be integers, and let $\epsilon \in (0, \sfrac{1}{12})$. Let $G$ be an $n$-vertex fully dynamic graph with $\Delta$ and $m$ being fixed upper bounds on the maximum degree and the number of edges of $G$. Provided that there are at least $\Omega(m)$ edge updates, there is an algorithm that maintains a matching $M$ of $G$ such that:
	\begin{itemize}[leftmargin=10pt, topsep=5pt]
	\item {\normalfont \textbf{Update-time:}} Each update takes
	$\min\{\Delta^{\frac{1}{k+1}}, m^{\frac{1}{2(k+1)}} \} \cdot \poly(\epsilon^{-1} \beta k \log n)$ time w.h.p.
	
	\item {\normalfont \textbf{Approximation:}} Suppose $k \geq 1$ and let $\alpha$ be as in \Cref{def:alpha}. If graph $G$ is bipartite, then at any point, w.h.p., it holds that
	$$|M| \geq \big(\alpha(k, \beta, \beta-1) - O(\epsilon)\big)\cdot \mu(G).$$ 
	If $G$ is not necessarily bipartite and $\beta$ is such that $\beta \geq \generalbeta{\beta' k}$ for some integer $\beta' \geq 1$ and a large enough constant $c \geq 1$, then at any point, w.h.p.,
	$$|M| \geq \big(\alpha(k, \beta'+2k-1, \beta') - O(\epsilon)\big) \cdot \mu(G).$$
	If $k=0$, then at any point, w.h.p., $|M| \geq (1-O(\epsilon)) \cdot \mu(G)$.
	\end{itemize} 
\end{theorem}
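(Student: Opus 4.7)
The plan is to implement the hierarchical construction sketched in Section~1.1. I would fix a geometric sampling schedule $p_1 < p_2 < \ldots < p_{k+1}=1$ with $p_{i}/p_{i-1} \approx \min\{\Delta^{1/(k+1)}, m^{1/(2(k+1))}\}$ and, using randomness independent of the adversary, maintain nested random subgraphs $G_1 \subseteq \ldots \subseteq G_{k+1}=G$ where each edge of $G$ belongs to $G_i$ with probability $p_i$ (this costs $O(k)$ coin flips per insertion and is trivial to update on deletions). Then I would build $H_1 \subseteq \ldots \subseteq H_k$ greedily, level by level: at level $i$, scan the edges of $G_i \setminus G_{i-1}$ that are currently $(H_{i-1},\beta)$-underfull and insert each such edge into $H_i$ as long as it is not $(H_i,\beta)$-overfull. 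Greedy termination together with \Cref{obs:HEDCS-max-degree} (max degree $\le \beta-1$) guarantees that right after construction every edge $e \in G$ is either in $H$ with $\deg_{H_i}(e)\le\beta$ at its insertion level, or satisfies $\deg_H(e)\ge\beta-1$, i.e.\ $H$ is a genuine \HEDCS{\beta}{k} of the current $G$.

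The core sparsification claim I would establish is that, with high probability over the sampling, the number of edges of $G_{i+1}$ left uncovered by $H_i$ is $\widetilde{O}(\mu(G)/p_i)$. The intuition is that any matching in this uncovered subgraph lifts to a roughly $p_i$-sparsification inside $G_i$, and the greedy choice would have inserted those edges had they been present; a Chernoff + union bound over epochs makes this rigorous against an oblivious adversary. Given this, deletions are handled lazily: at each level $i$ I keep a counter and, once $\approx \epsilon \mu(G) p_{i-1}/p_i$ deletions have hit $H_i$, I rebuild levels $i,\ldots,k$ from scratch on the current snapshot. That rebuild touches only the $\widetilde{O}(\mu(G)/p_{i-1})$ edges of $G_i\setminus G_{i-1}$ that are not already $H_{i-1}$-covered, so the amortized cost per update at level $i$ is $\widetilde{O}(p_i/p_{i-1})$, and summing over the $k$ levels yields the claimed update-time. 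Standard chunk-based deamortization (interleaving each rebuild across the next batch of updates while keeping a shadow copy of $H$ that is swapped in atomically) converts this to a worst-case bound; the $m^{1/(2(k+1))}$ branch follows by capping $\Delta$ at $\sqrt m$ via \Cref{prop:dynamic-maximal-matching} to approximate $\mu(G)$ and rescaling $p_i$.

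For the approximation guarantee, I would track the set $U$ of deleted edges whose levels have not yet been rebuilt: because the scheme pretends they still exist, $H$ is always a \HEDCS{\beta}{k} of $G\setminus U$ and not necessarily of $G$ itself. This is exactly the shape required by \Cref{prop:apx-HEDCS}, so $\mu(H\cup U)\ge \alpha(k,\beta,\beta-1)\mu(G)$ (respectively the general-graph version with the larger $\beta$). By choice of the rebuild thresholds, $|U|\le O(\epsilon)\sum_i p_i\cdot \mu(G)/p_i = O(\epsilon k)\mu(G)$, which is absorbed into the $O(\epsilon)$ slack. To actually output a matching, I would periodically run the static $(1-\epsilon)$-approximation of \Cref{prop:apxmatching} on $H\cup U$ (whose size is $\widetilde{O}(n\beta)$ by the degree bound) and refresh it every $\epsilon\mu(G)$ updates; the cost of this refresh fits inside the same amortized budget as the level rebuilds. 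The boundary cases --- $\mu(G)=O(\log n)$ handled by \Cref{prop:alg-small-matching}, $\Delta$ small by \Cref{prop:lowdegree}, and $k=0$ which is literally \Cref{prop:apxmatching} on $G_1=G$ --- are plugged in as black boxes.

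The main obstacles I expect are, in order of difficulty: (i) proving the sparsification bound uniformly across the entire update sequence, which requires careful coupling of the random sampling with the (adversary-determined but algorithm-ignorant) rebuild epochs, so that the matching size $\mu(G_i)$ used to set thresholds can be replaced by a running estimate from \Cref{prop:dynamic-maximal-matching} without a blowup; (ii) deamortizing the multi-level rebuilds while preserving the high-probability worst-case guarantee, which requires interleaving rebuilds at different levels with disjoint time slices and ensuring the invariants needed for \Cref{prop:apx-HEDCS} continue to hold on the "live" copy; and (iii) calibrating the slack in $\beta$ at each level so that the $O(\epsilon)$ loss in the approximation ratio does not compound with $k$, which drives the $\poly(\epsilon^{-1}\beta k\log n)$ dependence in the update-time.
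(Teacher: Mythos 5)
Your overall architecture (nested sampled subgraphs $G_1\subseteq\ldots\subseteq G_{k+1}=G$, level-by-level greedy layers, lazy epoch-based rebuilds with amortized cost $\approx p_i/p_{i-1}$, deamortization, and invoking \Cref{prop:apx-HEDCS} plus \Cref{prop:alg-small-matching,prop:lowdegree}) is the same as the paper's, but there are two genuine gaps. First, you misidentify the exempt set $U$ in \Cref{prop:apx-HEDCS}. Your layers only ever scan sampled edges ($G_i\setminus G_{i-1}$ for $i\le k$), so the edges of $G\setminus G_k$ --- a $(1-p_k)$-fraction of the graph --- and all freshly inserted uncovered edges are never examined; hence $H$ is \emph{not} a \HEDCS{\beta}{k} of the current $G$, contrary to your first paragraph. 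The correct role of $U$ is the set of \emph{currently uncovered} edges (the paper's $U_{k+1}$, maintained incrementally on every insertion and intersected with $G$ before the periodic matching recomputation), which must be handed to the static matching algorithm; your $U$ is instead the set of deleted-but-pretended edges, which plays the opposite role of an additive loss term (the paper's \Cref{lem:sum-Di-small}). As written, recomputing $M$ on ``$H\cup U$'' misses, e.g., a freshly inserted near-perfect matching on previously isolated vertices, so no approximation guarantee follows, and it may even output edges no longer in $G$.

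Second, the sparsification bound --- which you correctly flag as the crux --- is not supported by your construction. ``The greedy would have inserted those edges had they been present'' does not work, because $H_i$ depends on the sample and because degrees both rise and fall during the insert/remove process, so an edge that is underfull with respect to the \emph{final} $H_i$ need not have been underfull at the moment it would have been processed. The paper gets the bound $|U_{i+1}|=\widetilde O(\mu_i\beta^2/p_i)$ from a specific algorithmic ingredient your greedy lacks: \Cref{alg:H} scans $U_i\cap G_i$ in the order of the random ranks and \emph{stops early} after a long run with no underfull edge, so at termination a long uniformly random prefix of $U_i$ certifies (via Chernoff and a union bound over stopping points) that few underfull edges remain among the unsampled edges (\Cref{lem:sparsification}); termination and the run-length threshold in turn rest on the potential-function bound of \Cref{cl:potential}, and the insert-with-removal-of-overfull-edges step is needed so that property $(i)$ and property $(ii)$ can hold simultaneously (your one-pass greedy either leaves level-$i$ edges overfull or skips underfull edges without covering them). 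Smaller issues: your rebuild threshold of ``$\approx\epsilon\mu(G)p_{i-1}/p_i$ deletions hitting $H_i$'' does not yield amortized $\widetilde O(p_i/p_{i-1})$ (the right scale is $\approx\tfrac{\epsilon}{k}\mu_i$ hits to $G_i$, with $\mu_i$ estimated per level as in the paper, and level $1$ additionally needs $\mu(G_1)=\Omega(|G|/\Delta)$ w.h.p., \Cref{lem:subgraphmatching}); and the $m^{1/(2(k+1))}$ branch does not follow from \Cref{prop:dynamic-maximal-matching} --- one needs the arboricity-based marking scheme of Section~\ref{sec:degree-root-m} to pass to a subgraph of maximum degree $O(\sqrt m/\epsilon)$ that preserves $\mu(G)$ up to $1-\epsilon$.
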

\end{highlighttechnical}

For ease of exposition, we have decided not to optimize the $\poly(\epsilon^{-1} \beta k \log n)$ factor in the update-time guarantee of \Cref{thm:dynamicalg}. Note, however, that for our final claimed bounds in \Cref{table:results} we will only need $\epsilon^{-1} \beta k \log n = \poly(\log n)$ and so the update time is $\Ot(\min\{\Delta^{\frac{1}{k+1}}, m^{\frac{1}{2(k+1)}}\})$.

In \Cref{sec:the-algorithm} we present an algorithm that we show obtains the same guarantees as those stated in \Cref{thm:dynamicalg}, except that its {\em amortized} update-time bound is $\Ot(\Delta^{\frac{1}{k+1}})$. We then show in \Cref{sec:worstcase} how the algorithm can be slightly modified to turn this amortized bound to worst-case using standard techniques. Finally, in \Cref{sec:degree-root-m} we show how to ignore some of the edges of the graph as they are inserted, such that the maximum degree remains $O(\sqrt{m}/\epsilon)$ without changing the maximum matching of the graph by much. From this, we get the claimed $\Ot(\min\{\Delta^{\frac{1}{k+1}}, m^{\frac{1}{2(k+1)}}\})$ worst-case update-time guarantee of \Cref{thm:dynamicalg}.

\subsection{High-Level Overview of the Algorithm}\label{sec:alg-highlevel}

Our starting point is a pre-processing algorithm where we construct three sequences of subgraphs $G_i, U_i, H_i$ of $G$ satisfying certain structures. We maintain these subgraphs upon updates too, but in a lazy fashion. The following properties, in particular, continue to hold throughout the algorithm:
\begin{enumerate}[label=$(\roman*)$,itemsep=0pt,topsep=0pt]
	\item $\emptyset = G_0 \subseteq G_1 \subseteq \ldots \subseteq G_{k+1} = G$,
	\item $G = U_1 \supseteq U_2 \supseteq \ldots \supseteq U_{k+1}$, and
	\item $\emptyset = H_0 \subseteq H_1 \subseteq \ldots \subseteq H_{k}$.
\end{enumerate}

Subgraphs $G_i$ are simply edge-sampled random subgraphs of $G$. Specifically, for each edge $e$ we draw a real $\pi_e \sim \unifzeroone$ independently (upon arrival of the edge) and $e$ appears in $G_i$ iff $\pi_e \leq p_i$ for some parameters $p_1 \leq \ldots \leq p_{k+1} = 1$ (defined in \Cref{alg:preprocessing}).

Each subgraph $H_i$, at any time, will be a \HEDCS{\beta}{i} of graph $(G \setminus G_i) \setminus U_{i+1}$ with $(H_1, \ldots, H_i)$ being its hierarchical decomposition (we prove this in \Cref{lem:it-is-HEDCS}). Subgraph $U_{i+1}$, in particular, will include all $(H_{i}, \beta)$-underfull edges in $G \setminus G_i$ at any time. This immediately proves Property $(ii)$ of \HEDCS{\beta}{i} for $H_i$ since no $(H_i, \beta)$-underfull edge is in $(G \setminus G_i) \setminus U_{i+1}$ (as they are all in $U_{i+1}$). 

The final matching $M$ that we output, and lazily maintain, is an (almost) maximum matching of the edges of $H_k \cup U_{k+1}$ that are present in the graph. By our discussion above, $H_k$ will be a \HEDCS{\beta}{k} of $(G \setminus G_k) \setminus U_{k+1}$, thus by the guarantee of HEDCS (\Cref{prop:apx-HEDCS}), $\mu(H_k \cup U_{k+1})$ should well-approximate $\mu(G \setminus G_k) \approx \mu(G)$ where the last almost-equality comes from the fact that $G_k$ includes $p_k = o(1)$ fraction of the edges of $G$ randomly, hence their removal does not change the matching size by much. It would be useful to note here that during the updates, some of the edges of each subgraph $H_i$ may get removed from the graph. We do not remove these edges from $H_i$ immediately. Rather, we recompute $H_i$ frequently enough to ensure that at any point only a small number of its edges have been removed from the graph. We note that the output matching $M$ will not use the edges of $H_k$ that are deleted and will always be a proper matching of $G$.

\newcommand{\mulb}[0]{\ensuremath{10^3 \epsilon^{-1} k \log n}}

\begin{remark}[Assumptions]\label{rem:mularge}
	We make the following assumption throughout the rest of \Cref{sec:dynamicalg} which all hold without loss of generality (w.l.o.g.).

	We assume that $\mu(G) \geq \mu' = \mulb$ throughout the whole sequence of updates. This assumption comes w.l.o.g. since by plugging this value of $\mu'$ in \Cref{prop:alg-small-matching}, we get an algorithm with worst-case update time $O(\log^3 n + k \log n/\epsilon^3)$ that already maintains a $(1-\epsilon)$-approximate matching of $G$ whenever $\mu(G) < \mu'$. 
	
	We assume $\Delta \geq 15 \log n /\epsilon$ as otherwise \Cref{prop:lowdegree} already gives a $(1-\epsilon)$-approximation with $O(\log n / \epsilon)$ worst-case update time. 
	
	We assume $k$ is small enough that $\Delta^{\frac{1}{k+1}} \geq 15 \log n / \epsilon$. If not, we can pick a smaller $k$ that satisfies it and additionally $\Delta^{\frac{1}{k+1}} = O(\log n / \epsilon)$. Note that by picking a smaller $k$ the approximation improves and the update-time of \Cref{thm:dynamicalg} would be $\poly(\epsilon^{-1}\beta k \log n)$ when $\Delta^{\frac{1}{k+1}} = O(\log n / \epsilon)$.
\end{remark}

\vspace{-0.7cm}
\subsection{The Formal Algorithm}\label{sec:the-algorithm}
\vspace{-0.2cm}

We start with the pre-processing algorithm described as \Cref{alg:preprocessing}.

\begin{algorithm}[H]
	\textbf{Input:} A graph $G=(V, E)$ with maximum degree bounded by $\Delta$.
	
	\textbf{Parameters:} Integers $\beta \geq 1$ and $k \geq 1$ and real $\epsilon \in (0, 1)$.
	
	For any $i \in [k]$ let $p_i := \epsilon \cdot \Delta^{\frac{i}{k+1}-1}$ and let $p_{k+1} = 1$. \Comment{See \Cref{cl:pvalues}.}
	
	Draw a random ranking $\pi$, i.e., for each edge $e \in E$ draw $\pi_e \sim \unifzeroone$ independently.
	
	For any $i \in [k+1]$ let $G_i$ be the subgraph of $G$ including all edges $e$ with $\pi_e \leq p_i$.
	
	Let $U_1 \gets G$, let $G_0 \gets \emptyset$, and let $H_0 \gets \emptyset$.
	
	Run $\computeLayers{1}$ (formalized as \Cref{alg:computeLayers}) to generate subgraphs $H_1, \ldots, H_{k+1}$, $U_2, \ldots, U_{k+1}$, integers $\mu_1, \ldots, \mu_{k+1}$,  and the output matching $M$.
	\caption{$\PreProcessing{G}$}
	\label{alg:preprocessing}	
\end{algorithm}

Subroutine \computeLayers{j} formalized below in \Cref{alg:computeLayers} is called both in pre-processing \Cref{alg:preprocessing} (for $j=1$) and during the update time (for various $j \in [k+1]$). 

\begin{algorithm}[H]	
	\For{$i$ in $j, \ldots, k$}{	
		$\mu_i \gets \maximalmatchingsize{G_i}$ \Comment{See paragraph ``Maintaining $\maximalmatchingsize{G_i}$'' below.}
		
		$H_i \gets \addLayer{U_i \cap G_i, H_{i-1}, \mu_i}$. \Comment{Formalized as \Cref{alg:H}.}

		Let $U_{i+1}$ be the graph including every edge in $U_i \setminus G_i$ that is $(H_i, \beta)$-underfull.
	}

	$\mu_{k+1} \gets \maximalmatchingsize{G_{k+1}}$ \Comment{See paragraph ``Maintaining $\maximalmatchingsize{G_i}$'' below.}
	
	$M \gets$ a $(1-\epsilon)$-approximate max matching of $(H_{k} \cup U_{k+1}) \cap G$ computed via \Cref{prop:apxmatching}.
	
	\caption{$\computeLayers{j}$.}
	\label{alg:computeLayers}
\end{algorithm}

The next subroutine \addLayer{\Gamma, H_{i-1}, \mu_i}, which is called only from \Cref{alg:computeLayers}, starts with $H_i \gets H_{i-1}$, then iterates over the edges of $\Gamma$ in the increasing order of ranks $\pi$ (the same rank function as in \Cref{alg:preprocessing}), adds each encountered underfull edge to to $H_i$ (and removes their incident overfull edges that belong to $H_i \setminus H_{i-1}$,  if any). After the algorithm iterates over sufficiently many edges without detecting any underfull edges, it returns $H_i$.

{
\begin{algorithm}[H]	
	Let $H_i \gets H_{i-1}$, $\eta \gets 0$.
	
	\SetKwBlock{Fna}{\textnormal{Iterate over the edges of $\Gamma$ in the increasing order of $\pi$. Upon visiting an edge $e=(u, v)$:\label{line:loopalgH}}}{}
	\Fna{
	
	$\eta \gets \eta+1$.
	
	\If{$e$ is $(H_{i}, \beta)$-underfull\label{line:proc-underfull-condition}}{
		Add $e$ to $H_i$.\label{line:addToH-algH}
		
		If there exists any $(H_i, \beta)$-overfull edge $(u, w) \in H_i \setminus H_{i-1}$ remove one arbitrarily.

		If there exists any $(H_i, \beta)$-overfull edge $(v, w) \in H_i \setminus H_{i-1}$ remove one arbitrarily.
		
		$\eta \gets 0$.
	}
	
	\If{$\eta > \lfloor |\Gamma|/(4 \mu_i \beta^2 + 1) \rfloor$\label{line:termination-condition}}{
		\smallskip
		\Return $H_i$. \Comment{\Cref{cor:algreturnsH} guarantees that we reach this line eventually.}\label{line:returnH}
	}
	}
	\caption{$\addLayer{\Gamma, H_{i-1}, \mu_i}$. Input $\Gamma$ will be $U_i \cap G_i$ when called.}
	\label{alg:H}
\end{algorithm}
}

\paragraph{Handling edge updates:} We would like to maintain the same output as that of \Cref{alg:preprocessing}. However, to optimize the update-time we handle most edge updates in a lazy fashion. Only the following trivial updates are done immediately upon insertion/deletion:
\begin{itemize}
	\item Upon insertion of an edge $e$ to $G$, we add $e$ to any graph $U_{i+1}$ where $e$ is $(H_{i}, \beta)$-underfull and $e \not\in G_i$. Moreover, we immediately draw the rank $\pi_e \sim \unifzeroone$ for $e$ and for any $i \in [k+1]$ with $\pi_e \leq p_i$, we add $e$ to graph $G_i$.
	\item Upon deletion of an edge $e$ from $G$, we immediately remove $e$ from any of $G_1, \ldots, G_{k+1}$, $U_1, \ldots, U_{k+1}, M$ that includes $e$. Note that we do {\em not} remove $e$ from $H_1, \ldots, H_k$.
\end{itemize}
By storing all the graphs in the adjacency-list format and storing each adjacency-list as a balanced binary search tree, these operations can easily be implemented in $O(k \log n)$ time per update.

As discussed, the more time-consuming updates are done lazily. Particularly, for any $i \in [k + 1]$ we keep a counter $c_i$ that is initially zero after the pre-processing step. Then upon every update we set $c_i \gets c_i + 1$ for all $i \in [k + 1]$. For any $j \in [k+1]$ immediately after condition $c_j \geq \frac{\epsilon}{k} \cdot \frac{\mu_j + 1}{p_j}$ holds, we reset $c_j, c_{j+1}, \ldots, c_{k+1}$ to zero and call subroutine $\computeLayers{j}$ of \Cref{alg:computeLayers} to recompute $H_j, \ldots, H_k$, $U_{j+1}, \ldots, U_{k+1}$, and $M_j, \ldots, M_{k+1}$. (If for multiple $j$ the condition $c_j \geq \frac{\epsilon}{k} \cdot \frac{\mu_j + 1}{p_j}$ holds at once, we apply the procedure on the minimum such $j$.)

\paragraph{Maintaining $\maximalmatchingsize{G_i}$:} In \Cref{alg:computeLayers} we need to compute the size of a maximal matching $\maximalmatchingsize{G_i}$ for any $i \in \{j, \ldots, k+1\}$. This can of course be done in time $|G_i|$ for each $i$ by iterating over the edges of $G_i$ and constructing a maximal matching greedily. However, we need a more efficient algorithm. To do this, for each $i \in [k+1]$ we maintain a maximal matching of $G_i$ after each edge update along with its size. This can be done in $\poly(\log n)$ worst-case update-time for each $G_i$ using \Cref{prop:dynamic-maximal-matching}. Hence, whenever we call \Cref{alg:computeLayers} we have $\maximalmatchingsize{G_i}$ readily computed. That is, it can be accessed in $O(1)$ time, without the need to go over all edges of $G_i$ to construct the maximal matching. The cost is only an additive worst-case $k \cdot \poly(\log n)$ factor to the final update-time. It is worth noting that instead of $\maximalmatchingsize{G_i}$, any $O(1)$-approximation of $\mu(G_i)$ would suffice for our purpose and the maximality is not important. We decided to use maximal matching algorithms since they are already fast enough.

In \Cref{sec:basic-props} we state some basic properties of the algorithm described above. In \Cref{sec:update-time} we analyze the update-time of this algorithm. Then in \Cref{sec:approx-fk} we analyze its approximation.

\subsection{Basic Properties of the Algorithm of \Cref{sec:the-algorithm}}\label{sec:basic-props}

In this section we prove a number of basic properties of the algorithm that we later use to analyze its update-time and approximation.

\begin{claim}\label{cl:pvalues}
	Values of $p_1, \ldots, p_{k+1}$ are set in \Cref{alg:preprocessing} such that they satisfy the following:
	\begin{enumerate}[label=$(\roman*)$, itemsep=0pt, topsep=0pt]
		\item $p_1 \leq p_2 \leq \ldots \leq p_{k+1} = 1$.
		\item $p_1 \geq 15 \frac{\log n}{\Delta}$.
		\item $p_i/p_{i-1} = O(\Delta^{\frac{1}{k+1}} / \epsilon)$ for all $i \geq 2$.
		\item $p_k \leq \epsilon$.
		\item $\frac{p_i - p_{i-1}}{1 - p_{i-1}} \geq p_i/2$ for all $i \geq 2$.
	\end{enumerate}
\end{claim}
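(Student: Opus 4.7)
The plan is to verify each of the five properties by direct substitution of the definitions $p_i = \epsilon \cdot \Delta^{i/(k+1)-1}$ (for $i \in [k]$) and $p_{k+1} = 1$, invoking the standing assumptions of \Cref{rem:mularge}, namely $\Delta^{1/(k+1)} \ge 15\log n/\epsilon$ and $\epsilon \in (0, 1/12)$. None of the five items requires more than elementary algebra, so I would group them and dispatch the routine ones quickly before spending a little care on $(v)$.

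For $(i)$ and $(iv)$: for $i \in \{2, \ldots, k\}$ the consecutive ratio $p_i/p_{i-1} = \Delta^{1/(k+1)} \ge 1$ is immediate from the definition, so $p_1 \le \cdots \le p_k$. Property $(iv)$ is then $p_k = \epsilon \Delta^{-1/(k+1)} \le \epsilon$, and combining this with $\epsilon < 1$ yields $p_k < 1 = p_{k+1}$, closing the chain in $(i)$. For $(ii)$, I would rewrite $p_1 = \epsilon \Delta^{1/(k+1)}/\Delta$ and apply the standing lower bound $\Delta^{1/(k+1)} \ge 15\log n/\epsilon$ directly to conclude $p_1 \ge 15\log n/\Delta$. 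For $(iii)$, I would split on $i$: for $i \le k$ the ratio $p_i/p_{i-1}$ equals exactly $\Delta^{1/(k+1)}$, while for $i = k+1$ it equals $\Delta^{1/(k+1)}/\epsilon$; both are $O(\Delta^{1/(k+1)}/\epsilon)$.

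The only item with any manipulation is $(v)$. For $i = k+1$ the statement is trivial: $(p_{k+1} - p_k)/(1 - p_k) = 1 \ge 1/2 = p_{k+1}/2$. For $i \le k$, I would argue $(p_i - p_{i-1})/(1 - p_{i-1}) \ge p_i - p_{i-1}$ since $1 - p_{i-1} \le 1$, and then use $p_i/p_{i-1} = \Delta^{1/(k+1)} \ge 2$ (a consequence of $\Delta^{1/(k+1)} \ge 15\log n/\epsilon$ in \Cref{rem:mularge}) to get $p_{i-1} \le p_i/2$, so $p_i - p_{i-1} \ge p_i/2$.

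There is essentially no obstacle; the whole statement is a bookkeeping check. The only place a reader may pause is $(v)$, where one must observe that the geometric spacing of the $p_i$'s---forced by the lower bound on $\Delta^{1/(k+1)}$---makes the additive gap $p_i - p_{i-1}$ at least $p_i/2$, and the denominator $1 - p_{i-1} \le 1$ only strengthens the bound.
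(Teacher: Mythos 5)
Your proof is correct and follows essentially the same route as the paper: direct substitution of $p_i = \epsilon\Delta^{i/(k+1)-1}$ together with the standing assumption $\Delta^{1/(k+1)} \geq 15\log n/\epsilon$, with item $(v)$ handled by the same chain $\frac{p_i-p_{i-1}}{1-p_{i-1}} \geq p_i - p_{i-1} \geq p_i - p_i/\Delta^{1/(k+1)} \geq p_i/2$. The only cosmetic difference is that you treat $i=k+1$ as a separate (trivial) case in $(iii)$ and $(v)$, whereas the paper's single computation covers all $i \geq 2$ uniformly.
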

\begin{proof}
	Note from \Cref{alg:preprocessing} that $p_i := \epsilon \cdot \Delta^{\frac{i}{k+1}-1}$ for $i \in [k]$ and $p_{k+1} = 1$. From this definition, we immediately get $p_1 \leq \ldots \leq p_k \leq \epsilon < p_{k+1} = 1$, hence proving part $(i)$. 
	
	Property $(ii)$ follows since $p_1 = \epsilon \cdot \Delta^{\frac{1}{k+1} -1}$ and we assumed in \Cref{rem:mularge} that $\Delta^{\frac{1}{k+1}} \geq 15 \log n / \epsilon$.
	
	For Property $(iii)$, note that $p_i / p_{i-1} \leq \Delta^{\frac{i}{k+1}-1} / (\epsilon \Delta^{\frac{i-1}{k+1} - 1}) = O(\Delta^{\frac{1}{k+1}}/\epsilon).$
	
	Property $(iv)$ trivially holds since $p_k = \epsilon \Delta^{\frac{k}{k+1} -1} \leq \epsilon$.
	
	Property $(v)$ holds since $\frac{p_i - p_{i-1}}{1 - p_{i-1}} \geq p_i - p_{i-1} \geq p_i - p_i/\Delta^{\frac{1}{k+1}} \geq p_i/2$. (The latter holds so long as $\Delta^{\frac{1}{k+1}} \geq 2$ and recall that in \Cref{rem:mularge} we assume it is indeed much larger.)
\end{proof}

\begin{claim}\label{obs:monotonicity}
	It holds at all times that:
	\begin{enumerate}[label=$(\roman*)$, itemsep=0pt, topsep=0pt]
		\item $G_1 \subseteq G_2 \subseteq \ldots \subseteq G_{k+1} = G$.
		\item $\emptyset = H_0 \subseteq H_1 \subseteq H_2 \subseteq \ldots \subseteq H_k$.
		\item $G = U_1 \supseteq U_2 \supseteq \ldots \supseteq U_{k+1}$.
		\item For any $i \in [k]$, $U_{i+1}$ is the set of all $(H_i, \beta)$-underfull edges in $G \setminus G_i$.
		\item $H_i \subseteq G_i$ for all $i \in \{0, \ldots, k\}$.
	\end{enumerate}
\end{claim}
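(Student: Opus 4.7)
The plan is to prove all five parts simultaneously as invariants, verifying that each holds $(a)$ right after the preprocessing call \computeLayers{1}, $(b)$ after each insertion or deletion in the ``Handling edge updates'' paragraph, and $(c)$ after every subsequent call to \computeLayers{j}.

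Part $(i)$ is almost immediate: by \Cref{cl:pvalues}$(i)$, $p_1 \leq \dots \leq p_{k+1}=1$, and by the insertion rule, $e$ is placed in $G_i$ iff $\pi_e \leq p_i$; thus $G_i \subseteq G_{i+1}$ is preserved by insertions, and clearly by deletions. Part $(v)$ follows by induction on $i$: $H_0=\emptyset \subseteq G_0$; inside each \addLayer{$U_i \cap G_i, H_{i-1}, \mu_i$} we start with $H_i \gets H_{i-1}$, which by induction and Part $(i)$ is a subset of $G_{i-1} \subseteq G_i$, and we only add edges taken from $\Gamma = U_i \cap G_i \subseteq G_i$; between recomputations $H_i$ is unchanged. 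Part $(ii)$ rests on two facts built into the algorithm: edges are never removed from any $H_j$ on a deletion (only on additions inside \addLayer{}), and \addLayer{} removes only edges of $H_i \setminus H_{i-1}$, never touching $H_{i-1}$. Hence the initialization $H_i \gets H_{i-1}$ makes $H_i \supseteq H_{i-1}$ hold at the end of each \addLayer{} call, and since \computeLayers{j} recomputes all higher layers together, no subsequent change can break the inclusion.

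Parts $(iii)$ and $(iv)$ are best proved together. The key invariant to establish is: \emph{for every $i \in [k]$, $U_{i+1}$ equals the set of $(H_i,\beta)$-underfull edges of $G \setminus G_i$}. Granting this, Part $(iv)$ is immediate, and Part $(iii)$ follows: if $e \in U_{i+1}$ then $e \notin G_i$ so $e \notin G_{i-1}$ by Part $(i)$, and $\deg_{H_{i-1}}(e) \le \deg_{H_i}(e) < \beta-1$ by Part $(ii)$, so $e$ lies in $U_i$ by the invariant applied at level $i-1$. (The base case $U_1 = G$ is handled separately, noting every edge is vacuously $(H_0,\beta)$-underfull for $\beta \geq 2$ since $H_0 = \emptyset$.)

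Verifying the invariant itself is the main bookkeeping step. After preprocessing it is forced by how \computeLayers{1} sets each $U_{i+1}$ inductively. After a deletion of $e$, $H_i$ is unaffected, so no new edges become underfull and removing $e$ from all $U_j$'s preserves the equality. After a call to \computeLayers{j}, one proceeds by induction on $i \geq j$: $U_i$ already satisfies the invariant (untouched if $i = j$, freshly built otherwise), and the line ``$U_{i+1}\gets$ edges in $U_i \setminus G_i$ that are $(H_i,\beta)$-underfull'' together with $H_{i-1}\subseteq H_i$ (Part $(ii)$) and $G_{i-1}\subseteq G_i$ (Part $(i)$) shows every $(H_i,\beta)$-underfull edge of $G \setminus G_i$ does lie in $U_i \setminus G_i$ and is thus captured.

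The one genuinely delicate case, and the place I expect the argument to require the most care, is \textbf{maintaining the invariant under edge insertions}. The insertion rule adds $e$ to $U_{i+1}$ whenever $e$ is $(H_i,\beta)$-underfull and $e \notin G_i$, independently for each $i$. To confirm this is enough, one has to check that this local rule is consistent across levels: if $e$ is added to $U_{i+1}$, then it must also be added to $U_i$ so that Part $(iii)$ holds; and if $e$ is \emph{not} added to $U_{i+1}$, it must be because either $e \in G_i$ or $\deg_{H_i}(e) \geq \beta-1$, both of which exclude $e$ from the intended right-hand side. Both follow from $H_{i-1} \subseteq H_i$ and $G_{i-1} \subseteq G_i$, so once Parts $(i)$ and $(ii)$ are secured the insertion case closes. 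Everything else reduces to routine inspection of the pseudocode.
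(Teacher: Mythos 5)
Your proof is correct and takes essentially the same route as the paper's: part $(i)$ from the monotonicity of the $p_i$'s, parts $(ii)$ and $(v)$ by induction from the structure of \addLayerNoInput{} (it starts from $H_{i-1}$, adds only edges of $U_i \cap G_i$, and removes only edges of $H_i \setminus H_{i-1}$), part $(iv)$ as an invariant maintained through preprocessing, lazy updates, and recomputations, and part $(iii)$ deduced from $(iv)$ using $(i)$ and $(ii)$. Your write-up is simply more explicit than the paper about the insertion/deletion bookkeeping for the invariant in $(iv)$ (e.g., that deletions never create new underfull edges because the $H_i$'s are not modified), which the paper dispatches in a single sentence.
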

\begin{myproof}
	Property $(i)$ follows from \Cref{cl:pvalues} part $(i)$ since $G_i$ simply includes an edge $e$ iff $\pi_e \leq p_i$, at all times.
	
	For Property $(ii)$, observe that $H_i$'s are only modified in \Cref{alg:computeLayers}. Particularly $H_i$ is obtained by calling \addLayer{U_i \cap G_i, H_{i-1}, \mu_i} which only adds some of the edges of $U_i \cap G_i$ to $H_{i-1}$. Additionally, any time that $H_i$ is recomputed, all of $H_i, \ldots, H_k$ are also recomputed. Hence, the property continues to hold at all times.
	
	We prove Property $(iv)$ by induction. The base case $i=0$ holds trivially since $U_1$ is always equal to $G$. Now observe that we set $U_{i+1}$ in \Cref{alg:computeLayers} to be the graph including $(H_i, \beta)$-underfull edges in $U_i \setminus G_i$. By induction hypothesis, $U_i$ is the subgraph of $(H_i, \beta)$-underfull edges in $G \setminus G_{i-1}$. Hence, $U_{i+1}$ includes every $(H_i, \beta)$ underfull edge in $(G \setminus G_{i-1}) \setminus G_i = G \setminus G_i$ (by Property $(i)$). During the update time, we maintain the invariant that $U_{i+1}$ is the set of all $(H_i, \beta)$-underfull edges not in $G_i$ for every update. Hence, the property holds at all times.
		
	For Property $(iii)$, note that if $e \in U_{i+1}$, then it must be $(H_{i}, \beta)$-underfull and in $G \setminus G_i$ by Property $(iv)$. Since $H_{i-1} \subseteq H_{i}$ by Property $(ii)$, then $e$ is also $(H_{i-1}, \beta)$-underfull and definitely in $G \setminus G_{i-1}$ (since $G_{i-1} \subseteq G_i$ by Property $(i)$). This means $e$ should also belong to $U_i$.
	
	We prove Property $(v)$ by induction on $i$. The base case $i = 0$ trivially holds since $H_0 = \emptyset$. Now observe that subgraph $H_i$ is obtained in \Cref{alg:H} by adding some edges of $U_i \cap G_i$ to $H_{i-1}$. This means $H_i \subseteq (G_i \cap U_i) \cup H_{i-1} \subseteq G_i \cup H_{i-1} \subseteq G_i \cup G_{i-1} = G_i$ where the last two follow from the induction hypothesis and Property $(i)$ respectively.
\end{myproof}

The next observation will play an important role later in \Cref{lem:it-is-HEDCS} where we argue that $H_k$ is a \HEDCS{\beta}{k} of $(G \setminus G_k) \setminus G_{k+1}$.

\begin{observation}\label{obs:H-no-overfull}
	By the end of every iteration of the loop in \Cref{line:loopalgH} of \Cref{alg:H} (and thus by the end of the whole algorithm also), the subgraph $H_i \setminus H_{i-1}$ includes no $(H_i, \beta)$-overfull edge.
\end{observation}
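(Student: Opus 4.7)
The plan is to prove the observation by induction on iterations of the loop in \Cref{line:loopalgH}. The base case is immediate: before the loop, $H_i = H_{i-1}$, so $H_i \setminus H_{i-1}$ is empty and trivially contains no overfull edge. For the inductive step I would assume the invariant holds at the end of some iteration and analyze the next iteration on an edge $e = (u, v) \in \Gamma$. If $e$ is not $(H_i, \beta)$-underfull then $H_i$ is not touched and the invariant carries over. Otherwise, write $a = \deg_{H_i}(u)$ and $b = \deg_{H_i}(v)$ at the start of the iteration; the underfullness condition gives $a + b \leq \beta - 2$. After inserting $e$ into $H_i$ we have $\deg_{H_i}(e) = (a+1)+(b+1) \leq \beta$, so $e$ itself is not overfull, and any edge whose edge-degree has changed must be incident to $u$ or $v$.

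The main quantitative step---and the heart of the proof---is a pinching argument which shows that removing a single overfull edge at $u$ suffices to clear \emph{all} overfull edges at $u$ in $H_i \setminus H_{i-1}$. Concretely, consider any $(u, w) \in H_i \setminus H_{i-1}$ with $w \neq v$ that is overfull immediately after inserting $e$. The inductive hypothesis gives $a + \deg_{H_i}(w) \leq \beta$ (its edge-degree was at most $\beta$ at the start of the iteration), while overfullness yields $(a+1) + \deg_{H_i}(w) \geq \beta + 1$. Together these pin $\deg_{H_i}(w) = \beta - a$ exactly. Hence, after the algorithm removes any one overfull edge $(u, w_1)$, the degree $\deg_{H_i}(u)$ returns to $a$, and every remaining overfull edge $(u, w_j)$ (with $w_j \neq w_1$) has edge-degree $a + (\beta - a) = \beta$, which is no longer overfull. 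Any other edge at $u$ in $H_i \setminus H_{i-1}$ that was within the $\beta$ bound at the start is still within it, since $\deg_{H_i}(u)$ is back to its pre-iteration value and no other vertex's degree has increased.

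A symmetric pinching then handles the subsequent removal at $v$: after the removal at $u$, $\deg_{H_i}(v) = b + 1$, and any overfull $(v, w') \in H_i \setminus H_{i-1}$ with $w' \neq u$ must satisfy $\deg_{H_i}(w') = \beta - b$, so removing one such edge restores the invariant at $v$. The only aspect I expect to require a little care is the interaction between the two removals, for example when a single vertex plays the role of both $w_1$ and $w_1'$, or when the removal at $u$ decreases the degree of an endpoint of a candidate overfull edge at $v$; since removals only decrease degrees, they can never create new overfull edges, and a short case check confirms that the final state has no overfull edge in $H_i \setminus H_{i-1}$. The natural worry---that inserting $e$ could create \emph{several} simultaneously overfull edges at $u$, making a single removal look insufficient---is precisely what the pinching argument dissolves, since all such edges share the critical partner-degree $\beta - a$ and are therefore knocked down in lockstep by a single removal.
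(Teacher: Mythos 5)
Your proof is correct and follows essentially the same route as the paper: induction over the loop iterations, noting that an inserted underfull edge cannot itself become overfull, and that removing one overfull edge per endpoint restores the invariant. Your pinching argument (that every overfull edge at $u$ has partner degree exactly $\beta - a$, so a single removal fixes them all simultaneously) simply makes explicit the step the paper's proof states tersely, and it is a valid and welcome elaboration.
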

\begin{myproof}
	We prove by induction on the number of iterations. Initially $H_i \setminus H_{i-1}$ is empty and so the claim holds. Once we add an edge $e$ to $H_i$, $e$ must be $(H_i, \beta)$-underfull meaning that $\deg_{H_i}(e) < \beta - 1$. Hence, after inserting $e$ to $H_i$, we have $\deg_{H_i}(e) < \beta + 1$ which means $e$ is not $(H_i, \beta)$-overfull. Note, however, that adding $e$ to $H_i$ increases its endpoints' degrees in $H_i$ by one. This may lead to $(H_i, \beta)$-overfull edges in $H_i \setminus H_{i-1}$ that are incident to $e$, but removing any one such edge from each endpoint of $e$ ensures that $H_i \setminus H_{i-1}$ remains to have no $(H_i, \beta)$-overfull edges.
\end{myproof}

We will also use the following upper bound on the maximum degree of any $H_i$ several times.

\begin{observation}\label{obs:degree-of-H}
	For any $i \in \{0, \ldots, k\}$, the maximum degree of $H_i$ is at most $\beta$ at all times.
\end{observation}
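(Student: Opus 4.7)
The plan is a straightforward induction on the level index $i$. The base case $i = 0$ is immediate since $H_0 = \emptyset$ has maximum degree $0 \leq \beta$. For the inductive step, I will assume that at all times the maximum degree of $H_{i-1}$ is at most $\beta$, and derive the same bound for $H_i$.

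The first step is to notice that $H_i$ is modified only inside calls to \addLayerNoInput{} issued from \computeLayersNoInput{}; the update-handling description in \Cref{sec:the-algorithm} explicitly refrains from removing edges from the $H_j$'s on deletions to $G$, so between recomputations the maximum degree of $H_i$ is constant. Each call to $\addLayer{U_i \cap G_i, H_{i-1}, \mu_i}$ initializes $H_i \leftarrow H_{i-1}$, which has max degree $\leq \beta$ by the inductive hypothesis. It therefore suffices to verify that the bound is preserved by every iteration of the loop on \Cref{line:loopalgH}.

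Each iteration either does nothing, adds a single edge, or additionally removes one or two overfull edges from $H_i \setminus H_{i-1}$. Removals can only decrease degrees, so they preserve the bound. For an addition, the test on \Cref{line:proc-underfull-condition} guarantees that the incoming edge $e = (u,v)$ is $(H_i, \beta)$-underfull, i.e.\ $\deg_{H_i}(u) + \deg_{H_i}(v) < \beta - 1$. In particular $\deg_{H_i}(u), \deg_{H_i}(v) \leq \beta - 2$ just before the insertion, so after adding $e$ both become $\leq \beta - 1 \leq \beta$. Since the degrees of vertices other than $u$ and $v$ are unaffected by this step, and vertices already at degree $\beta$ in $H_{i-1}$ can never be endpoints of a newly added edge (their incident edges automatically fail the underfull test), the global maximum degree of $H_i$ remains $\leq \beta$.

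The one subtlety worth double-checking is that the inductive hypothesis on $H_{i-1}$ is indeed valid at exactly the moments we invoke \addLayerNoInput{} for level $i$. Since \computeLayersNoInput{} recomputes layers in increasing order of $i$, and a triggered recomputation at level $j$ re-runs only layers $\geq j$, the state of $H_{i-1}$ at the start of any $H_i$-rebuild is one that the induction has already certified. No circularity arises, so the routine verification above is all that is required; I do not anticipate any genuine obstacle beyond recording these simple case distinctions carefully.
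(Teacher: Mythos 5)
Your proof is correct and follows essentially the same route as the paper: induction on the level $i$, reducing to what happens inside a call to \addLayerNoInput{} since $H_i$ is only ever modified there. The sole cosmetic difference is that the paper settles the inductive step by contradiction using \Cref{obs:H-no-overfull} (no $(H_i, \beta)$-overfull edge survives in $H_i \setminus H_{i-1}$), whereas you re-derive the degree bound directly by maintaining the invariant through each iteration of the loop in \Cref{alg:H}.
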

\begin{myproof}
	We prove by induction on $i$. For the base case $H_0 = \emptyset$ and so the claim clearly holds. Suppose, now, that the claim holds for $H_{i-1}$, we prove it for $H_i$. Consider a call to \Cref{alg:computeLayers} where we set $H_i \gets \addLayer{U_i \cap G_i, H_{i-1}, \mu_i}$. It suffices to prove that at this point, $H_i$ has maximum degree $\beta$ since until $H_i$ gets re-computed, we may only remove edges from it.
	
	Suppose toward contradiction that $\deg_{H_i}(v) > \beta$ for some vertex $v$. Since $\deg_{H_{i-1}}(v) \leq \beta$ by the induction hypothesis and $H_{i-1} \subseteq H_i$ by \Cref{obs:monotonicity}, $v$ must have an edge $e=(v, u) \in H_i \setminus H_{i-1}$. Moreover, $\deg_{H_i}(e) = \deg_{H_i}(u) + \deg_{H_i}(v) \geq \deg_{H_i}(v) \geq \beta + 1$ and so $e$ must be $(H_i, \beta)$-overfull. This, however, contradicts \Cref{obs:H-no-overfull}. Hence $\deg_{H_i}(v) \leq \beta$ for all $v$.
\end{myproof}

The following claim bounds the number of times that we encounter an $(H_i, \beta)$-underfull edge in \Cref{alg:H} by $4 \mu_i \beta^2$. The proof is based on a potential function used previously for EDCS by \cite{BernsteinSteinSODA16,AssadiBBMSSODA19,AssadiBernsteinSOSA19,BernsteinICALP20} with a simple additional idea that bounds the number of edges of $H_i$ by $2\mu_i \beta$. For completeness, we provide the full proof in \Cref{sec:proofs}.

\begin{claim}\label{cl:potential}
	\Cref{alg:H} reaches \Cref{line:addToH-algH} at most  $4\mu_i\beta^2$ times.
\end{claim}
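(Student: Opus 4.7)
The plan is a potential-function argument of the type used in the EDCS analyses cited in the excerpt, with a potential tailored to directly count additions. I would define
$$
\Phi(H) := 2\beta |H| - \sum_{v \in V} \deg_H(v)^2,
$$
and track its value during one call to \Cref{alg:H}. Call the execution triggered by reaching \Cref{line:addToH-algH} an \emph{event}: it consists of adding one underfull edge $e$ and then removing at most two overfull edges from $H_i \setminus H_{i-1}$. I would establish (a) each event changes $\Phi$ by at least $+2$, and (b) $0 \le \Phi(H_i) \le 8\mu_i \beta^2$ throughout the loop; together these force the number of events to be at most $4\mu_i\beta^2$.

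For (a), the computation is routine. Adding an edge $e=(u,v)$ changes $\Phi$ by $2\beta - 2\deg_H(e) - 2$, where $\deg_H(e)$ is measured just before the addition; the precondition on \Cref{line:proc-underfull-condition} gives $\deg_H(e) \le \beta - 2$, hence this change is at least $2$. Removing an edge $(u,w)$ changes $\Phi$ by $-2\beta + 2 \deg_H((u,w)) - 2$, where the edge-degree is measured just before the removal; the algorithm only removes $(H_i,\beta)$-overfull edges, so $\deg_H((u,w)) \ge \beta+1$ and this change is $\ge 0$. Summing, each event changes $\Phi$ by at least $2$.

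For (b), I would first argue that the maximum degree of $H_i$ stays at most $\beta$ throughout the loop: the initial state $H_i = H_{i-1}$ satisfies this by \Cref{obs:degree-of-H}, and any edge added is underfull (edge-degree $\le \beta - 2$), so both endpoints had degree at most $\beta-2$ before and at most $\beta-1$ after. This gives $\sum_v \deg_{H_i}(v)^2 \le \beta \sum_v \deg_{H_i}(v) = 2\beta |H_i|$, hence $\Phi(H_i) \ge 0$; in particular the starting value $\Phi(H_{i-1})$ is non-negative. Next, since $H_i \subseteq G_i$ at all times and the endpoints of any maximum matching form a vertex cover, $|H_i| \le 2\mu(H_i)\beta \le 2\mu(G_i)\beta \le 4\mu_i\beta$, where the last inequality uses that the maximal-matching size $\mu_i = \widetilde{\mu}(G_i)$ is a $2$-approximation of $\mu(G_i)$. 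Combining, $\Phi(H_i) \le 2\beta |H_i| \le 8\mu_i\beta^2$.

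I do not expect any serious obstacle: the delicate step is really just the choice of $\Phi$, and everything else is arithmetic plus the two structural facts (maximum degree $\le \beta$ and $|H_i| \le 4\mu_i\beta$) above. Putting the pieces together, over a single call to \Cref{alg:H} the total increase in $\Phi$ is at most $8\mu_i\beta^2 - 0 = 8\mu_i\beta^2$, and each event contributes at least $2$, so \Cref{line:addToH-algH} is reached at most $4\mu_i\beta^2$ times.
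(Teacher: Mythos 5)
Your proof is correct and is essentially the paper's own argument: since $\sum_v \deg_H(v)^2$ equals the total edge-degree $\sum_{e\in H}\deg_H(e)$, your potential is the standard EDCS potential $c\,|H|-\sum_{e}\deg_H(e)$ used in the paper (with $c=2\beta$ instead of $2\beta-1$), analyzed via the same monotonicity-under-insertions/removals and the same vertex-cover bound on $|H_i|$. The only differences are cosmetic constants — a per-event increment of $2$ rather than $1$, and the slightly lossier bound $|H_i|\le 4\mu_i\beta$ via $\mu(H_i)\le\mu(G_i)\le 2\mu_i$ instead of the paper's direct $|H_i|\le 2\mu_i\beta$ — which cancel to give the same $4\mu_i\beta^2$ bound.
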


As an immediate corollary of \Cref{cl:potential} we get that \Cref{alg:H} always terminates:

\begin{corollary}[of \Cref{cl:potential}]\label{cor:algreturnsH}
	\Cref{alg:H} reaches \Cref{line:returnH} with probability one.
\end{corollary}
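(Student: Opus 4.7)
The plan is to combine Claim~\ref{cl:potential} with a simple counting argument. Since $\Gamma$ is finite, the loop at line~\ref{line:loopalgH} visits at most $|\Gamma|$ edges and the algorithm therefore terminates in finite time; the content of the corollary is that this termination must occur via the return at line~\ref{line:returnH}, i.e., the loop cannot exhaust $\Gamma$ without the threshold check firing first.

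First I would invoke Claim~\ref{cl:potential} to observe that the branch at line~\ref{line:proc-underfull-condition} --- the only place where $\eta$ is reset to $0$ --- is executed at most $4\mu_i \beta^2$ times during the execution. Hence the iterations of the loop split into at most $4\mu_i \beta^2 + 1$ consecutive ``runs'' separated by resets of $\eta$, and within each such run $\eta$ strictly increases by exactly one per iteration. Now suppose, for the sake of contradiction, that the algorithm never reaches line~\ref{line:returnH}. Then it must process all of $\Gamma$, and within every run $\eta$ must remain at most $T := \lfloor |\Gamma| / (4\mu_i \beta^2 + 1) \rfloor$, for otherwise the threshold check at line~\ref{line:termination-condition} would fire. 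A pigeonhole on the $|\Gamma|$ iterations spread across at most $4\mu_i\beta^2 + 1$ runs then forces some run to contain strictly more than $T$ iterations, which in turn causes $\eta$ to exceed $T$ in that run and triggers a return at line~\ref{line:returnH}, contradicting our supposition.

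The only subtlety --- and the only place where I expect any real work --- is carefully handling the near-divisibility edge case where $|\Gamma|$ is close to an integer multiple of $4\mu_i\beta^2 + 1$, since in that regime pigeonhole only guarantees a run of size at least $T$ rather than $T+1$. I would resolve this by distinguishing the ``final'' run (the one following the last reset of $\eta$, which does not itself terminate with an addition) from the earlier ones: since the final run contributes all of its iterations as purely dry, a slightly sharper accounting that charges exactly one ``addition iteration'' to each non-final run and nothing to the final one yields the required strict inequality and hence the threshold-triggered return at line~\ref{line:returnH}.
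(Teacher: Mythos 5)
Your route is the same as the paper's: assume the return at \Cref{line:returnH} is never reached, invoke \Cref{cl:potential} to bound the number of $\eta$-resets by $4\mu_i\beta^2$, and derive a contradiction by counting iterations against the threshold $T := \lfloor |\Gamma|/(4\mu_i\beta^2+1)\rfloor$. The gap is in the final counting step, and the fix you sketch does not close it. First, a run that ends with an addition iteration can contain $T+1$ iterations without ever tripping the test in \Cref{line:termination-condition}: on the addition iteration $\eta$ is incremented and then reset to $0$ \emph{before} the comparison, so during such a run the check only ever sees the values $1,\ldots,T$ (in particular, ``$\eta$ strictly increases by one per iteration within a run'' fails at the run's last iteration). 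Hence ``some run contains strictly more than $T$ iterations'' does not by itself force a return. Second, your ``sharper accounting'' for the near-divisibility case yields, for a hypothetical execution that exhausts $\Gamma$ with $u \le 4\mu_i\beta^2$ additions and at most $T$ dry iterations per run, only the bound $|\Gamma| \le u(T+1)+T \le (4\mu_i\beta^2+1)T + 4\mu_i\beta^2$. Writing $|\Gamma| = (4\mu_i\beta^2+1)T + r$ with $0 \le r \le 4\mu_i\beta^2$, this inequality holds for \emph{every} admissible $r$, so the strict inequality you need never materializes --- in the divisible case or otherwise.

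To be fair, you have put your finger on a genuine subtlety: the paper's own proof runs the same count and asserts that an underfull edge is encountered within every $\tau$ consecutive edges, which is what a termination test of the form $\eta \ge \tau$ would give, whereas the test as written ($\eta > \tau$) only guarantees an underfull edge within every $\tau+1$ consecutive edges. With the test read as strict, the boundary case cannot be excluded by this style of counting alone; what closes it is tightening the trigger rather than refining the pigeonhole --- e.g., if the algorithm returns as soon as $\eta \ge \lfloor |\Gamma|/(4\mu_i\beta^2+1)\rfloor$, then every dry stretch has length at most $T-1$, and the total number of processed edges is at most $4\mu_i\beta^2 + (4\mu_i\beta^2+1)(T-1) = (4\mu_i\beta^2+1)T - 1 < |\Gamma|$, the desired contradiction. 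So the missing ingredient is not a more careful distribution of iterations among runs over the same quantities, but either a (slightly) stronger termination condition or a strengthening of the $4\mu_i\beta^2$ bound from \Cref{cl:potential}.
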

\begin{myproof}
	Suppose for the sake of contradiction that the algorithm does not reach \Cref{line:returnH}. Note from \Cref{alg:H} that this means $\eta \leq \tau$ at all times where $\tau = \lfloor |\Gamma|/(4\mu_i\beta^2+1) \rfloor$. From the definition of counter $\eta$ in \Cref{alg:H}, we get that the algorithm encounters at least one $(H_i, \beta)$-underfull edge within every $\tau$ consecutive edges of $\Gamma$ (processed in the order of $\pi)$. But this means, we must encounter at least $|\Gamma|/\tau \geq 4\mu_i \beta^2+1$ edges that are $(H_i, \beta)$-underfull, contradicting \Cref{cl:potential}.
\end{myproof}

\subsection{The Approximation Ratio}\label{sec:approx-fk}

We now analyze the size of $M$ and prove the approximation guarantee of \Cref{thm:dynamicalg}.

Fix an arbitrary sequence of updates and suppose that we run the algorithm of \Cref{sec:the-algorithm} on them. Unless otherwise stated, when we refer to a data structure of the algorithm throughout \Cref{sec:approx-fk} (such as matching $M$, subgraph $H_i$, integer $\mu_i$, graph $G$, etc.) we refer to the value stored in this data structure after the whole sequence of updates has been applied. 

We prove that, w.h.p., the size of $M$ is as claimed in \Cref{thm:dynamicalg} at the end of applying this sequence of updates. Note that since this holds for any arbitrary sequence, it also holds for any update throughout the sequence.

Our starting point is the following lemma.

\begin{lemma}\label{lem:it-is-HEDCS}
	$H_k$ is a \HEDCS{\beta}{k} of $(G \setminus G_k) \setminus U_{k+1}$ with $(H_1, \ldots, H_k)$ being its hierarchical decomposition.
\end{lemma}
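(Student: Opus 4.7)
The plan is to unpack the definition of a $\HEDCS{\beta}{k}$ and verify its two structural requirements (plus the chain condition) for $H_k$ viewed as a subgraph of $G' := (G \setminus G_k) \setminus U_{k+1}$, using the basic properties of the algorithm that have already been established. The hierarchical decomposition chain $\emptyset = H_0 \subseteq H_1 \subseteq \cdots \subseteq H_k$ is immediate from \Cref{obs:monotonicity}(ii), so the real work is checking properties $(i)$ and $(ii)$ of \Cref{def:HEDCS} at arbitrary times during the run of the algorithm, not just immediately after a recomputation.

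For property $(i)$, I would argue that whenever \computeLayers{j} is invoked for some $j \leq i$, the subgraph $H_i$ is freshly recomputed by \addLayer, and \Cref{obs:H-no-overfull} directly guarantees that no edge in $H_i \setminus H_{i-1}$ is $(H_i,\beta)$-overfull; equivalently, $\deg_{H_i}(e) \leq \beta$ for every such $e$. The key observation is then that between two consecutive recomputations that touch level $i$, the edge set of $H_i$ does not change at all: per the update-handling rules, deletions from $G$ are not propagated into any $H_j$ (the lazy step), and insertions do not add edges to any $H_j$ either. Recomputations of higher levels $H_{i+1}, \ldots, H_k$ add edges above level $i$ but do not alter $H_i$, so the quantity $\deg_{H_i}(e)$ is invariant, and property $(i)$ remains satisfied throughout.

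For property $(ii)$, I would first simplify $G' \setminus H_k$. Since $H_k \subseteq G_k$ by \Cref{obs:monotonicity}(v), we have $H_k \cap (G \setminus G_k) = \emptyset$, so $G' \setminus H_k = (G \setminus G_k) \setminus U_{k+1}$, i.e., we need $\deg_{H_k}(e) \geq \beta - 1$ for every $e \in G \setminus G_k$ with $e \notin U_{k+1}$. This is exactly the content of \Cref{obs:monotonicity}(iv) specialized to $i = k$: $U_{k+1}$ is maintained to be the set of \emph{all} $(H_k,\beta)$-underfull edges in $G \setminus G_k$, so any edge of $G \setminus G_k$ that lies outside $U_{k+1}$ is by definition not $(H_k,\beta)$-underfull, giving precisely $\deg_{H_k}(e) \geq \beta - 1$.

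I do not expect any real obstacle here, since the lemma is essentially a bookkeeping consequence of \Cref{obs:monotonicity} and \Cref{obs:H-no-overfull}. The only place one must be careful is verifying that the invariants underlying those observations are preserved under the lazy update scheme. In particular, one should confirm that insertions to $G$ properly register new $(H_i,\beta)$-underfull edges into the appropriate $U_{i+1}$'s (explicit in the update handling) and that deletions, which keep phantom edges alive inside the $H_i$'s, neither shrink $\deg_{H_i}(e)$ nor create new underfull edges outside $U_{k+1}$; both follow because $H_k$ itself is held fixed between recomputations.
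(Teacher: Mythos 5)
Your proposal is correct and follows essentially the same route as the paper's proof: property $(i)$ via \Cref{obs:H-no-overfull} together with the fact that each $H_i$ is unchanged between the recomputations triggered by \computeLayers{j} for $j \leq i$, and property $(ii)$ via \Cref{obs:monotonicity} part $(iv)$ specialized to $i=k$. The extra bookkeeping you include (the chain condition from \Cref{obs:monotonicity}$(ii)$ and the simplification of $G' \setminus H_k$ using $H_k \subseteq G_k$) is consistent with, and slightly more explicit than, the paper's argument.
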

\begin{myproof}
	Let us first confirm property $(i)$ of \HEDCS{\beta}{k}. Observe that $H_i$ is computed in \Cref{alg:computeLayers} and is the output of \addLayer{U_i \cap G_i, H_{i-1}, \mu_i}. As such, by \Cref{obs:H-no-overfull}, $H_i \setminus H_{i-1}$ includes no $(H_i, \beta)$-overfull edges. That is, for any edge $e \in H_i$, we have $\deg_{H_i}(e) \leq \beta$ right after recomputation of $H_i$. Now note that any time that some subgraph $H_j$ for $j \leq i$ is recomputed, $H_i$ gets recomputed too in \Cref{alg:computeLayers}. Moreover, other than these recomputations, the subgraphs $H_i$ do not change (even if their edges are removed from the graph). Hence, property $(i)$ of \HEDCS{\beta}{k} continues to hold throughout the sequence of updates.
	
	For property $(ii)$ of \HEDCS{\beta}{k}, recall from \Cref{obs:monotonicity} part $(iv)$ that all $(H_k, \beta)$-underfull edges of $G \setminus G_k$ belong to $U_{k+1}$ at all times. Hence, $(G \setminus G_k) \setminus U_{k+1}$ includes no $(H_k, \beta)$-underfull edges. That is, for any $e \in (G \setminus G_k) \setminus U_{k+1}$, we have $\deg_{H_k}(e) \geq \beta -1$.
\end{myproof}

The only remaining problem is that some of the edges of $H_k$ may have been deleted from the graph. Therefore, although $H_k$ is a valid $\HEDCS{\beta}{k}$ of $G \setminus U_{k+1}$, not all the edges in $H_k \cup U_{k+1}$ actually exist in the graph. The rest of this section is essentially devoted to upper bounding the number of such deleted edges.

A few definitions are in order. We use $\hat{G_i}$ to denote subgraph $G_i$ right after the last recomputation of $\mu_i$ and $H_i$, which must have been exactly $c_i$ updates ago. We also use $F_i$ to refer to any edge inserted or removed at least once from $G_i$ during the last $c_i$ updates. 

A key claim to bounding the number of deleted edges of $H_k$, say, is the following:

\begin{lemma}\label{lem:sum-Di-small}
	With probability $1-1/n^4$, $|F_1| + \ldots + |F_{k+1}| \leq 13\epsilon \mu(G).$
\end{lemma}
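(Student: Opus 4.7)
The plan is to bound each $|F_i|$ in terms of $c_ip_i$ and then use the lazy-recomputation invariant $c_ip_i < \epsilon(\mu_i+1)/k$; the key idea is that $\mu_i$ is controlled by $\mu(G) + |F_i|$ (and not the coarser $\mu(G) + c_i$), which closes the otherwise circular reasoning. Specifically, since $\mu_i = \maximalmatchingsize{\hat{G}_i} \leq \mu(G_i^{\text{then}})$ and since $G_i^{\text{then}}$ and $G_i^{\text{now}}$ differ by exactly the $|F_i|$ edges that were inserted or deleted within the window, standard matching stability gives $\mu(G_i^{\text{then}}) \leq \mu(G_i^{\text{now}}) + |F_i| \leq \mu(G) + |F_i|$. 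Plugging this into the invariant yields $c_ip_i < \epsilon(\mu(G) + |F_i| + 1)/k$.

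Next, I will bound $|F_i|$ in terms of $c_ip_i$. For each of the $c_i$ updates in the window, the affected edge belongs to $G_i$ iff its rank satisfies $\pi_e \leq p_i$; by the oblivious adversary assumption each such indicator equals $1$ with probability $p_i$. The indicators associated with an insertion and a subsequent deletion of the same edge within the same window share a common $\pi_e$ and are not independent, but each fresh draw of $\pi_e$ determines the outcome of at most two such indicators; grouping by $\pi_e$ reduces $|F_i|$ to at most twice a sum of independent $\mathrm{Bernoulli}(p_i)$ variables, so a standard multiplicative Chernoff bound gives $|F_i| \leq 2c_ip_i + O(\log n)$ with probability $1 - n^{-10}$. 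A union bound over the $\poly(n)$ possible pairs $(t, w)$ of (current time, window length) makes this hold simultaneously at every time and every level $i \in [k+1]$.

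Combining the two inequalities yields the self-bounding relation $|F_i|(1 - 2\epsilon/k) < 2\epsilon(\mu(G)+1)/k + O(\log n)$; for $k \geq 4\epsilon$ (which holds in our parameter regime since $\epsilon \leq 1/12$) this solves to $|F_i| < 4\epsilon(\mu(G)+1)/k + O(\log n)$. Summing over $i \in [k+1]$ gives $\sum_i |F_i| \leq 8\epsilon\mu(G) + O(k\log n)$, and the additive logarithmic term is absorbed into $O(\epsilon \mu(G))$ via the assumption $\mu(G) \geq \mulb$ from \Cref{rem:mularge}, producing the claimed $13\epsilon\mu(G)$ bound.

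The main obstacle is precisely this self-referential coupling between $c_i$, $\mu_i$, and $|F_i|$: a naive bound $\mu_i \leq \mu(G) + c_i$ leads to an inequality usable only when $\epsilon/(kp_i) \leq 1/2$, which can fail for small $p_i$ (e.g., $p_1$ when $k$ is small and $\Delta$ is large). Using $\mu_i \leq \mu(G) + |F_i|$ instead replaces the deterministic upper bound $c_i$ by the much smaller random quantity $|F_i| \approx c_ip_i$; together with the grouped Chernoff concentration and the union bound over windows, this closes the argument at every level of the hierarchy.
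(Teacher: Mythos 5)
Your proposal is correct, and it reaches the statement by a genuinely leaner route than the paper on one of the two key steps. The concentration half of your argument is essentially the paper's \Cref{cl:cclcrg128391t-1h2t3}: by obliviousness, each of the $c_i$ updates in the current window lands in $G_i$ with probability $p_i$, and a Chernoff bound plus a union bound over window lengths gives $|F_i| \lesssim c_i p_i + O(\log n)$ simultaneously for all $i$ (your grouping of indicators that share a rank $\pi_e$ is in fact slightly more careful than the paper, which elides that dependence; also note you only need this at the fixed end of the fixed update sequence, so your extra union over current times is unnecessary, though harmless). Where you diverge is in how the circularity between $c_i$, $\mu_i$ and $|F_i|$ is broken. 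The paper proves \Cref{cl:clrr-12390123} by taking the union of the stored maximal matchings $M_1,\ldots,M_{k+1}$, discarding deleted edges (charged to the $F_i$), building a fractional matching of the union, and paying the $2/3$ integrality factor for general graphs, yielding $\sum_i \mu_i \leq \tfrac{3(k+1)}{2}\bigl(\mu(G)+\sum_i|F_i|\bigr)$; it then solves a single summed self-referential inequality. You instead use the per-level stability bound $\mu_i = \maximalmatchingsize{\hat G_i} \leq \mu(\hat G_i) \leq \mu(G_i) + |F_i| \leq \mu(G)+|F_i|$, which is valid because every edge of $\hat G_i$ absent from the current $G_i$ was removed during the window and hence lies in $F_i$, and $G_i \subseteq G$. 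This closes the recursion level by level, avoids the fractional-matching/integrality-gap detour entirely, and in fact gives the stronger comparison $\sum_i \mu_i \leq (k+1)\mu(G) + \sum_i |F_i|$, so your constants come out with more slack than the paper's ($8\epsilon\mu(G)$ plus absorbed logarithmic terms versus the paper's $12\epsilon\mu(G)$ plus the same). The final absorption of the $O(k\log n)$ terms via $\mu(G) \geq \mulb$ from \Cref{rem:mularge} is exactly as in the paper, and the bound $13\epsilon\mu(G)$ follows with room to spare.
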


Let us first see why \Cref{lem:sum-Di-small} proves the approximation guarantee of \Cref{thm:dynamicalg}.

\begin{proof}[Proof of approximation guarantee of \Cref{thm:dynamicalg}]
	Consider the last time that we recomputed $M$; this must have been $c_{k+1}$ updates ago. Let us use $H'_i$, $U'_i$, and $G'$ to denote $H_i$, $U_i$, and $G$ right before recomputation of $M$. Observe that we set $M$ to be a $(1-\epsilon)$-approximate maximum matching of $(H'_k \cup U'_{k+1}) \cap G'$. We show that
	\begin{flalign}
		\nonumber |M| &\geq (1-\epsilon)\mu\big((H'_k \cup U'_{k+1}) \cap G'\big) - |G' \setminus G|\\
		&\geq (1-\epsilon)\mu\big((H_k \cup U_{k+1}) \cap G \big) - |G' \setminus G| - |G \setminus G'| - |U_{k+1} \setminus U'_{k+1}| - |H_k \setminus H'_k|.\label{eq:tllrcgg-12309}
	\end{flalign}
	The first inequality holds since $M$ at the end of the sequence is its last value computed, which is a $(1-\epsilon)$-approximate maximum matching of $(H'_k \cup U'_{k+1}) \cap G'$, excluding its edges that have been removed from the graph during the last $c_{k+1}$ updates. The second bound holds sincy by replacing $G'$ with $G$, $H'_k$ with $H_k$, and $U'_{k+1}$ with $U_{k+1}$, we may only add $(G \setminus G') \cup (H_k \setminus H'_k) \cup (U_{k+1} \setminus U'_{k+1})$ edges to the matching, which is then canceled out with the subtracted terms.
	
	Now note that $H'_k = H_k$ where recall we use $H_k$ to denote the final value of $H_k$ at the end of the update sequence. This is correct because $M$ gets recomputed any time that \computeLayers{j} is called for {\em any} value of $j$. Hence, since the last recomputation of $M$, we have not called \computeLayers{j} for any $j$ and so $H_k$ must have remained unchanged. Note also that any edge in $U_{k+1} \setminus U'_{k+1}$, $G \setminus G'$, or $G' \setminus G$ must have been updated during the last $c_{k+1}$ updates. Since $F_{k+1}$ by definition includes any edge of $G_{k+1} = G$ updated during the last $c_{k+1}$ updates, (\ref{eq:tllrcgg-12309}) gives
	\begin{flalign}\label{eq:tcclrxxx-1239}
		|M| &\geq (1-\epsilon)\mu\big((H_k \cup U_{k+1}) \cap G\big) - 3|F_{k+1}|.
	\end{flalign}
	
	Next, observe that $U_{k+1} \subseteq G$ by \Cref{obs:monotonicity}. Moreover, if an edge $e \in H_i \setminus H_{i-1}$ does not belong to $G$, then $e$ must have been updated during the last $c_i$ updates and must belong to $G_i$ which together imply $e \in F_i$. Combined with $H_1 \subseteq \ldots \subseteq H_k$ of \Cref{obs:monotonicity}, we get $|H_k \setminus G| \leq |F_1| + \ldots + |F_k|$. Now combined with (\ref{eq:tcclrxxx-1239}) this implies
	\begin{flalign}
		&& \nonumber |M| &\geq (1-\epsilon)\mu\big(H_k \cup U_{k+1}\big) - 3|F_{k+1}| - (|F_1| + \ldots + |F_k|)\\
		&& \nonumber &\geq (1-\epsilon)\mu\big(H_k \cup U_{k+1}\big) - 3(|F_1| + \ldots + |F_{k+1}|)\\
		&& &\geq (1-\epsilon)\mu\big(H_k \cup U_{k+1}\big) - 39 \epsilon \mu(G). && (\text{Holds w.h.p. by \Cref{lem:sum-Di-small}.})\label{eq:clllhbkxx-123}
	\end{flalign}
	
	This immediately proves the approximation guarantee of \Cref{thm:dynamicalg} for $k = 0$ since at all times $U_{k+1} = U_1 = G$. So let us now focus on $k \geq 1$.
	
	Observe that by \Cref{lem:it-is-HEDCS} $H_k$ is a $\HEDCS{\beta}{k}$ of $(G \setminus G_k) \setminus U_{k+1}$ and so applying \Cref{prop:apx-HEDCS} and noting from the statement of \Cref{thm:dynamicalg} that $\beta = \generalbeta{\beta'}$, we get 
	\begin{equation}\label{eq:ccrrcg-123}
		\mu(H_k \cup U_{k+1}) \geq \alpha(k, \beta, \beta-1) \cdot \mu(G \setminus G_k) \qquad \text{ for bipartite $G$},
	\end{equation}
	\begin{equation}\label{eq:llcgg3-123}
		 \mu(H_k \cup U_{k+1}) \geq \alpha(k, \beta', \beta'-k) \cdot \mu(G \setminus G_k) \qquad \text{for general $G$.}
	\end{equation}
	
	To complete the proof, note that $G_k$ includes each edge of $G$ independently with probability $p_k \leq \epsilon$ by \Cref{cl:pvalues} part $(v)$. This means that fixing a maximum matching of $G$, only $\epsilon$ fraction of its edges appear in $G_k$ in expectation. This bound also holds with high probability by a Chernoff bound noting that $\mu(G) \geq \mulb$ from \Cref{rem:mularge}. As such, we get that with probability, say, $1-1/n^4$, $\mu(G \setminus G_k) \geq (1-\epsilon)\mu(G)$. Plugging this into (\ref{eq:ccrrcg-123}) for bipartite graphs and  (\ref{eq:llcgg3-123}) for general graphs, and then applying (\ref{eq:clllhbkxx-123}), we get the claimed lower bound of \Cref{thm:dynamicalg} on $|M|$.
\end{proof}

\hiddencomment{Minor detail: $H_k$ may not be a subgraph of $G$ but the current statement of \Cref{prop:apx-HEDCS} assumes it is. Either change the proposition, our take the union with $H_k$.}

Toward proving \Cref{lem:sum-Di-small}, we prove two auxiliary claims first.

\begin{claim}\label{cl:clrr-12390123}
	It holds that $\mu(G) \geq \frac{2}{3(k+1)}(\mu_1 + \ldots + \mu_{k+1}) - \sum_{i=1}^{k+1} |F_i|$ with probability 1.
\end{claim}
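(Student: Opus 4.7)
The plan is to establish the stronger inequality
\[
\mu(G) \;\geq\; \frac{1}{k+1}\sum_{i=1}^{k+1}\mu_i \;-\; \frac{1}{k+1}\sum_{i=1}^{k+1}|F_i|,
\]
which trivially implies the stated bound since $\frac{1}{k+1} \geq \frac{2}{3(k+1)}$ and $\frac{1}{k+1}\sum|F_i| \leq \sum|F_i|$. This would be a purely deterministic chain of inequalities, so the ``with probability $1$'' qualification comes for free.

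First I would unpack the definition of $\mu_i$. By the ``Maintaining $\widetilde{\mu}(G_i)$'' paragraph, $\mu_i = \widetilde{\mu}(\hat{G_i})$, where $\hat{G_i}$ is the snapshot of $G_i$ at the time of the last recomputation. Since the size of a maximal matching is at most the size of a maximum matching, we immediately get $\mu_i \leq \mu(\hat{G_i})$. Next I would compare $\hat{G_i}$ with the current $G_i$ using $F_i$. By definition, $F_i$ contains every edge that was inserted into or deleted from $G_i$ over the last $c_i$ updates, so the symmetric difference satisfies $\hat{G_i}\,\triangle\, G_i \subseteq F_i$, and in particular $\hat{G_i} \subseteq G_i \cup F_i$. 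Therefore
\[
\mu(\hat{G_i}) \;\leq\; \mu(G_i \cup F_i) \;\leq\; \mu(G_i) + |F_i|,
\]
where the second inequality uses that adding $|F_i|$ edges to any graph increases its maximum matching size by at most $|F_i|$. Finally, by \Cref{obs:monotonicity}(i), $G_i \subseteq G$, so $\mu(G_i) \leq \mu(G)$.

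Combining these three observations gives $\mu_i \leq \mu(G) + |F_i|$ for every $i \in [k+1]$. Summing over $i$ yields
\[
\sum_{i=1}^{k+1}\mu_i \;\leq\; (k+1)\,\mu(G) + \sum_{i=1}^{k+1}|F_i|,
\]
and rearranging gives the target inequality (with the stronger constants noted above). There is no real obstacle here: the argument is a straightforward deterministic combination of three facts (maximal versus maximum matching, the $\hat{G_i}\subseteq G_i \cup F_i$ containment, and monotonicity $G_i \subseteq G$). The mild slack in the constants $2/3$ and the coefficient $1$ in front of $\sum|F_i|$ presumably reflects a conservative choice by the authors that leaves room for downstream arguments.
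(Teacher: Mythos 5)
Your proof is correct, and it takes a genuinely different route from the paper's. The paper argues through the matchings themselves: it takes the maximal matchings $M_1,\ldots,M_{k+1}$ of the snapshots $\hat{G}_1,\ldots,\hat{G}_{k+1}$, observes that any of their edges missing from the current $G$ lies in some $F_i$ (so $\mu(G) \geq \mu(M_1\cup\ldots\cup M_{k+1}) - \sum_i |F_i|$), and then lower bounds $\mu(M_1\cup\ldots\cup M_{k+1})$ by building the fractional matching $x_e = h_e/(k+1)$ of value $\frac{1}{k+1}\sum_i \mu_i$ and invoking the fact that in general graphs an integral matching is at least $2/3$ of any fractional one --- this integrality-gap step is exactly where the $\frac{2}{3(k+1)}$ comes from. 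You instead bound each $\mu_i$ individually: $\mu_i = \maximalmatchingsize{\hat{G}_i} \leq \mu(\hat{G}_i) \leq \mu(G_i) + |F_i| \leq \mu(G) + |F_i|$, using $\hat{G}_i \,\triangle\, G_i \subseteq F_i$ and $G_i \subseteq G$, and then average over $i$. All steps check out (in particular the symmetric-difference containment is valid under the paper's definition of $F_i$), the argument is deterministic so the ``with probability $1$'' is immediate, and your inequality
\[
\mu(G) \;\geq\; \frac{1}{k+1}\sum_{i=1}^{k+1}\mu_i \;-\; \frac{1}{k+1}\sum_{i=1}^{k+1}|F_i|
\]
is strictly stronger than the claim (better coefficient on both sums), so it also supports the downstream use in \Cref{lem:sum-Di-small}. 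In short, your route avoids the fractional-matching machinery entirely and yields better constants; the paper's route is the natural one if one thinks of the $M_i$'s as explicit matching witnesses inside $G$, but for this particular claim your elementary averaging argument is simpler and sharper.
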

\begin{myproof}
	Recall that $\mu_i$ is the size of a maximal matching of $\hat{G}_i$. Let us denote this maximal matching by $M_i$. Let us also use $M_{1, k+1}$ to denote $M_1 \cup \ldots \cup M_{k+1}$. Observe that if an edge of $M_i$ does not belong to $G$, then it must be in $F_i$. This means $\mu(G) \geq \mu(M_{1, k+1}) - \sum_{i=1}^{k+1} |F_i|$.
	
	Let us now lower bound $\mu(M_{1, k+1})$. To any edge $e \in M_{1, k+1}$ we assign fractional value $x_e := h_e/(k+1)$ where $h_e$ is the number of matchings $M_1, \ldots, M_{k+1}$ that include $e$. It can be confirmed that $x$ is a valid fractional matching of $M_{1, k+1}$. On the other hand, this fractional matching has size exactly $\frac{|M_1| + \ldots + |M_{k+1}|}{k+1} = \frac{\mu_1 + \ldots + \mu_{k+1}}{k+1}$. Any general graph has an integral matching of size at least $2/3$ times the size of any of its fractional matchings. Hence, $\mu(M_{1, k+1}) \geq \frac{2}{3(k+1)} (\mu_1 + \ldots + \mu_{k+1})$. Combined with the bound of the previous paragraph, we thus get
	
	$$
	\mu(G) \geq \mu(M_{1, k+1}) - \sum_{i=1}^{k+1} |F_i| \geq \frac{2(\mu_1 + \ldots + \mu_{k+1})}{3(k+1)} -  \sum_{i=1}^{k+1} |F_i|.\qedhere
	$$
\end{myproof}

Next, we prove the following high probability upper bound on $|F_i|$.

\begin{claim}\label{cl:cclcrg128391t-1h2t3}
	With probability $1-1/n^4$, it holds that $|F_i| <  2\frac{\epsilon}{k} \mu_i + 200 \log n + 4$ for all $i \in [k+1]$.
\end{claim}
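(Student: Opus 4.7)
The plan would rely on the lazy triggering rule of the algorithm together with a Chernoff bound, exploiting the oblivious adversary assumption. First, by the rule that invokes $\computeLayersNoInput(j)$ as soon as $c_j \ge \frac{\epsilon}{k}\cdot(\mu_j+1)/p_j$, at every point in time we have the deterministic bound $c_i < \frac{\epsilon}{k}(\mu_i+1)/p_i$. Thus $F_i$ is determined by the last $c_i$ updates. The intuition is that each of these updates touches $G_i$ only with probability $p_i$ (over the independent ranks $\pi_e$), so $|F_i|$ should concentrate around $c_i p_i \le \frac{\epsilon}{k}(\mu_i+1)$. The trouble is that $c_i$ itself depends on algorithmic randomness through $\mu_i$, so a single Chernoff bound on the realized window is not directly available.

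To decouple this dependence, I would union-bound over the possible values of $\mu_i$. For each integer $m \in \{0,1,\ldots,n\}$, define the deterministic window consisting of the last $L_m := \lceil \frac{\epsilon}{k}(m+1)/p_i \rceil$ updates (which is fixed by the oblivious adversary, independent of the ranks), and let $F_i^{(m)}$ denote the set of distinct edges appearing in that window that have some rank $\pi_e \le p_i$ active during one of those updates. Since the ranks are independent $\unifzeroone$'s and independent of the update sequence, the indicator $\mathbf{1}[e \in F_i^{(m)}]$ depends only on $e$'s (possibly fresh, per-insertion) ranks, and across distinct edges these indicators are independent. A single edge $e$ inserted $k_e$ times in the window contributes with probability at most $k_e p_i$, and $\sum_e k_e$ is bounded by the number of insertion events in the window, which is at most $L_m$. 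Hence $\E[|F_i^{(m)}|] \le L_m p_i \le \frac{\epsilon}{k}(m+1) + 1$, and a standard Chernoff bound gives
\[
\Pr\bigl[|F_i^{(m)}| > 2\tfrac{\epsilon}{k}(m+1) + 100\log n\bigr] \le n^{-10}.
\]

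Finally, I would union-bound over $i \in [k+1]$ and over $m \in \{0,\ldots,n\}$, so the total failure probability is at most $(k+1) n \cdot n^{-10} \le n^{-4}$. On the good event, if $\mu_i = m$, then $c_i < L_m$, so the last $c_i$ updates are contained in the last $L_m$, giving $F_i \subseteq F_i^{(m)}$ and hence $|F_i| \le 2\frac{\epsilon}{k}(m+1) + 100\log n + 2 \le 2\frac{\epsilon}{k}\mu_i + 200\log n + 4$ (using $2\epsilon/k < 1$, which holds since $\epsilon < 1/12$ and $k \ge 1$). The main obstacle is precisely the coupling between the random window length $c_i$ and the random ranks that determine $F_i$; the union bound over possible values of $\mu_i$ cleanly resolves it. A minor subtlety is the possible re-insertion of an edge within the window with a fresh rank each time, which is handled by the per-edge union bound over insertion events in the expectation computation above.
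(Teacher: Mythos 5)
Your proposal is correct and takes essentially the same route as the paper: use the oblivious adversary to make the relevant update window deterministic, apply a Chernoff bound to the number of those updates landing in $G_i$, union-bound over deterministic window lengths, and then invoke the lazy-trigger bound $c_i \le \frac{\epsilon}{k}\cdot\frac{\mu_i+1}{p_i}+1$ to relate the realized window to the fixed ones. The only cosmetic difference is that you index the union bound by the possible values of $\mu_i$ (at most $n+1$ of them) while the paper unions over all window lengths $t \le n^2$; both achieve the same decoupling, and your per-edge handling of re-insertions with fresh ranks is, if anything, slightly more explicit than the paper's.
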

\begin{myproof}
	Fix some integer $t \geq 1$. Since the update sequence is oblivious to the randomization of the algorithm, we expect exactly $p_i \cdot t$ edges of the last $t$ updates to have rank $\leq p_i$, i.e., belong to $G_i$. Applying Chernoff and union bounds, we get that with probability $1-1/n^4$, for any $t \leq n^2$ and any $i \in [k]$ (recall that $k \leq \log n$), we have at most $p_i \cdot t + \sqrt{100 (p_i \cdot t) \log n}$ updates to $G_i$ among the last $t$ updates in the sequence (note that this holds for all values of $p_i \cdot t$). We assume this high probability event holds and proceed to prove the claim.

	By definition of $F_i$, any edge in $F_i$ must have been updated in the last $c_i$ updates, and that each of these edges must have belonged to $G_i$. By the discussion above, at most $p_i \cdot c_i + \sqrt{100(p_i \cdot c_i)\log n}$ edges of $G_i$ are updated in the last $c_i$ updates. Now observe that $c_i \leq \frac{\epsilon}{k} \cdot\frac{\mu_i + 1}{p_i} + 1$ since any time $c_i \geq \frac{\epsilon}{k} \cdot \frac{\mu_i + 1}{p_i}$, we immediately reset $c_i$ to zero in the update algorithm. Therefore, since clearly $c_i \leq n \Delta \leq n^2$, under the high probability event of the previous paragraph,
	\begin{flalign*}
	|F_i| &\leq p_i \cdot (\tfrac{\epsilon}{k} (\tfrac{\mu_i+1}{p_i}) + 1) + \sqrt{100\big(p_i \cdot (\tfrac{\epsilon}{k} (\tfrac{\mu_i+1}{p_i}) + 1)\big)\log n}\\
		&=\tfrac{\epsilon}{k} \mu_i + \tfrac{\epsilon}{k} + p_i + 10\sqrt{(\tfrac{\epsilon}{k} \mu_i + \tfrac{\epsilon}{k} + p_i) \log n}\\
		&\leq \tfrac{\epsilon}{k} \mu_i + 2 + 10\sqrt{(\tfrac{\epsilon}{k} \mu_i + 2) \log n} \tag{Since $p_i \leq 1$ and $\epsilon/k \leq 1$.}\\
		&\leq 2(\tfrac{\epsilon}{k} \mu_i + 2) + 2 \cdot 100 \log n.\\[-12pt] \tag{If $(\frac{\epsilon}{k}\mu_i + 2) \geq 10\sqrt{(\tfrac{\epsilon}{k} \mu_i + 2) \log n}$ then the first term ensures inequality, otherwise the second.}
	\end{flalign*}
	The proof is thus complete.
\end{myproof}

We are now ready to prove \Cref{lem:sum-Di-small}.

\begin{proof}[Proof of \Cref{lem:sum-Di-small}]
	From \Cref{cl:cclcrg128391t-1h2t3} we get that with probability $1-1/n^4$, 
	$$
		\sum_{i=1}^{k+1} |F_i| \leq \sum_{i=1}^{k+1} (2 \tfrac{\epsilon}{k} \mu_i + 200 \log n + 4) \leq 204 (k+1) \log n + 2\frac{\epsilon}{k} \sum_{i=1}^{k+1} \mu_i.
	$$
	Combined with inequality $\sum_{i=1}^{k+1} \mu_i \leq \frac{3(k+1)}{2} (\mu(G) + \sum_{i=1}^{k+1} |F_i|)$ implied by \Cref{cl:clrr-12390123}, we get
	$$
	\sum_{i=1}^{k+1} |F_i| 
	\leq 204(k+1) \log n + \frac{2 \epsilon}{k} \cdot \frac{3(k+1)}{2} \left(\mu(G) + \sum_{i=1}^{k+1} |F_i|\right).
	$$
	Noting that $k \geq 1$, we can simplify and re-arrange the terms, obtaining that
	$$
	\sum_{i=1}^{k+1} |F_i| \leq \frac{204(k+1) \log n + 6 \epsilon \mu(G)}{1-6 \epsilon} \stackrel{(\epsilon \leq 1/12)}{\leq} 408 (k+1) \log n + 12 \epsilon \mu(G) < 13 \epsilon \mu(G),
	$$
	where the last inequality follows from our assumption of \Cref{rem:mularge} that $\mu(G) \geq \mulb$. 
\end{proof}

\subsection{Update Time of the Algorithm of \Cref{sec:the-algorithm}}\label{sec:update-time}

In this section we prove the update-time bound of \Cref{thm:dynamicalg}, except that instead of a worst-case update-time, we here prove an amortized update-time bound. We then show in \Cref{sec:worstcase} how with a small modification this can be turned into a worst-case bound.

The cost of maintaining the maximal matchings of $G_1, \ldots, G_{k+1}$ as stated before is $k \poly(\log n)$ for every update w.h.p. The ``easy updates'' such as removing a deleted edge from any of $U_i$'s or $G_i$'s can also be done in $O(k\log n)$ time as previously discussed. It only remains to analyze the cost of the ``heavy updates''. That is, the calls to $\computeLayers{j}$. 

Let us fix some $j \in [k+1]$ and analyze the (amortized) cost of a call to $\computeLayers{j}$. For the rest of this section, unless otherwise stated explicitly, when we refer to a data structure of the algorithm (such as $H_i$, $\mu_i$, $G$, etc.) we refer to the value of this data structure right after the call to \computeLayers{j}. Note that it takes at least another $\frac{\epsilon}{k} \cdot \frac{\mu_j+1}{p_j}$ updates until we call $\computeLayers{j}$ for this specific value of $j$ again. As such, we can amortize the cost of a call to $\computeLayers{j}$ over at least $\frac{\epsilon}{k} \cdot \frac{\mu_j+1}{p_j}$ updates.\footnote{Note that there is an edge case: If the number of remaining updates is not as large as $\frac{\epsilon}{k} \cdot \frac{\mu_j+1}{p_j}$ then we cannot amortize the cost over the future updates. However, since this happens at most once for each $j$, and since $\computeLayers{j}$ clearly takes at most linear-time in the number of edges of the whole graph, we can amortize this cost over the whole sequence of updates which involves $\Omega(m)$ updates as assumed in \Cref{thm:dynamicalg}.} The resulting amortized update-time summed up for all $j \in [k+1]$ gives the amortized update-time of the algorithm.

 Let us start with an upper bound on the running time of \computeLayers{j}.

\begin{claim}\label{cl:time-computeLayers}
	The time spent in subroutine $\computeLayers{j}$ is $\Ot\big( (\epsilon^{-1} + \beta k) |U_j| + \epsilon^{-1}|H_k| \big)$.
\end{claim}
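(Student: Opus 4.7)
The plan is to decompose $\computeLayers{j}$'s work into four buckets: access to $\mu_i$, the $k-j+1$ calls to $\addLayerNoInput$, the constructions of $U_{i+1}$, and the final matching computation. Since $\maximalmatchingsize{G_i}$ is maintained externally via \Cref{prop:dynamic-maximal-matching}, looking up $\mu_i$ is $O(1)$ per layer. Assuming the adjacency-list data structure is augmented so that degrees in each $H_i$ are stored as counters, each underfullness/overfullness test on a single edge takes $\Ot(1)$ time, and every edge insertion or deletion in a maintained subgraph takes $\Ot(1)$ time.

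The main step is to bound a single call $\addLayer{U_i \cap G_i, H_{i-1}, \mu_i}$ by $\Ot(\beta |U_j|)$. The $\pi$-ordered loop in \Cref{alg:H} visits each edge of $\Gamma = U_i \cap G_i$ at most once, so there are at most $|\Gamma| \leq |U_i| \leq |U_j|$ iterations, using $U_i \subseteq U_j$ from \Cref{obs:monotonicity}. Each iteration does an $\Ot(1)$ underfullness test, contributing $\Ot(|U_j|)$ total. Each successful iteration additionally scans, at both endpoints of the newly added edge, their $H_i$-neighborhoods---of size at most $\beta$ by \Cref{obs:degree-of-H}---to locate an $(H_i, \beta)$-overfull incident edge in $H_i \setminus H_{i-1}$ to evict. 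This costs $\Ot(\beta)$ per success, and since there are at most $|\Gamma| \leq |U_j|$ successful iterations, the successful work is $\Ot(\beta |U_j|)$. Summing over $i = j, \ldots, k$ yields $\Ot(\beta k |U_j|)$.

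Next I would handle the constructions of $U_{i+1}$: each iterates over $U_i \setminus G_i \subseteq U_j$ and tests underfullness per edge in $\Ot(1)$, for $\Ot(k |U_j|)$ total, which is absorbed. Finally, the approximate matching is computed by \Cref{prop:apxmatching} on $(H_k \cup U_{k+1}) \cap G$, which has at most $|H_k| + |U_{k+1}| \leq |H_k| + |U_j|$ edges (using $U_{k+1} \subseteq U_j$), so this takes $O(\epsilon^{-1}(|H_k| + |U_j|))$ time. Summing all contributions yields the claimed $\Ot\big( (\epsilon^{-1} + \beta k) |U_j| + \epsilon^{-1} |H_k| \big)$.

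The main obstacle to watch out for is the accounting inside $\addLayerNoInput$. A tempting approach would use the sharper bound of $4 \mu_i \beta^2$ on successful iterations from \Cref{cl:potential}; multiplied by the $\Ot(\beta)$ per-add cost, this introduces a $\mu_i \beta^3$ term that need not fit inside $\beta k |U_j|$. The cleaner move is to bound successful iterations by the trivial $|\Gamma|$, noting that every success still advances the $\pi$-sweep by one position; \Cref{cl:potential} plays its role elsewhere in the analysis (in particular, in reasoning that \addLayerNoInput{} terminates and in bounding the size of $H_i$).
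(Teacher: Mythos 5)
Your proof is correct and follows essentially the same route as the paper: bound each $\addLayerNoInput$ call by $\Ot(\beta|U_i|)$ using the trivial $|\Gamma|$ bound on iterations and a $\beta$-neighborhood scan per added edge (rather than the $4\mu_i\beta^2$ bound from \Cref{cl:potential}), charge $\Ot(|U_i|)$ per layer for building $U_{i+1}$, and $O(\epsilon^{-1}(|H_k|+|U_{k+1}|))$ for the final matching via \Cref{prop:apxmatching}, all summed using $U_i \subseteq U_j$ from \Cref{obs:monotonicity}. The only detail the paper spells out that you gloss over is the $\Ot(|U_i|)$ construction of $U_i\cap G_i$ from the ranks $\pi_e$ and the $\Ot(1)$-per-edge filtering of $(H_k\cup U_{k+1})$ against $G$ via BSTs, both of which are absorbed in your stated bounds.
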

\begin{myproof}
	We start by analyzing the calls $H_i \gets \addLayer{U_i \cap G_i, H_{i-1}, \mu_i}$ for $i \in \{j, \ldots, k\}$. Note that each graph $U_i \cap G_i$ can be constructed in $\Ot(|U_i|)$ time by iterating over the edges $e$ of $U_i$ and considering the rank $\pi_e$ which determines if $e \in G_i$. Algorithm \addLayer{U_i \cap G_i, H_{i-1}, \mu_i} iterates over the edges in $U_i \cap G_i$ and for each edge that is added to $H_i$, we have to find out if there are any $(H_i, \beta)$-overfull connected to its endpoints. By trivially scanning all the at most $\beta$ neighbors (by \Cref{obs:degree-of-H}) this can be done in $O(|U_i \cap G_i| \beta) = O(|U_i|\beta)$ time. The overall time-complexity of these calls is therefore bounded by
	\begin{flalign*}
		\Ot(|U_j|\beta + \ldots + |U_k|\beta) = \Ot(|U_j|\beta k)\tag{Since $U_j \supseteq \ldots \supseteq U_k$ by \Cref{obs:monotonicity}.}
	\end{flalign*}
	
	Construction of each $U_{i+1}$ from $U_i$ can also be done in $O(|U_i|)$ time by simply iterating over the edges and checking the edge-degree of each edge in $O(1)$ time. Since this is run for $i \in \{j, \ldots, k\}$, the total time is $O(|U_j| + \ldots + |U_k|) = O(|U_j|k)$.
	
	As discussed, each $\mu_i$ takes $O(1)$ time to compute, hence $\mu_j, \ldots, \mu_{k+1}$ take $O(k)$ time to compute.
	
	The final step is to run \Cref{prop:apxmatching} to find a $(1-\epsilon)$-approximation of $(H_k \cup U_{k+1}) \cap G$. We first construct graph $(H_k \cup U_{k+1})$ in $O(|H_k| + |U_{k+1}|)$ time, then iterate over its edges and remove any edge that does not belong to $G$. This can be done easily in $\Ot(|H_k| + |U_{k+1}|)$ time and $\Ot(m)$ space by storing the adjacency lists of $G$ in a BST so that each pair can be checked to be neighbors in $\Ot(1)$ time. Then running \Cref{prop:apxmatching} on the resulting graph requires $O(\epsilon^{-1}(|H_k|+|U_{k+1}|))$ time. Noting that $U_{k+1} \subseteq U_j$ by \Cref{obs:monotonicity}, the overall time of this step is $\Ot(\epsilon^{-1}(|H_k| + |U_j|))$.
		
	Summing up all the mentioned bounds proves the bound of the claim.
\end{myproof}

Recall that the time-complexity of \computeLayers{j} is amortized over $\frac{\epsilon}{k} \cdot \frac{\mu_j+1}{p_j}$ updates. Using the upper bound of \Cref{cl:time-computeLayers}, we  amortized cost of \computeLayers{j} is thus at most:
\begin{equation}\label{eq:nclrcggrcg-12t3}
	O\left( \frac{(\epsilon^{-1}+\beta k)|U_j| + \epsilon^{-1}|H_k|}{\frac{\epsilon}{k} \cdot \frac{\mu_j+1}{p_j}} \right)
= \left(\frac{p_j|U_j|}{\mu_j+1} + \frac{p_j|H_k|}{\mu_j+1}\right) \poly(\epsilon^{-1} \beta k \log n).
\end{equation}

\Cref{cl:first-term} below can be used to bound the first term, and \Cref{cl:second-term} can be used to bound the second term. 

\begin{claim}\label{cl:first-term}
	For every $j \in [k+1]$, with probability $1-1/n^4$, $\frac{p_j |U_j|}{\mu_j + 1} = \Delta^{\frac{1}{k+1}} \cdot \poly(\epsilon^{-1}\beta \log n)$.
\end{claim}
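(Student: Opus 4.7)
The plan is to split on $j$: the case $j=1$ is handled by direct concentration on the random subgraph $G_1$ (since no sparsification has taken place yet and $U_1=G$), while $j\geq 2$ requires exploiting the sparsification property of $\addLayerNoInput$.

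For $j=1$, we have $|U_1|=m$ and the goal reduces to showing $\mu_1\gtrsim m/\Delta$ with high probability. By \Cref{cl:pvalues}(ii) and \Cref{rem:mularge}, $p_1\Delta \geq 15\log n$, so two standard Chernoff bounds (union-bounded over the $n$ vertices and the $m$ edges) give with probability $\geq 1-1/n^{8}$ that (a) $\Delta(G_1)\leq 2p_1\Delta$, and (b) if additionally $p_1 m \geq 100\log n$ then $|G_1|\geq p_1 m/2$. In that regime, $\mu_1 \geq \mu(G_1)/2 \geq |G_1|/(4\Delta(G_1)) \geq m/(16\Delta)$, giving $p_1|U_1|/(\mu_1+1)\leq 16 p_1\Delta = 16\varepsilon\Delta^{1/(k+1)}$. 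Otherwise $p_1 m < 100\log n$ and $p_1|U_1|/(\mu_1+1)\leq p_1 m < 100\log n$ directly, which meets the target since $\Delta^{1/(k+1)}\geq 1$.

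For $j\geq 2$, the heart of the argument is a sparsification lemma: $|U_j| \leq \poly(\varepsilon^{-1}\beta\log n)\cdot(\mu_{j-1}+1)/p_{j-1}$ with probability $1-1/n^4$. Let $H_{j-1}$ denote the subgraph after the last call to $\addLayer{U_{j-1}\cap G_{j-1}, H_{j-2}, \mu_{j-1}}$, let $V$ be the set of $(H_{j-1},\beta)$-underfull edges in $U_{j-1}$, and let $\Gamma := U_{j-1}\cap G_{j-1}$. Combining $H_{j-2}\subseteq H_{j-1}$ with \Cref{obs:monotonicity}(iv), any $(H_{j-1},\beta)$-underfull edge in $G\setminus G_{j-1}$ is automatically $(H_{j-2},\beta)$-underfull and therefore lies in $U_{j-1}$; hence $U_j = V\setminus G_{j-1}\subseteq V$, and it suffices to bound $|V|$. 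I split on why \Cref{alg:H} terminated. If it exhausted $\Gamma$, then every underfull edge ever encountered was added, so $|V\cap\Gamma|\leq 4\mu_{j-1}\beta^2$ by \Cref{cl:potential}. Otherwise the inner break condition fired, and the last $\tau+1:=\lfloor|\Gamma|/(4\mu_{j-1}\beta^2+1)\rfloor+1$ processed edges are non-underfull with respect to the \emph{final} $H_{j-1}$, because after the last addition $H_{j-1}$ stopped changing. In either case a Chernoff-style bound on the positions of $V$-edges in the random rank order of $U_{j-1}$ yields $|V|/|U_{j-1}|\leq\Ot(\mu_{j-1}\beta^2/(p_{j-1}|U_{j-1}|))$, so $|V|\leq\Ot(\mu_{j-1}\beta^2/p_{j-1})$.

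Finally, combining the sparsification lemma with \Cref{cl:pvalues}(iii) (which gives $p_j/p_{j-1}=O(\Delta^{1/(k+1)}/\varepsilon)$) and $\mu_{j-1}\leq\mu(G_{j-1})\leq\mu(G_j)\leq 2\mu_j$ (from $G_{j-1}\subseteq G_j$ together with maximality of $\mu_j$) yields
\[
\frac{p_j|U_j|}{\mu_j+1}\leq\Ot\!\left(\frac{p_j\beta^2\mu_{j-1}}{p_{j-1}(\mu_j+1)}\right)=\Ot(\beta^2 p_j/p_{j-1})=\Delta^{1/(k+1)}\cdot\poly(\varepsilon^{-1}\beta\log n).
\]
The main technical difficulty is making the concentration step inside the sparsification lemma rigorous: both $V$ and the termination time of \Cref{alg:H} are adaptive functions of the random ranks, so a direct Chernoff bound does not apply. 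I would handle this by conditioning on the final $H_{j-1}$ and arguing that, under this conditioning, the rank-order of the edges of $U_{j-1}$ remains uniform enough to support the desired tail bound, possibly combined with a martingale concentration argument that tracks the number of underfull encounters over time.
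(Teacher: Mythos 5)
Your high-level skeleton is the same as the paper's (handle $j=1$ by showing $\mu_1 = \Omega(|G|/\Delta)$ via Chernoff bounds on $|G_1|$ and on the degrees, and handle $j \geq 2$ via a sparsification bound $|U_j| = \Ot(\mu_{j-1}\beta^2/p_{j-1})$ combined with \Cref{cl:pvalues}$(iii)$), and the $j=1$ case is essentially complete. But for $j\geq 2$ the heart of the argument --- the concentration step inside the sparsification lemma --- is exactly the part you leave open, and the fix you sketch does not work as stated. Conditioning on the \emph{final} $H_{j-1}$ does not leave the rank order of $U_{j-1}$ ``uniform enough'': the final $H_{j-1}$ is a function of the order itself (e.g., every edge of $H_{j-1}\setminus H_{j-2}$ must have rank at most $p_{j-1}$, and the realized pattern of underfull encounters constrains which edges came early), so a Chernoff bound on the positions of $V$-edges under that conditioning is circular. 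The paper (\Cref{lem:sparsification}) resolves the adaptivity differently, by a forward stopping-time argument: pass to the thought experiment \modifiedAddLayerNoInput{} that processes all of $U_i$ in random order; condition at each step $t$ on the current $H_i$, note that if $y$ unprocessed edges are currently underfull then the next edge is underfull with probability at least $y/|U_i|$, and crucially that during a run of non-underfull edges $H_i$ does not change, so ``underfull w.r.t.\ the current $H_i$'' coincides with ``underfull w.r.t.\ the final $H_i$'' at the moment of termination; then union bound over the at most $n^2$ stopping times $t$ and parameters $\tau$. Two further ingredients you omit are also needed: the equivalence of \addLayerNoInput{} (which iterates over $U_i\cap G_i$ in rank order) with \modifiedAddLayerNoInput{} with $\tau=\lfloor|U_i\cap G_i|/(4\mu_i\beta^2+1)\rfloor$, which uses \Cref{cor:algreturnsH} to argue termination occurs before the rank threshold $p_i$ ever matters, and a Chernoff lower bound $|U_i\cap G_i| = \Omega(p_i|U_i|)$ (with a trivial fallback when $p_i|U_i|$ is small) so that $\tau$ is large enough to convert $10|U_i|\log n/\tau$ into $O(\mu_i\beta^2\log n/p_i)$. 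Your case split ``the algorithm exhausted $\Gamma$'' never occurs by \Cref{cor:algreturnsH}, and the bound $|V\cap\Gamma|\leq 4\mu_{j-1}\beta^2$ in that branch says nothing about the edges of $V$ outside $G_{j-1}$, which are the ones constituting $U_j$.

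A second, smaller gap: your chain $\mu_{j-1}\leq\mu(G_{j-1})\leq\mu(G_j)\leq 2\mu_j$ mixes snapshots. The stored value $\mu_{j-1}$ is $\maximalmatchingsize{\hat G_{j-1}}$, computed at the last recomputation of layer $j-1$, while $\mu_j$ is fresh at the moment the claim is applied; since then up to $\frac{\epsilon}{k}\cdot\frac{\mu_{j-1}+1}{p_{j-1}}$ updates have occurred and deletions from $G_{j-1}$ can make the current matching smaller than $\mu_{j-1}$, so $\mu_{j-1}\leq\mu(G_{j-1})$ is not justified for the current graph. The paper bridges this using \Cref{cl:cclcrg128391t-1h2t3}, which shows w.h.p.\ only $O(\frac{\epsilon}{k}\mu_{j-1}+\log n)$ of those updates touch $G_{j-1}$, yielding $\mu_{j-1}/(\mu_j+1)=O(\log n)$; some such argument (and similarly an additive correction for edges inserted into $U_j$ since $H_{j-1}$ was last rebuilt, which your containment $U_j\subseteq V$ ignores) is needed to close your final computation.
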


\begin{claim}\label{cl:second-term}
	For every $j \in [k+1]$, with probability $1-1/n^4$, $\frac{p_j |H_k|}{\mu_j + 1} = O(\beta \log n)$.
\end{claim}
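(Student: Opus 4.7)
The plan is to first bound $|H_k|$ in terms of $\mu(G)$, and then to relate $\mu_j$ to $\mu(G)$ via a Chernoff argument anchored on a \emph{fixed} (non-random) matching. The case where this Chernoff argument is not strong enough, namely when $p_j \mu(G) = O(\log n)$, is handled by a direct calculation.

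For the bound on $|H_k|$, Observation~\ref{obs:degree-of-H} tells us that $H_k$ has maximum degree at most $\beta$, and hence so does $H_k \cap G$. Since every edge of $H_k \cap G$ has an endpoint in any maximum matching of $H_k \cap G$, we get $|H_k \cap G| \leq 2\beta \mu(H_k \cap G) \leq 2\beta \mu(G)$. For the remaining edges, any $e \in H_k \setminus G$ was added to some $H_i$ the last time $H_i$ was recomputed (at most $c_i$ updates ago), when $e$ lay in $G_i \subseteq G$; since $e \notin G$ now, it was deleted from $G$, and therefore also from $G_i$, during those $c_i$ updates, so $e \in F_i$. Hence $H_k \setminus G \subseteq \bigcup_{i \leq k} F_i$, and Lemma~\ref{lem:sum-Di-small} yields $|H_k \setminus G| \leq 13 \epsilon \mu(G)$ with probability at least $1 - 1/n^4$. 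Combining, $|H_k| \leq (2\beta + 13\epsilon) \mu(G) = O(\beta \mu(G))$ w.h.p.

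Next, I would fix a maximum matching $M^*$ of the current graph $G$. Since the adversary is oblivious, the update sequence and therefore $M^*$ are determined independently of the ranks $\pi$; hence each edge of $M^*$ lies in $G_j$ independently with probability $p_j$. In the regime $p_j \mu(G) \geq 40 \log n$, a Chernoff bound gives $|M^* \cap G_j| \geq p_j \mu(G)/2$ except with probability $1/n^5$. Since $\mu_j \geq \mu(G_j)/2 \geq |M^* \cap G_j|/2$ (because $\mu_j$ is the size of a maximal matching of $G_j$), this yields $\mu_j \geq p_j \mu(G)/4$, and hence $p_j |H_k|/(\mu_j + 1) \leq p_j \cdot O(\beta \mu(G)) / (p_j \mu(G)/4) = O(\beta)$.

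In the complementary regime $p_j \mu(G) < 40 \log n$, the Chernoff step is unavailable, but the trivial estimate $p_j |H_k|/(\mu_j+1) \leq p_j |H_k| \leq p_j \cdot O(\beta \mu(G)) = O(\beta \log n)$ already delivers the claim. A union bound over the two failure events above gives the overall probability $1 - 1/n^4$. I expect the main subtlety to be the need to apply the Chernoff bound to a matching chosen independently of $\pi$: both $H_k$ and $G_j$ depend on $\pi$, so any matching living inside them is correlated with $G_j$ and cannot be used directly. Anchoring on $M^*$, the maximum matching of the (adversary-determined) graph $G$, sidesteps this correlation and is the key to the argument.
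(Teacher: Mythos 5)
Your proof is correct and follows essentially the same route as the paper: bound $|H_k| = O(\beta\,\mu(G))$ via the degree bound of \Cref{obs:degree-of-H}, lower bound $\mu_j = \Omega(p_j\,\mu(G))$ by a Chernoff bound anchored on a fixed maximum matching of $G$ (valid by obliviousness), and dispose of the regime $p_j\,\mu(G) = O(\log n)$ with the trivial estimate. The only difference is cosmetic: the paper asserts $H_k \subseteq G$ (its \Cref{obs:tssssssthn-123}) and bounds $|H_k| \leq 2\mu(G)\beta$ directly, whereas you account for the lazily retained deleted edges through $\bigcup_i F_i$ and \Cref{lem:sum-Di-small}, which is a slightly more careful rendering of the same step.
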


The proof of \Cref{cl:first-term} is harder and is carried out in two sections. In \Cref{sec:j>=2} we prove \Cref{cl:first-term} for $j \geq 2$ using a sparsification guarantee on the size of $U_j$ for $j \geq 2$. Then in \Cref{sec:j=1} we prove \Cref{cl:first-term} for $j=1$ using a lower bound on the size of $\mu_1$.

The proof of \Cref{cl:second-term} is simple and we present it in \Cref{thm:dynamicalg}.

Before proving these claims, let us confirm that they do indeed imply the update-time bound of \Cref{thm:dynamicalg} (emphasizing again that we are bounding the amortized update-time here which we turn into a worst-case bound in \Cref{sec:worstcase}).

\begin{proof}[Proof of the update-time of \Cref{thm:dynamicalg}]
	As discussed, all computations outside \computeLayersNoInput{} take $k \poly(\log n)$ worst-case time per update. On the other hand, replacing the bounds of \Cref{cl:first-term} and \Cref{cl:second-term} into \Cref{eq:nclrcggrcg-12t3}, we get that the amortized cost of \computeLayers{j} for every $j \in [k+1]$ is $\Delta^{\frac{1}{k+1}} \cdot \poly(\epsilon^{-1} \beta k \log n)$. Summing all of them up, this only multiplies this bound by a $(k+1)$ factor, which is still $\Delta^{\frac{1}{k+1}} \cdot \poly(\epsilon^{-1} \beta k \log n)$.
	
	We show in \Cref{sec:worstcase} how the algorithm can easily be de-amortized by spreading the cost of a call to \computeLayersNoInput{} over multiple edge-updates, obtaining the claimed update-time bound of \Cref{thm:dynamicalg}.
\end{proof}

\subsubsection{Proof of \Cref{cl:first-term} for $j \geq 2$}\label{sec:j>=2}

For when $j \geq 2$, our main tool is the following sparsification lemma, which bounds the size of subgraph $U_j$. This sparsification property holds because of the special way we construct subgraph $H_{j-1}$ in \Cref{alg:H}. Intuitively, we commit to $H_{j-1}$ in \Cref{alg:H} when many edges have arrived in the order of $\pi$ and none of them are $(H_{j-1}, \beta)$-underfull. Since $\pi$ is a random order of the edge-set of the graph, we can then conclude that w.h.p. there are not so many  $(H_{j-1}, \beta)$-underfull edges left, which is precisely the size of $U_j$ right after we recompute $H_{j-1}$.

Although the details and parameters are very different, we note that the proof of this sparsification property is inspired by a work of Bernstein \cite{BernsteinICALP20} in the random-order streaming model.

\begin{lemma}[\textbf{Sparsification Lemma}]\label{lem:sparsification}
	For any $i \in [k]$, at any given time in the algorithm it holds with probability $1-1/n^{5}$ that
	$$
		|U_{i+1}| = O\left( \frac{\mu_{i} \beta^2 \log n}{p_{i}} \right).
	$$
\end{lemma}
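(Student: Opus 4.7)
The plan is to split the proof into two pieces. First, bound $|U_{i+1}|$ immediately after each call to $\addLayerNoInput$; second, argue that between two consecutive recomputations of $H_i$, the size of $U_{i+1}$ can grow by at most $O(\mu_i/p_i)$ edges, which is absorbed into the target bound $O(\mu_i \beta^2 \log n / p_i)$. The second piece is straightforward because $U_{i+1}$ grows only upon insertions into $G$, and the update rule of the algorithm guarantees at most $c_i \leq \frac{\epsilon}{k}\cdot\frac{\mu_i + 1}{p_i} + 1 = O(\mu_i/p_i)$ updates between consecutive recomputations of $H_i$, each contributing at most one edge to $U_{i+1}$. So the core task is the bound just after $\addLayerNoInput$ returns.

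For that bound, the starting point is the termination condition of \Cref{alg:H}, as used in the proof of \Cref{cor:algreturnsH}: at termination there is a block $A$ of the last $\tau + 1 = \Theta(|U_i \cap G_i|/(\mu_i \beta^2))$ consecutive edges of $\Gamma := U_i \cap G_i$ visited in $\pi$-order, none of which was detected as $(H_i,\beta)$-underfull at visit time. Since $H_i$ is modified only when an underfull edge is discovered and added, $H_i$ is frozen throughout the processing of $A$, so the ``not underfull at visit-time'' witness transfers to the final $H_i$. I would then argue that if $|U_{i+1}| \geq C \mu_i \beta^2 \log n / p_i$ for a sufficiently large constant $C$, then with probability at least $1 - n^{-6}$ some edge of $A$ would actually be $(H_i,\beta)$-underfull, contradicting the witness property.

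To make the sampling argument rigorous and bypass the circular dependence between $H_i$ and $\pi$, I would use a deferred-decisions argument: condition on the algorithm's history up to the last step $T - \tau$ at which an edge was added to $H_i$. After this step, no further modifications to $H_i$ occur, so $H_i$ is already its final value and the set $B$ of $(H_i,\beta)$-underfull not-yet-processed edges of $U_i$ is deterministic. Conditional on this history, the ranks of all unprocessed edges of $U_i$ are iid uniform on $(\pi_{T-\tau}, 1]$, so each edge of $B$ independently lands in $\Gamma$ (rank $\leq p_i$) versus $U_{i+1}$ (rank $> p_i$) with known probabilities, and conditional on which edges of $B$ land in $\Gamma$, those edges occupy a uniformly random set of positions among the unprocessed edges of $\Gamma$ ordered by rank. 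A Chernoff bound combined with random-order symmetry then shows that if $|U_{i+1}|$ exceeds $C \mu_i \beta^2 \log n / p_i$, then enough edges of $B$ populate the first $\tau + 1$ positions of the unprocessed $\Gamma$ (i.e., the block $A$) that at least one of them lies in $A$ with probability $\geq 1 - n^{-6}$, contradicting the witness; a union bound over the polynomially many times the lemma is invoked and over the hierarchy levels converts this into the $1 - 1/n^5$ guarantee. The main obstacle is exactly the coupling between the final $H_i$ and $\pi$; the deferred-decisions stopping at $T-\tau$ (the last modification time of $H_i$) is the key trick, since it freezes $H_i$ before any of the ``tail'' randomness governing $A$ is revealed, after which a standard Chernoff plus random-order argument closes the gap. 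Modulo the single-hierarchy setting, this parallels Bernstein's random-order streaming analysis of EDCS~\cite{BernsteinICALP20}.
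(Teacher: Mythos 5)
Your overall architecture matches the paper's: first absorb the lazy growth of $U_{i+1}$ between recomputations (at most one new edge per update, over $O(\mu_i/p_i)$ updates), then argue that when \addLayerNoInput{} terminates after a window of $\tau+1$ consecutively processed non-underfull edges, the randomness of $\pi$ forces the set of remaining $(H_i,\beta)$-underfull edges of $U_i$ (hence $U_{i+1}$) to be small. However, the step you single out as the key trick is not valid as stated. The time ``$T-\tau$, the last step at which an edge was added to $H_i$'' is not a stopping time: whether a given step is the \emph{last} modification is determined by the future randomness, namely by the event that the next $\tau+1$ processed edges are all non-underfull --- exactly the event whose probability you then want to bound. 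Conditioned on information that identifies $T-\tau$ as the last modification time, the ranks of the unprocessed edges are \emph{not} iid uniform; indeed, under that conditioning the ``contradiction'' event (an edge of $B$ landing in the block $A$) is impossible by definition, so the argument as written is circular rather than merely informal. The standard repair is what the paper does: for every \emph{fixed} step $t$, condition on the history up to $t$ (then the unprocessed ranks genuinely are iid uniform), observe that as long as no underfull edge is encountered $H_i$ --- and hence the set of remaining underfull edges --- does not change, bound by $n^{-10}$ the probability that the next $\approx(|U_i|/y)\log n$ processed edges all avoid the $y$ remaining underfull edges, and union bound over the at most $n^2$ choices of $t$ (and of $\tau$). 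Your union bound over invocations and hierarchy levels does not substitute for this union bound over candidate times within a single call.

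Two further points your sketch glosses over but which are needed to land the stated bound: $(1)$ the equivalence between processing $U_i\cap G_i$ in increasing rank and processing all of $U_i$ in increasing rank --- edges of rank $\le p_i$ form a prefix of the rank order, and by \Cref{cor:algreturnsH} the algorithm returns before the truncation at rank $p_i$ ever matters (the paper packages this, together with the per-step union bound, as the \modifiedAddLayerNoInput{} thought experiment); and $(2)$ converting the resulting bound $O(|U_i|\log n/\tau)$ with $\tau=\lfloor|U_i\cap G_i|/(4\mu_i\beta^2+1)\rfloor$ into the claimed $O(\mu_i\beta^2\log n/p_i)$ requires the Chernoff lower bound $|U_i\cap G_i|=\Omega(p_i|U_i|)$ conditioned on $G_{i-1}$ (using \Cref{cl:pvalues} part $(v)$), plus the trivial fallback $|U_{i+1}|\le|U_i|\le\mu_i\beta^2\log n/p_i$ in the regime $p_i|U_i|<\mu_i\beta^2\log n$ where concentration is unavailable. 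Your single appeal to ``a Chernoff bound'' is aimed at how many edges of $B$ fall in $\Gamma$, which is not where the factor $p_i$ in the final bound actually comes from.
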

\begin{proof}
	Note that after $H_i$ and $U_{i+1}$ are computed in \Cref{alg:computeLayers}, it takes at most $\frac{\epsilon}{k} \cdot \frac{\mu_i}{p_i}$ other updates to recompute them. During these updates, the size of $U_{i+1}$ increases by at most $\frac{\epsilon}{k} \cdot \frac{\mu_i}{p_i} \leq \frac{\mu_i}{p_i}$ since each edge update either adds at most one edge to $U_{i+1}$. As a result, it suffices to prove that w.h.p. $|U_{i+1}| = O\left( \frac{\mu_{i} \beta^2 \log n}{p_{i}} \right)$ right after a call to \Cref{alg:computeLayers}.
	
	The crux of the proof will be about analyzing the behavior of \addLayerNoInput{} (\Cref{alg:H}) which is called in \Cref{alg:computeLayers} to construct subgraph $H_i$, which in turn, is used to define $U_{i+1}$.
	
	As a thought experiment and only for the sake of the analysis, consider a modified version of \Cref{alg:H}, which we call \modifiedAddLayerNoInput{}, with two changes: $(i)$ instead of $\Gamma$ which will be $U_i \cap G_i$ when \addLayerNoInput{} is called, we iterate over all edges of $U_i$ in a {\em random} order; additionally $(ii)$ $\modifiedAddLayerNoInput{}$ takes a parameter $\tau$ as the input and returns $H_i$ when $\eta \geq \tau$ as opposed to the condition in \Cref{line:termination-condition} of \addLayerNoInput. We will first analyze \modifiedAddLayerNoInput{} and then show how it relates to the actual algorithm \addLayerNoInput{}.
	
	Let us condition on the subgraph $H_i$ after \modifiedAddLayerNoInput{} processes some $t$ edges of $U_i$ and let $y$ be the number of unprocessed edges of $U_i$ that are $(H_i, \beta)$-underfull. The probability that the next edge that arrives is $(H_i, \beta)$-underfull is at least $y/|U_i|$. Moreover, assuming that this edge is not $(H_i, \beta)$-underfull, subgraph $H_i$ does not change and so there are still at least $y$ other $(H_i, \beta)$-underfull unprocessed edges. As a result, the probability that the algorithm processes at least $10(|U_i|/y)\log n$ more edges and none of them are $(H_i, \beta)$-underfull, is  at most
	$$
	(1-y/|U_i|)^{10(|U_i|/y)\log n} \leq e^{-10 \log n} \leq n^{-10}.
	$$
	Equivalently, by a union bound over at most $n^2$ choices of $t$, we get that if \modifiedAddLayerNoInput{} successfully returns subgraph $H_i$, which recall that happens when it processes $\tau$ edges of $U_i$ and does not encounter any $(H_i, \beta)$-underfull edges, then with probability $1-n^{-8}$, the total number of $(H_i, \beta)$-underfull unprocessed edges in $U_i$ is at most $10(|U_i|/\tau) \log n$. By another union bound over at most $n^2$ choices of $\tau$, this holds for {\em every} input $\tau \in [n^2]$ with probability $1-n^{-6}$.
	
	Now let us go back to the actual algorithm \addLayerNoInput{}. We prove the claim even if we condition on subgraph $G_{i-1}$. That is, suppose that all the edges $e$ with $\pi_e \leq p_{i-1}$ are revealed. Note that conditioned on this event, the rank of every edge in $G \setminus G_{i-1}$ is independent and uniformly picked from $(p_{i-1}, 1]$. Now, recall that in \addLayerNoInput{} we process the edges of $U_i \cap G_i$ in the increasing order of $\pi$. Since an edge belongs to $G_i$ iff $\pi_e \leq p_i$ and $p_i < p_{i+1} < \ldots < p_{k+1}$ and since $U_i$ does not include any edge of $G_{i-1}$ by definition, this is equivalent to iterating over the edges of $U_i$ until the next edge $e$ in the sequence has $\pi_e > p_i$, i.e., does not belong to $G_i$ anymore. But if we reach this point in \addLayerNoInput{}, then it means that we have not already returned $H_i$, contradicting \Cref{cor:algreturnsH} that the algorithm always terminates. This implies that the extra condition on the next edge not belonging to $G_i$ is not, in fact, needed. This implies, in turn, that algorithm \modifiedAddLayerNoInput{} is exactly equivalent to \addLayerNoInput{} where the parameter $\tau$ is simply set to $\lfloor |\Gamma|/(4 \mu_i \beta^2 + 1) \rfloor = \lfloor |U_i \cap G_i|/(4 \mu_i \beta^2 + 1) \rfloor$. From the discussion of the previous paragraph, therefore, we can infer that when \addLayerNoInput{} terminates, the number of unprocessed edges in $U_i$ that are $(H_i, \beta)$-underfull (which also includes all edges in $U_{i+1}$) is with probability $1-n^{-6}$ at most
	\begin{equation}\label{eq:cghhhn-123}
		\frac{10 |U_i| \log n}{\lfloor |U_i \cap G_i|/(4 \mu_i \beta^2 + 1) \rfloor}.
	\end{equation}
	
	We finish the proof by considering the two cases $(i)$ $p_i|U_i| \geq \mu_i \beta^2\log n$ and $(ii)$ $p_i|U_i| < \mu_i \beta^2 \log n$ separately. (Case $(ii)$ happens to be trivial.)
	
	Consider case $(i)$ first. Since $U_i \subseteq G \setminus G_{i-1}$ and since each edge of $G \setminus G_{i-1}$ belongs to $G_i$ independently with probability $\frac{p_i - p_{i-1}}{1-p_{i-1}} \geq p_i/2$ (the inequality holds by \Cref{cl:pvalues} part $(v)$) conditioned on $G_{i-1}$, we get $\E[|U_i \cap G_i| \mid G_{i-1}] \geq \Omega(p_i |U_i|)$. Moreover, $p_i|U_i|$ is large enough in case $(i)$ to apply the Chernoff bound and get $|U_i \cap G_i| = \Omega(p_i |U_i|)$ with probability, say, $1-1/n^{-6}$. Combined with \Cref{eq:cghhhn-123} we can bound the size of $|U_{i+1}|$, with probability $\geq 1-n^{-5}$ by
	$$
		|U_{i+1}| \stackeq{(\ref{eq:cghhhn-123})}{\leq} \frac{10 |U_i| \log n}{\lfloor |U_i \cap G_i|/(4 \mu_i \beta^2 + 1) \rfloor} = O\left( \frac{|U_i| \log n}{p_i|U_i|/(\mu_i \beta^2)} \right) = O\left( \frac{\mu_i \beta^2 \log n}{p_i} \right).
	$$
	(We note that we used the assumption of $(i)$ one more time in the first equality above to get that the denominator does not become zero when taking the floor.)
	
	For case $(ii)$, note that $U_{i+1} \subseteq U_i$ from \Cref{obs:monotonicity} and thus we simply get $|U_{i+1}| \leq |U_i| \leq \frac{\mu_i \beta^2 \log n}{p_i}$ where the last inequality uses the assumption of $(ii)$. The proof is thus complete.
\end{proof}

Armed with the sparsification lemma, we can now prove \Cref{cl:first-term} for $j \geq 2$.

\begin{proof}[Proof of \Cref{cl:first-term} for $j \geq 2$.]
	From \Cref{lem:sparsification}, we get that with probability $1-1/n^5$, 
	\begin{equation}\label{eq:llrc-123898999}
	\frac{p_j |U_j|}{\mu_j + 1} = \frac{p_j }{\mu_j + 1} \cdot O\left( \frac{\mu_{j-1} \beta^2 \log n}{p_{j-1}} \right) = O\left( (\Delta^{\frac{1}{k+1}} /\epsilon) \cdot \frac{\mu_{j-1} \beta^2 \log n}{\mu_j + 1}\right),
	\end{equation}
	where the last bound follows from bound $p_j/p_{j-1} = O(\Delta^{\frac{1}{k+1}} / \epsilon)$ for all $j \geq 2$ of \Cref{cl:pvalues} part $(iii)$.
	
	Now note that $\mu_{j-1}$ is set to be the size of a maximal matching of $G_{j-1}$ in \computeLayersNoInput{} and $\mu_j$ is the size of a maximal matching of $G_j$ for the same reason. We would like to say that this means $\mu_{j-1} = O(\mu_j)$ since $G_{j-1} \subseteq G_j$. The only remaining challenge, however, is that we only set $\mu_{j-1}$ to be $\maximalmatchingsize{G_{j-1}}$ in \computeLayersNoInput{} and do not update $\mu_{j-1}$ until \computeLayers{i} is called again for some $i \leq j-1$. As a result, the edge updates since $\mu_{j-1}$ was last computed may cause $\maximalmatchingsize{G_i}$ to get much smaller than $\mu_{j-1}$. However, given that $\mu_{j-1}$ is recomputed after at most $\frac{\epsilon}{k} \cdot \frac{\mu_{j-1} + 1}{p_{j-1}}$ updates and only $p_{j-1}$ fraction of these updates belong to $G_{j-1}$ in expectation, the expected number of removed edges from $G_{i-1}$ is at most $O(\frac{\epsilon}{k} \mu_{j-1}) = O(\epsilon \mu_{j-1})$ and so $\maximalmatchingsize{G_{i-1}}$ is {\em in expectation} still at least $\Omega(\mu_{j-1})$. 
	
	To go from expectation to high probability, let $F_{j-1}$ be the set of edges added/removed from $G_{j-1}$ during the last $c_j$ updates (i.e., since the last time $\mu_{j-1}$ was computed). We showed in \Cref{cl:cclcrg128391t-1h2t3} that $|F_{j-1}| = O(\frac{\epsilon}{k} \mu_{j-1} + \log n)$ w.h.p. As such, under this high probability event, we have  $\mu_j = \maximalmatchingsize{G_j} = \Omega(\maximalmatchingsize{G_{j-1}}) = \Omega(\mu_{j-1} - \frac{\epsilon}{k} \mu_{j-1} - \log n) = \Omega(\mu_{j-1} - \log n)$. This means that, w.h.p., $\mu_{j-1}/(\mu_j + 1) = O(\log n)$. Plugging this to \Cref{eq:llrc-123898999} we get $\frac{p_j |U_j|}{\mu_j + 1} \ll O(\Delta^{\frac{1}{k+1}} \epsilon^{-1} \beta^2 \log^3 n)$.
\end{proof}

\subsubsection{Proof of \Cref{cl:first-term} for $j = 1$}\label{sec:j=1}

For the case where $j = 1$, by definition $U_j = U_1 = G$. Therefore, $U_1$ includes all the edges of the graph. To bound the update-time in this case, we show that $\mu_1$ is sufficiently large.

While it is well-known that any $m$-edge graph of maximum degree $\Delta$ has a matching of size $\Omega(m/\Delta)$, we prove in \Cref{lem:subgraphmatching} below the somewhat surprising fact that essentially the same lower bound of $\Omega(m/\Delta)$ holds for the size of the maximum matching in a random edge-subgraph provided that the edge-sampling probability satisfies a rather mild 	constraint.

\begin{claim}\label{lem:subgraphmatching}
	Let $G=(V, E)$ be an arbitrary $n$-vertex graph, let $\Delta$ be an upper bound on $G$'s maximum degree, and let $G_p=(V, E_p)$ be a random subgraph of $G$ including each edge independently with some probability $p$. If $p \geq \max\{15\frac{\ln n}{\Delta}, 32 \frac{\ln n}{|E|} \}$, then 
	$
		\Pr\left[ \mu(G_p) \geq \frac{|E|}{8\Delta}\right] \geq 1-2/n^4.
	$
\end{claim}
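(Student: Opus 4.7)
The plan is to reduce to the well-known fact that a graph with $m'$ edges and maximum degree $d'$ has a maximal matching of size at least $m'/(2d')$, then control $m'$ and $d'$ for $G_p$ by Chernoff. Concretely, I would show that with probability at least $1 - 2/n^4$, simultaneously $|E_p| \geq p|E|/2$ and $\Delta(G_p) \leq 2p\Delta$, and conclude that any maximal matching of $G_p$ has size at least
\[
\frac{|E_p|}{2\,\Delta(G_p)} \;\geq\; \frac{p|E|/2}{2 \cdot 2p\Delta} \;=\; \frac{|E|}{8\Delta}.
\]

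For the first event, $|E_p|$ is a sum of $|E|$ independent Bernoulli$(p)$ random variables with mean $p|E|$. By the multiplicative Chernoff lower tail with $\delta = 1/2$,
\[
\Pr\left[|E_p| < p|E|/2\right] \;\leq\; \exp\!\left(-\tfrac{p|E|}{8}\right) \;\leq\; \exp(-4 \ln n) \;=\; 1/n^4,
\]
using the hypothesis $p|E| \geq 32 \ln n$. For the second event, fix a vertex $v$ with $d_v := \deg_G(v) \leq \Delta$; then $\deg_{G_p}(v)$ is a sum of $d_v$ independent Bernoulli$(p)$ variables with mean $p d_v \leq p\Delta$. Setting the target to $2p\Delta$ and using the Chernoff upper tail (in the form $\Pr[X \geq (1+\delta)\mu] \leq e^{-\delta \mu / 3}$ for $\delta \geq 1$) with $(1+\delta) p d_v = 2p\Delta$, I get
\[
\Pr\!\left[\deg_{G_p}(v) \geq 2p\Delta\right] \;\leq\; \exp\!\left(-\tfrac{2p\Delta - p d_v}{3}\right) \;\leq\; \exp\!\left(-\tfrac{p\Delta}{3}\right) \;\leq\; \exp(-5 \ln n) \;=\; 1/n^5,
\]
using $p\Delta \geq 15 \ln n$. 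A union bound over the at most $n$ vertices yields $\Delta(G_p) \leq 2p\Delta$ with probability at least $1 - 1/n^4$. A final union bound gives both events simultaneously with probability at least $1 - 2/n^4$.

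Conditional on both events, any maximal matching $M$ of $G_p$ satisfies the claimed lower bound: every edge in $M$ ``kills'' at most $2 \cdot \Delta(G_p) - 1 \leq 4p\Delta$ edges of $G_p$ (itself plus its incident edges), so $|M| \geq |E_p|/(4p\Delta) \geq (p|E|/2)/(4p\Delta) = |E|/(8\Delta)$, and $\mu(G_p) \geq |M|$.

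The main subtle point is handling the degree concentration when $p d_v$ is much smaller than $p\Delta$; the Chernoff bound cannot directly give a multiplicative ratio, but switching to the absolute tail $\Pr[\deg_{G_p}(v) \geq 2p\Delta]$ and exploiting $\mu \leq p\Delta$ resolves this cleanly and is the only slightly nontrivial step. The rest is bookkeeping to check that the hypotheses $p \geq 15 \ln n/\Delta$ and $p \geq 32 \ln n/|E|$ are precisely what is needed to drive both Chernoff failure probabilities below $1/n^4$.
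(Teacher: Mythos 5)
Your proposal is correct and follows essentially the same route as the paper: a Chernoff bound showing $|E_p| \geq p|E|/2$, a per-vertex Chernoff bound plus union bound showing $\Delta(G_p) \leq 2p\Delta$, and then the standard fact that a graph with $m'$ edges and maximum degree $d'$ has a matching of size at least $m'/(2d')$. The only cosmetic difference is the last step, where you count via a maximal matching while the paper picks the largest color class of a $2\Delta(G_p)$-edge-coloring; both yield the identical bound $|E|/(8\Delta)$ with failure probability $2/n^4$.
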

\begin{myproof}
	Since $|E_p|$ is a sum of $|E|$ independent $p$-Bernoulli random variables, we have $\E|E_p| = p|E|$ and by applying the Chernoff bound we get
	\begin{equation}\label{eq:gg1219837}
		\Pr\left[|E_p| < \frac{p|E|}{2}\right] \leq
		\exp\left(- \frac{0.5^2 \E|E_p|}{2} \right) =
		\exp\left(- \frac{p |E|}{8} \right) \stackrel{p \geq \frac{32 \ln n}{|E|}}{\leq} \exp(- 4 \ln n) = n^{-4}.
	\end{equation}
	Next, note that for every vertex $v$, $\deg_{G_p}(v)$ is a sum of $\deg_G(v)$ independent $p$-Bernoulli random variables. This means $\E[\deg_{G_p}(v)] = p \deg_G(v) \leq p \Delta$. Applying Chernoff bound, we therefore get
	\begin{equation}\label{eq:rrll12893}
		\Pr[\deg_{G_p}(v) \geq 2 p \Delta] \leq
		\exp\left(-\frac{p \Delta}{3} \right) \stackrel{p \geq 15 \ln n/\Delta}{\leq}
		\exp\left(-\frac{15 \ln n}{3}\right) = n^{-5}.
	\end{equation}
	By a union bound, the maximum degree $\Delta_p$ of $G_p$ is at most $2p \Delta$ with probability $1-n^{-4}$. 
	
	Now take an arbitrary $2 \Delta_p$ edge coloring of $G_p$ and pick the color class with the largest number of colors. This is a matching of size at least $\frac{|E_p|}{2 \Delta_p}$. Hence, 
	$$
		\mu(G_p) \geq \frac{|E_p|}{2 \Delta_p} \stackrel{(\ref{eq:gg1219837})}{\geq}
		\frac{p|E|/2}{2 \Delta_p} \stackrel{(\ref{eq:rrll12893})}{\geq}
		\frac{p|E|/2}{2 (2 p \Delta)} \geq \frac{|E|}{8\Delta},
	$$
	with probability at least $1-2n^{-4}$.
\end{myproof}

\Cref{lem:subgraphmatching} is all we need to prove \Cref{cl:first-term} for $j=1$.

\begin{proof}[Proof of \Cref{cl:first-term} for $j=1$]
Recall from the statement of \Cref{cl:first-term} that we need to prove $\frac{p_1 |U_1|}{\mu_1 + 1} = O(\Delta^{\frac{1}{k+1}} \beta^2 \log^2 n)$. Since $U_1 = G$ at all times, we would like to apply \Cref{lem:subgraphmatching} and obtain that $\mu_1$ is w.h.p. at least $\Omega(|G|/\Delta) = \Omega(|U_1|/\Delta)$. From this, we would get that $\frac{p_1|U_1|}{\mu_1+1} = O(p_1\Delta)$. Given that $p_1 \leq 15 \Delta^{\frac{1}{k+1} - 1} \log n$, the RHS is $O(\Delta^{\frac{1}{k+1}} \log n)$ which is the desired bound.

To apply \Cref{lem:subgraphmatching} and complete the proof, we only need to show that $p_1 \geq 15 \frac{\ln n}{\Delta}$ and $p_1 \geq 32 \frac{\ln n}{|G|}$. The first inequality is  proved in \Cref{cl:pvalues} part $(ii)$. If the second condition does not hold, i.e., if $p_1 < 32 \frac{\ln n}{|G|}$, then we can prove the claim trivially. To see this, note that $\frac{p_1|U_1|}{\mu_1 + 1} \leq p_1|U_1| = p_1|G|$ and the latter is at most $O(\log n)$ if $p_1 < 32 \frac{\ln n}{|G|}$. Hence, either \Cref{cl:first-term} follows trivially or we can apply \Cref{lem:subgraphmatching} and prove it as discussed above.
\end{proof}

\subsubsection{Proof of \Cref{cl:second-term}}\label{sec:proof-second-term}

In this section we prove \Cref{cl:second-term} that, w.h.p., $\frac{p_j|H_k|}{\mu_j + 1} = O(\beta \log n)$.

We start with a simple observation to bound the size of $H_k$.

\begin{observation}\label{obs:tssssssthn-123}
	$|H_k| \leq 2\mu(G) \beta$.
\end{observation}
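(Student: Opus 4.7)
The plan is to combine the maximum-degree bound from Observation \ref{obs:degree-of-H} with the standard ``maximum matching yields a vertex cover of twice its size'' argument. Let $M^\star$ be a maximum matching of $H_k$, so that $|M^\star|=\mu(H_k)$. Since $M^\star$ is maximum (and hence maximal), $V(M^\star)$ is a vertex cover of $H_k$, and combined with the fact that every vertex of $H_k$ has degree at most $\beta$, we obtain
\begin{equation*}
|H_k| \;\le\; \sum_{v\in V(M^\star)}\deg_{H_k}(v) \;\le\; \beta\,|V(M^\star)| \;=\; 2\beta\,\mu(H_k).
\end{equation*}
Thus the whole observation reduces to checking $\mu(H_k)\le\mu(G)$.

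The main (and only) obstacle is that $H_k$ need not literally be a subgraph of $G$, since upon an edge deletion the algorithm does not remove the deleted edge from any $H_i$. However, Observation \ref{obs:tssssssthn-123} is invoked inside the proof of \Cref{cl:second-term}, at which point all data structures are evaluated right after a call to \computeLayers{j}; at that moment the freshly re-built levels $H_j,\ldots,H_k$ contain only edges pulled from the current $U_i\cap G_i\subseteq G$, so the only possibly-stale edges in $H_k$ are those inherited from $H_{j-1}$, which is not re-computed by this call. My plan is to bound $|H_k\setminus G|$ by the same Chernoff/union-bound counting as in \Cref{cl:cclcrg128391t-1h2t3}: the number of edges of $G_{j-1}\supseteq H_{j-1}$ deleted in the last $c_{j-1}$ updates is $O(\tfrac{\epsilon}{k}\mu_{j-1}+\log n)$ w.h.p., so $\mu(H_k)\le\mu(H_k\cap G)+|H_k\setminus G|\le\mu(G)+O(\tfrac{\epsilon}{k}\mu_{j-1}+\log n)$. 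This additive slack is absorbed into the $O(\beta\log n)$ target of \Cref{cl:second-term}, so the clean form ``$|H_k|\le 2\mu(G)\beta$'' is what ends up being used; an equivalent and entirely clean reformulation is $|H_k\cap G|\le 2\mu(G)\beta$, for which the vertex-cover argument above applies without any caveat.
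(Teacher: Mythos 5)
Your counting step is fine and is essentially the paper's: the maximum degree of $H_k$ is at most $\beta$ (\Cref{obs:degree-of-H}) and the endpoints of a maximum matching form a vertex cover, so $|H_k|\le 2\beta\,\mu(H_k)$. The paper closes the argument in one line by instead covering $H_k$ with the endpoints of a maximum matching of $G$, using $H_k\subseteq G_k\subseteq G$, which is exactly \Cref{obs:monotonicity} parts $(i)$ and $(v)$ --- an invariant the paper asserts to hold at all times. Where you diverge is in declaring this containment unusable and substituting a probabilistic patch; at that point you are no longer proving the stated inequality $|H_k|\le 2\mu(G)\beta$ but a different (w.h.p., additive-slack) statement, and the patch has two concrete problems.

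First, the stale edges of $H_k$ are not confined to deletions in the last $c_{j-1}$ updates: an edge at level $i\le j-1$ was last placed into $H_i$ when $H_i$ was recomputed, which was $c_i$ updates ago, and since $c_1\ge c_2\ge\dots\ge c_{j-1}$ it may have been deleted as long as $c_i\ge c_{j-1}$ updates ago. The correct accounting is $|H_k\setminus G|\le |F_1|+\dots+|F_{j-1}|$ (this is exactly what the paper does in its approximation analysis), which can then be bounded by $O(\epsilon\mu(G))$ via \Cref{lem:sum-Di-small}; the single $F_{j-1}$-type bound from \Cref{cl:cclcrg128391t-1h2t3} undercounts. Second, the ``clean reformulation'' $|H_k\cap G|\le 2\mu(G)\beta$ does not suffice for the place where the observation is used: \Cref{cl:second-term} bounds the $\epsilon^{-1}|H_k|$ term coming from \Cref{cl:time-computeLayers}, and that term counts \emph{all} stored edges of $H_k$, including deleted ones, since the algorithm must iterate over them to filter out edges not in $G$; a bound on $|H_k\cap G|$ alone leaves the update-time analysis open. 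Your instinct that $H_k\subseteq G$ is delicate is not unreasonable (deleted edges are kept in the $H_i$'s while being removed from the $G_i$'s), but the paper's intended, much shorter proof simply invokes its invariant $H_i\subseteq G_i$; if you insist on avoiding it, you must do the $\sum_i|F_i|$ accounting above and settle for a w.h.p.\ bound of the form $|H_k|\le O(\beta\mu(G))$, which is still enough for \Cref{cl:second-term}.
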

\begin{myproof}
	Observe that $G$ has a vertex cover $W$ with size at most $2\mu(G)$ (pick the two endpoints of a maximum matching of $G$). Moreover, since $H_k \subseteq G$ by \Cref{obs:monotonicity} part $(v)$, $W$ is also a vertex cover of $H_k$. Combined with \Cref{obs:degree-of-H} that bounds the maximum degree of $H_k$ by $\beta$, we get that $H_k$ has at most $|W| \beta = 2\mu(G)\beta$ edges.
\end{myproof}

Observe that if $p_j \mu(G) \leq 10\log n$ then we readily have the bound of \Cref{cl:second-term} since, by \Cref{obs:tssssssthn-123}, $\frac{p_j|H_k|}{\mu_j+1} \leq \frac{p_j \mu(G) \beta}{\mu_j+1} \leq p_j \mu(G) \beta = O(\beta \log n)$. So let us assume $p_j \mu(G) \geq 10 \log n$.

Now fix a maximum matching of $G$ and recall that each edge appears in $G_j$ independently with probability $p_i$. As such, $\E[\mu(G_j)] \geq p_j \mu(G)$. With our earlier assumption of $p_j \mu(G) \geq 10 \log n$, we can thus apply the Chernoff bound to get that with probability, say, $1 - 1/n^4$, $\mu(G_j)$ is at least $\Omega(p_j \mu(G))$. Combined with \Cref{obs:tssssssthn-123}, we thus get
$$
\frac{p_j|H_k|}{\mu_j + 1} \leq \frac{p_j \cdot \mu(G) \beta}{\mu_j +1} \leq \frac{p_j \cdot \mu(G) \beta}{\mu_j} = \frac{p_j \cdot \mu(G) \beta}{\Omega(p_j \mu(G))} = O(\beta).
$$
This completes the proof of \Cref{cl:second-term}.

\subsection{Getting a Worst-Case Update-time Bound}\label{sec:worstcase}

The algorithm that we presented in \Cref{sec:the-algorithm} was shown in \Cref{sec:update-time} to have the same update-time as claimed in \Cref{thm:dynamicalg}. However, we analyzed the {\em amortized} update-time in \Cref{sec:update-time} instead of the {\em worst-case} update-time. In this section, we show how with a simple trick of spreading the computation over multiple updates, we can get the same update-time but in the worst-case. We note that this idea is standard and has been used before in \cite{GuptaPeng-FOCS13} and \cite{BernsteinFH-SODA19}. As such, we only give a high level discussion of how it works.

Observe that the only place in the analysis of update-time that we used amortization was in bounding the update-time caused by the calls to \computeLayers{j} for various $j$. Indeed, with the way we defined the algorithm in \Cref{sec:the-algorithm}, this amortization is necessary since when we call \computeLayers{j} the time-complexity is larger than the final update-time and this must be amortized. The trick to get a worst-case bound is to spread this computation over the updates. That is, suppose that \computeLayers{j} takes $T$ operations. Instead of performing all these $T$ operations over one single edge update, we do it over multiple edge updates. 

More formally, recall that in our algorithm, when we call $\computeLayers{j}$, we set $c_j$ to be zero. Then upon each update we increase $c_j$ by one and call $\computeLayers{j}$ again when $c_j$ exceeds $\frac{\epsilon}{k} \cdot \frac{\mu_j + 1}{p_j}$. Now instead, when $c_j$ exceeds half this threshold, we call $\computeLayers{j}$ but spread its computation over the next $0.5 \frac{\epsilon}{k} \cdot \frac{\mu_j + 1}{ p_j}$ updates. Only when this computation is finished, we update the data structures. It is easy to see that the approximation ratio does not hurt since the total ``wait time'' until \computeLayers{j} is called again remains the same. For the update-time, one can adapt essentially the same analysis of the amortized update-time of \Cref{sec:update-time} to show that this modified algorithm now has the same update-time but in the worst-case. 

\hiddencomment{We can provide more details here. Particularly, we are here essentially spreading the computation over "previous" updates, whereas the amortized analysis amortizes them over future updates. There is no difference, since the threshold $\epsilon \mu_j/p_j$ only changes by a constant factor at most from one call to $\computeLayers{j}$ to the next, but this is worth mentioning.}

See \cite{GuptaPeng-FOCS13} and \cite{BernsteinFH-SODA19} for more discussions on this deamortization technique.

\subsection{Bounding Maximum Degree by $O(\sqrt{m})$}\label{sec:degree-root-m}

Up to this point, we have given an algorithm satisfying the approximation guarantee of \Cref{thm:dynamicalg} in $\Delta^{\frac{1}{k+1}} \cdot \poly(\epsilon^{-1} \beta k \log n) = \Ot(\Delta^{\frac{1}{k+1}})$ update-time, whereas observe that we claimed a bound of $\min\{\Delta^{\frac{1}{k+1}}, m^{\frac{1}{2(k+1)}} \} \cdot \poly(\epsilon^{-1} \beta k \log n)$ in \Cref{thm:dynamicalg}. In this section, we show how this is possible.

We prove the following lemma that can be applied as a black-box to the algorithm we have, yielding the guarantee of \Cref{thm:dynamicalg}.

\begin{lemma}
	Consider a fully dynamic graph $G$ and let $\Delta$ and $m$ be fixed upper bounds on the maximum degree and the number of edges of $G$. Suppose that there is an algorithm $\mc{A}$ that maintains an $\alpha$-approximate maximum matching of $G$ in $T(\Delta, n)$ update-time for some function $T(\Delta, n)$. Then there is an algorithm $\mc{A}'$ that maintains a $(1-\epsilon)\alpha$-approximate maximum matching of $G$ in $O(\min\{T(\Delta, n), T(\sqrt{m}/\epsilon, n) \} \log n)$ update-time. If the update-time of $\mc{A}$ is worst-case, then so is that of $\mc{A}'$.
\end{lemma}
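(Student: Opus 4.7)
The plan is a dichotomy based on the a priori bound $\Delta$. I would set $\Delta^* := 2\sqrt{m}/\epsilon$, and w.l.o.g.\ assume $T$ is monotone nondecreasing in its first argument (otherwise replace $T$ by its running maximum). If $\Delta \le \Delta^*$, then $T(\Delta,n) \le T(\sqrt{m}/\epsilon, n)$, so $\mc{A}'$ simply runs $\mc{A}$ on $G$ directly, obtaining $\alpha$-approximation in $T(\Delta,n)$ worst-case time; assume $\Delta > \Delta^*$ henceforth. I would define the ``light subgraph'' $G^* := \{(u,v) \in E(G) : \deg_G(u) \le \Delta^*,\ \deg_G(v) \le \Delta^*\}$. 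The heavy-vertex set $H := \{v : \deg_G(v) > \Delta^*\}$ satisfies $|H| \le 2m/\Delta^* = \epsilon\sqrt{m}$ by degree counting; fixing any maximum matching $M^*$ of $G$, every edge of $M^* \setminus G^*$ is incident to $H$, so at most $|H|$ edges of $M^*$ are lost, giving $\mu(G^*) \ge \mu(G) - \epsilon\sqrt{m}$.

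Next, I would describe the maintenance of $G^*$. Track $\deg_G(v)$ for each $v$ (with adjacencies in balanced BSTs, at $O(\log n)$ per update) and use hysteresis to avoid oscillation: a vertex $v$ becomes \emph{heavy} when $\deg_G(v)$ first exceeds $2\Delta^*$, and becomes \emph{light} again only when $\deg_G(v)$ drops back to $\Delta^*$. On a heavy transition of $v$, remove all of $v$'s current $G^*$-edges; on a light transition, re-add every $G$-edge $(v,w)$ with $w$ currently light. Each transition triggers $O(\Delta^*)$ edge updates to $G^*$, but at least $\Delta^*$ updates to $G$ must touch $v$ between consecutive transitions, so the amortized number of $G^*$-updates per $G$-update is $O(1)$. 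Feeding these updates to $\mc{A}$ (with degree bound $\Delta^*$) costs $O(T(\Delta^*, n))$ amortized per $G$-update, which I would make worst-case by spreading work across future updates, analogous to \Cref{sec:worstcase}. The matching $M$ output by $\mc{A}$ satisfies $|M| \ge \alpha\,\mu(G^*) \ge \alpha(\mu(G) - \epsilon\sqrt{m})$, i.e., at least $(1-\epsilon)\alpha\,\mu(G)$ whenever $\mu(G) \ge \sqrt{m}$.

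To handle the remaining regime $\mu(G) < \sqrt{m}$, I would run \Cref{prop:alg-small-matching} with threshold $\mu' = \sqrt{m}$ in parallel, returning the larger of the two maintained matchings; this gives $(1-\epsilon)\alpha$-approximation in all cases. The main obstacle will be that \Cref{prop:alg-small-matching} on its own has worst-case update time $O(\sqrt{m}/\epsilon^2 + \log^3 n)$, which can exceed $T(\sqrt{m}/\epsilon, n)\log n$ for rapidly decreasing $T$ (e.g., $T(x,n) = x^{1/(k+1)}$ with $k \ge 1$). The remedy I would apply is to replace the from-scratch maximum matching computation inside \Cref{prop:alg-small-matching}'s proof with an invocation of $\mc{A}$ on its ``core subgraph''---which has $O(|C|) = O(\sqrt{m})$ relevant vertices and maximum degree at most $O(\sqrt{m}) \le \sqrt{m}/\epsilon$---and then amortize $\mc{A}$'s $T(\sqrt{m}/\epsilon, n)$ per-update cost over the $\Theta(\epsilon\sqrt{m})$ updates that occur between core-graph rebuilds, yielding the required bound.
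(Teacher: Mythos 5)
Your first stage (the light subgraph) is fine, but it is a genuinely different route from the paper, and the place where it differs is exactly where your argument breaks. The paper never splits into heavy/light vertices: it has each vertex mark $\Delta'=O(\sqrt m/\epsilon)$ of its edges and keeps only doubly-marked edges, invoking Solomon's arboricity-sparsifier result (with arboricity $O(\sqrt m)$) to get a \emph{multiplicative} $(1-\epsilon)$ guarantee on $\mu(\widetilde G)$ for \emph{all} values of $\mu(G)$, so no small-matching case arises. Your light-subgraph $G^*$ only gives the additive bound $\mu(G^*)\ge \mu(G)-\epsilon\sqrt m$ (and this is tight: a single star of degree $n-1$ has $\mu(G)=1$ but $G^*$ empty), so you are forced into the parallel small-matching branch, and that branch is where the gap is.

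Concretely, your remedy for the regime $\mu(G)<\sqrt m$ does not deliver the claimed update time. In \Cref{prop:alg-small-matching} the core subgraph is rebuilt from scratch each phase and has $\Theta(|C|^2)=\Theta(\mu'^2)=\Theta(m)$ edges when $\mu'=\sqrt m$; merely \emph{constructing} it costs $\Omega(m)$ per rebuild, which amortized over the $\Theta(\epsilon\sqrt m)$ updates of a phase is already $\Omega(\sqrt m/\epsilon)$ per update, exceeding $O(T(\sqrt m/\epsilon,n)\log n)$ precisely in the interesting case where $T$ is small (e.g.\ $T=\Tht(1)$ for $k=\Theta(\log\Delta)$). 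Swapping the static matching computation for $\mc{A}$ makes this worse, not better: $\mc{A}$ is a dynamic algorithm, so presenting the rebuilt core graph to it means $\Theta(m)$ edge insertions/deletions at cost $T(\sqrt m/\epsilon,n)$ each, i.e.\ $\Theta(\sqrt m\,T/\epsilon)$ amortized per update, and maintaining the core graph incrementally instead is not $O(1)$ changes per update either, since a single change to the maintained vertex cover $C$ can alter $\Theta(|C|)=\Theta(\sqrt m)$ core edges. So the sentence ``amortize $\mc{A}$'s cost over the $\Theta(\epsilon\sqrt m)$ updates between rebuilds'' is not backed by an accounting that works; you would need a different mechanism for small $\mu(G)$ (or, as the paper does, a sparsifier whose guarantee is multiplicative from the start). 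The deamortization of your hysteresis bursts is also asserted only by analogy, but that part is plausibly repairable; the small-matching branch is the real gap.
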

\begin{myproof}
	Consider a process where each vertex $v$ in $G$ marks $\Delta' = O(\sqrt{m}/\epsilon)$ of its edges arbitrarily and let $\widetilde{G}$ be the subgraph of $G$ including each edge that is marked by {\em both} of its endpoints. Note that $\widetilde{G}$ clearly has maximum degree $\Delta'$. We show that this subgraph can be maintained in a way that every edge update to $G$ leads to at most three edge updates to $\widetilde{G}$. Additionally, we show that $\widetilde{G}$ will always include a $(1-O(\epsilon))$-approximate maximum matching of $G$.
	
	Let us first show how $\widetilde{G}$ can be maintained in $O(\log n)$ worst-case time, by simply maintaining the marked and not-marked edges of each vertex in a balanced BST. Upon insertion of an edge $e$ we check how many edges each of its endpoints are marked; each one of them that has marked less than $\Delta'$ edges adds $e$ to the set of its marked edges and if both add it we insert $e$ to $\widetilde{G}$. Upon deletion of an edge $e$, if it belongs to $\widetilde{G}$ we remove it, we also remove it from the marked edges of its endpoints. If the number of marked edges of any  endpoint of $e$ goes below the threshold $\Delta'$, it marks a new edge and adds it to $\widetilde{G}$ if it should. Overall, each edge update to $G$ can be handled in $O(\log n)$ time and leads to at most 3 edge updates to $\widetilde{G}$. Additionally, the construction of $\widetilde{G}$ is completely deterministic and so the sequence of updates to $\widetilde{G}$ gets fixed once those of $G$ are fixed.
	
	Now we prove that at any time $\mu(\widetilde{G}) \geq (1-O(\epsilon))\mu(G)$. To show this, we note that the marking algorithm above was first introduced by Solomon \cite{Solomon-ITCS18}. He showed that by setting $\Delta' = O(\alpha/\epsilon)$ where $\alpha$ is the {\em arboricity} of the graph, $\widetilde{G}$ will include a $(1-\epsilon)$-approximate maximum matching of $G$. This is all we need since it is a well-known fact that every $m$-edge graph has arboricity $O(\sqrt{m})$.
	
	To conclude the proof, note that we can simply maintain $\widetilde{G}$ and feed it to algorithm $\mc{A}$. Since the maximum degree of $\widetilde{G}$ is always $O(\sqrt{m}/\epsilon)$ and its matching is nearly as large as $G$, we get the claimed bound.
\end{myproof}

\section{Conclusion \& Open Problems}

We introduced the hierarchical edge-degree constrained subgraph (HEDCS). Using the HEDCS, we gave a unified framework that leads to several new update-time/approximation trade-offs for the fully dynamic matching problem, while also recovering previous trade-offs.

While we provided both a factor revealing LP (\Cref{sec:fk-LP}) and another analytical method (\Cref{sec:analysis-of-fkb}) for analyzing the approximation ratio achieved via HEDCS, it remains an extremely interesting problem to analyze its precise approximation factor. Specifically:
\begin{itemize}
	\item While the approximation ratio of HEDCS can still be tangibly above $1/2$ for say $k=4$, $k=5$, etc., we did not specify any lower bounds on this approximation ratio in \Cref{thm:main} since our factor revealing LP of \Cref{sec:fk-LP} becomes too inefficient to run for $k > 3$. 
	\item In our bounds of \Cref{thm:main} there are gaps between bipartite graphs and general graphs. We conjecture that this gap should not exist and the approximation ratio achieved via \HEDCS{\beta}{k}, for any constant $k$, should converge to the same value for both bipartite and general graphs (by letting parameter $\beta$ to be a large enough constant).
	\item It would be interesting to analyze how fast the approximation ratio of \HEDCS{\beta}{k} converges to $1/2$ by increasing $k$. While our \Cref{lem:large-k} shows that the approximation ratio of any \HEDCS{\beta}{k} is at least $\frac{1}{2} + \frac{1}{2^{2^{O(k)}}}$ (for large enough $\beta$), we do not believe this double-exponential dependence on $k$ is the right answer.
\end{itemize}

Next, we note that while the oblivious adversary assumption is well-received in the literature and holds in many natural applications of dynamic matching, it would be interesting to obtain the new trade-offs that we give also against adaptive adversaries (see \cite{Wajc-STOC20,BhattacharyaK21-ICALP21} for discussions about adaptive adversaries in the context of dynamic matching). One way to achieve this would be to give a deterministic dynamic algorithm for maintaining an HEDCS.

More broadly, the following intriguing questions about dynamic matching still remain open:

\begin{open}
Does there exist a fully dynamic algorithm maintaining a $(\frac{1}{2}+\Omega(1))$-approximate maximum matching in $n^{o(1)}$ update-time? In $\poly(\log n)$ update-time?
\end{open}

\begin{open} 
	Does there exist a fully dynamic algorithm maintaining a $(\frac{2}{3}+\Omega(1))$-approximate maximum matching in $o(n)$ update-time?
\end{open}

\section*{Acknowledgements}

We thank the anonymous SODA'22 reviewers for their thoughtful comments.

\clearpage

\bibliographystyle{alpha}
\bibliography{references}

\appendix
\clearpage

\section{Approximation Ratio of HEDCS: An Analytical Lower Bound}\label{sec:analysis-of-fkb}

In this section, we give a lower bound on the approximation ratio $\alpha(k, \beta, \beta^-)$ for large $k$. This analysis is particularly useful when the LP approach described in \Cref{sec:fk-LP} (which produces better lower bounds) becomes too inefficient to run in practice. We prove the following:

\begin{lemma}\label{lem:large-k}
	Fix any integers $k \geq 1$ and $\beta > \beta^- \geq 1$ where $\beta^- = (1-\delta)\beta$ for some $0 \leq \delta \leq 0.2$. Define $h(x) := 1/2^{2^{2x}}$. If $h(k) - 4\delta \geq 0$ then
	$$
		\alpha(k, \beta, \beta^-) \geq \frac{1}{2} + \frac{h(k)}{6} - \frac{2}{3} \delta.
	$$
\end{lemma}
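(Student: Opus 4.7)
My plan is to first convert the target into a lower bound on $f$, then prove it by induction on $k$ by dichotomizing based on the mass distribution across the hierarchy levels.

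\textbf{Step 1: Reduction to a bound on $f$.} Using the definition $\alpha(k,\beta,\beta^-) = \frac{2f(k,\beta,\beta^-)}{2f(k,\beta,\beta^-)+1}$, the inequality $\alpha \ge \frac{1}{2} + \gamma$ is equivalent to $f \ge \frac{1+2\gamma}{2-4\gamma}$. Setting $\gamma = h(k)/6 - 2\delta/3$ and Taylor-expanding (using that $\gamma$ is small under the hypothesis $h(k) - 4\delta \ge 0$), it suffices to establish a bound of the form $f(k,\beta,\beta^-) \ge \frac{1}{2} + \frac{h(k)}{3} - \frac{4}{3}\delta + O(\gamma^2)$. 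So the task reduces to showing that in any bipartite $\HEDCS{\beta}{k}$ with $|H| \ge (1-\delta)\beta|P|/2$, the ratio $|Q|/|P|$ exceeds $1/2$ by at least $\Omega(h(k)) - O(\delta)$.

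\textbf{Step 2: Setup and base case.} Fix such an $H$ with decomposition $H_1 \subseteq \cdots \subseteq H_k$; WLOG $|P|=1$ and write $\rho := |Q|$. Since every vertex has degree at most $\beta - 1$ (\Cref{obs:HEDCS-max-degree}) and the total degree on the $P$-side is at least $(1-\delta)\beta$, essentially all $P$-vertices are close to saturated. For the base case $k=1$, the structure is a genuine $\beta$-EDCS and the bound $\alpha(1,\beta,\beta^-) \to 2/3$ (from the EDCS analysis of \cite{AssadiBernsteinSOSA19}) easily dominates $1/2 + h(1)/6 = 1/2 + 1/96$, up to the additive $O(\delta)$ slack picked up from $\beta^- < \beta$; Observation~\ref{obs:alpha-lt-half} already handles the $\delta$-dependence.

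\textbf{Step 3: Inductive step via a threshold argument.} Assume the claim for $k-1$ and let $\tau := |H_k \setminus H_{k-1}|/|H|$ be the fraction of edges in the top layer. The argument splits on a carefully chosen threshold $\tau_\star$:
\begin{itemize}
\item If $\tau \le \tau_\star$, then $H_{k-1}$ retains at least $(1-\tau_\star)(1-\delta)\beta/2$ edges per unit $|P|$. Treating $H_{k-1}$ with decomposition $H_1 \subseteq \cdots \subseteq H_{k-1}$ as a $\HEDCS{\beta}{k-1}$ on its own induced subgraph, the inductive hypothesis (applied with effective deficit $\delta' = \delta + \tau_\star$) gives the required bound with $h(k-1)$.
\item If $\tau > \tau_\star$, then the top layer is dense, and every edge in $H_k \setminus H_{k-1}$ satisfies $\deg_{H_k}(e) \le \beta$. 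Restricting attention to vertices active in the top layer and applying a Bernstein--Stein style EDCS argument to $H_k \setminus H_{k-1}$ (carried out on top of the existing lower-level degrees, which only tighten the constraints) yields a $Q$-side surplus proportional to $\tau_\star$.
\end{itemize}
Choosing $\tau_\star$ to balance the two cases roughly as $\tau_\star \sim h(k-1)^{\Theta(1)}$ and computing the resulting ratio $h(k)$, one gets a recursion of the form $h(k) \gtrsim h(k-1)^4$ (matching $1/2^{2^{2k}} \sim (1/2^{2^{2(k-1)}})^4$).

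\textbf{Step 4: Tracking the $\delta$-loss.} The $\delta$-error enters in two ways: the initial density slack in the hypothesis, and the extra $\tau_\star$ slack when restricting to $H_{k-1}$ in the first case of the induction. Because $\tau_\star$ is at most a small polynomial in $h(k-1)$, and the hypothesis $h(k) - 4\delta \ge 0$ controls the scale of these errors, the cumulative additive loss stays within $O(\delta)$, producing the final $-\frac{2}{3}\delta$ term.

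\textbf{Main obstacle.} The delicate point is calibrating $\tau_\star$ so that \emph{both} cases deliver a bound of magnitude $\Omega(h(k-1)^c)$ for a uniform exponent $c$; too small a $\tau_\star$ wastes the density of the top layer, while too large a $\tau_\star$ destroys the density needed to invoke the inductive hypothesis on $H_{k-1}$. A second subtle point is that $H_{k-1}$ is only \emph{almost} a $\HEDCS{\beta}{k-1}$ of its ambient graph, because property $(ii)$ of Definition~\ref{def:HEDCS} was originally stated for $G \setminus H$ using the \emph{full} $H$, not $H_{k-1}$; the inductive hypothesis must therefore be framed purely in terms of the internal structure of the HEDCS (parts $P, Q$ and the edge count), which is exactly what Definition~\ref{def:f} provides, so this obstacle is absorbed by how $f$ was defined in the first place.
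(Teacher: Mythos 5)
Your overall skeleton (reduce $\alpha$ to a lower bound on $f$, induct on $k$, peel off the top level, track the $\delta$-loss) matches the paper's, and your closing remark that Definition~\ref{def:f} only uses the internal structure of the decomposition is exactly the right observation. But the heart of your inductive step --- the dense branch of the dichotomy on $\tau = |H_k\setminus H_{k-1}|/|H|$ --- has a genuine gap. A dense top layer does \emph{not} yield a ``$Q$-side surplus proportional to $\tau_\star$'' via an EDCS-style count restricted to $H_k\setminus H_{k-1}$. The problematic scenario: top-level edges $(u,w)$ may all be incident to $Q$-vertices $w$ whose degree in $H$ is already close to $\beta$; property $(i)$ then merely forces the $P$-endpoints $u$ to have tiny degree (at most $\xi\beta$ if $\deg_H(w)\geq(1-\xi)\beta$), and up to roughly half of the $P$-vertices can have tiny degree while the total edge count stays at $(1-\delta)\beta|P|/2$. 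In that configuration the top layer can contain $\Theta(\xi\beta|P|)$ edges --- easily exceeding your threshold $\tau_\star|H|$ --- yet restricted to the top layer you get a bipartite graph between a few low-degree $P$-vertices and nearly saturated $Q$-vertices, from which no nontrivial lower bound on $|Q|$ follows; the surplus would still have to come from recursing on $H_1\subseteq\cdots\subseteq H_{k-1}$, which your case 2 does not do. Relatedly, your sketch never exploits any bound on the number of \emph{low-degree} $Q$-vertices, which is the essential quantitative ingredient.

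The paper's inductive step avoids the density dichotomy entirely and argues by contradiction: assuming $|Q| < (1+h(k)-4\delta)\tfrac{|P|}{2}$, a degree count shows the set $Q_T$ of $Q$-vertices of degree at most $(1-\xi)\beta$ (with $\xi=\sqrt{h(k)}$) has size at most $\xi|P|/2$; every edge of $H_k\setminus H_{k-1}$ either touches $Q_T$ or leaves a $P$-vertex of degree at most $\xi\beta$, so deleting the $Q_T$-vertices and the edges of those $P$-vertices removes the \emph{entire} top level at a cost of at most $\tfrac{3}{2}\xi\beta|P|$ edges (crucially using only $|P_k|\leq|P|$ and the per-vertex degree bound, not any bound on $|P_k|$ or on the top layer's density). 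The remainder, with decomposition $H_1\subseteq\cdots\subseteq H_{k-1}$, is a valid $\HEDCS{\beta}{k-1}$ in the sense of Definition~\ref{def:f}, so the induction hypothesis with deficit $\delta'=\delta+3\xi$ and the numeric relation $h(k-1)-12\sqrt{h(k)}\geq h(k)$ yields the contradiction. To repair your proof you would need to replace your case 2 by such an argument (or by a further split of case 2 into ``many low-degree $Q$-vertices, hence direct surplus'' versus ``top layer sits on tiny-degree $P$-vertices, hence cheap to delete and recurse''), at which point the threshold $\tau_\star$ becomes unnecessary. Minor points: your Step 1 reduction should be done with an exact inequality such as $\frac{1+x}{2+x}\geq\frac12+\frac{x}{6}$ rather than a Taylor expansion with an unquantified $O(\gamma^2)$, and Observation~\ref{obs:alpha-lt-half} only gives a bound of about $\tfrac12$, so it cannot ``handle the $\delta$-dependence'' in the base case; the $k=1$ case should instead use the EDCS bound $|Q|\geq\frac{d_P}{\beta-d_P}|P|$ with $d_P=(1-\delta)\beta/2$.
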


Note that by picking $\beta$ large enough and $\beta^-$ close enough to $\beta$, we can make $\delta$ desirably small (even dependent on $k$), satisfying $h(k) - 4 \delta \geq h(k)/2 > 0$. Hence, \Cref{lem:large-k}  together with the approximation guarantee of HEDCS based on function $\alpha(\cdot)$ discussed in \Cref{sec:HEDCS} implies that setting $k = \frac{1}{\epsilon} -1$ and setting $\beta = O_\epsilon(1)$ large enough, results in a \HEDCS{\beta}{k} which includes a strictly better than half approximate matching of ratio $\frac{1}{2} + \Omega_{\epsilon}(1)$ and can be maintained in update-time $\widetilde{O}(\min\{\Delta^\epsilon, m^{\epsilon/2}\})$, as claimed in \Cref{table:results}.

We prove the following auxiliary claim.

\begin{claim}\label{cl:cclhxbhs-th123}
	Let $h(x) := 1/2^{2^{2x}}$. Let $H=(P, Q, E)$ be any bipartite \HEDCS{\beta}{k} with at least $(1-\delta)\beta|P|/2$ edges, where $0 \leq \delta \leq 1 - 3\sum_{i=1}^{k}\sqrt{h(i)}$. Then
	$$
		|Q| \geq \left(1+h(k)-4\delta\right)\frac{|P|}{2}.
	$$
\end{claim}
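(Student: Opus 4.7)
We induct on $k$, writing $\delta^* := \sqrt{h(k)}/3$ for the ``budget'' absorbed when passing between hierarchy levels. \emph{Base case $k=1$:} Here $H$ is a bipartite $\beta$-EDCS. Fix $\alpha \in (0,1/2)$ and set $Q_H := \{v : \deg_H(v) > (\tfrac12+\alpha)\beta\}$. The EDCS upper bound forces every neighbor of $Q_H$ to have degree $< (\tfrac12-\alpha)\beta$, so double-counting edges incident to $Q_H$ yields $|Q_H| \le \tfrac{1-2\alpha}{1+2\alpha}|P|$. Plugging this into the density inequality $(1-\delta)\beta|P|/2 \le (\tfrac12+\alpha)\beta|Q\setminus Q_H| + (\beta-1)|Q_H|$, summing $|Q| = |Q_L|+|Q_H|$, and optimizing $\alpha$ (numerically $\alpha \approx 0.22$) yields $|Q| \ge (1+h(1)-4\delta)|P|/2$ since $h(1)=1/16$.

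\paragraph{Inductive step.} We split based on where the edges concentrate. \emph{Case A (bottom-heavy):} If $|H_{k-1}| \ge (1-\delta-\delta^*)\beta|P|/2$, then $H_{k-1}$ is itself a $\HEDCS{\beta}{k-1}$ on $(P,Q)$ of deficiency $\delta' := \delta+\delta^*$. The summability assumption $\delta \le 1 - 3\sum_{i=1}^k \sqrt{h(i)}$ gives $\delta' \le 1 - 3\sum_{i=1}^{k-1}\sqrt{h(i)}$ (the $-3\sqrt{h(k)}$ removed from the sum dominates the $+\delta^*$ added to $\delta$), so the induction hypothesis applies and yields $|Q| \ge (1+h(k-1)-4\delta')|P|/2$. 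Using the identity $h(k) = h(k-1)^4$ (equivalently $\sqrt{h(k)} = h(k-1)^2$), one checks $h(k-1) - \tfrac43 h(k-1)^2 \ge h(k-1)^4 = h(k)$, which rearranges to $h(k-1)-4\delta' \ge h(k)-4\delta$ as required. \emph{Case B (top-heavy):} Otherwise $|L_k| := |H|-|H_{k-1}| \ge \delta^*\beta|P|/2$. Every level-$k$ edge satisfies the full constraint $\deg_H(u)+\deg_H(v) \le \beta$, so has an endpoint of degree $\le \beta/2$. Setting $Q_L := \{v : \deg_H(v) \le \beta/2\}$, in the sub-case where at least half of $L_k$ has its $Q$-end in $Q_L$, the bound $|L_k \cap (P \times Q_L)| \le (\beta/2)|Q_L|$ forces $|Q_L| \ge \delta^*|P|/2$; plugging into $(1-\delta)|P|/2 \le |Q_L|/2 + |Q_H|$ and summing gives $|Q| \ge (\tfrac12 - \tfrac\delta2 + \tfrac{\delta^*}{4})|P|$, which exceeds $(1+h(k)-4\delta)|P|/2$ for $k \ge 2$ because $\delta^*/4 = \sqrt{h(k)}/12 \gg h(k)/2$ (indeed $h(k) \le 1/2^{16}$); the regime $\delta \ge h(k)/3$ is covered by the trivial bound $|Q| \ge (1-\delta)|P|/2$.

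\paragraph{Main obstacle.} The technical crux is Case~B's mirror sub-case, where at least half of $L_k$ has its low-degree endpoint in $P$ rather than $Q$. The direct $Q_L$-counting then fails, and the $|Q|$ bound must be recovered indirectly by coupling $|P_L| \ge \delta^*|P|/2$ with the full-degree constraint on level-$k$ edges together with the overall density $|H| \ge (1-\delta)\beta|P|/2$. Calibrating this coupling against the summability constraint on $\delta$ is the delicate step: it is precisely this balance that pins the budget at $\delta^* = \Theta(\sqrt{h(k)})$ and forces the double-exponential recurrence $h(k) = h(k-1)^4$, hence the form $h(k) = 1/2^{2^{2k}}$.
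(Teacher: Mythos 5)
Your base case and your ``bottom-heavy'' Case A are fine in spirit (Case A's bookkeeping with $\delta^* = \sqrt{h(k)}/3$ and the identity $h(k)=h(k-1)^4$ checks out), but the proof is not complete: the mirror sub-case of Case B, which you yourself flag as the ``main obstacle,'' is exactly the hard case and you give no argument for it. With your threshold $\beta/2$ the hypothesis ``at least half of the level-$k$ edges have their low-degree endpoint in $P$'' carries essentially no information: since $|H| \geq (1-\delta)\beta|P|/2$, the average $P$-degree is already about $\beta/2$, and the extremal configuration one is fighting against (all of $P$ at degree $\approx \beta/2$, all of $Q$ at degree $\approx \beta$, $|Q|\approx|P|/2$) satisfies it. Knowing $|P_L| \geq \delta^*|P|/2$ therefore cannot by itself recover the $+h(k)$ gain over $|P|/2$, and no amount of ``calibrating the coupling'' with the density bound alone will close this; a different mechanism is needed.

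The paper's induction avoids the case split entirely and supplies precisely the missing mechanism. Assume for contradiction $|Q| < (1+\xi^2-4\delta)\tfrac{|P|}{2}$ with $\xi := \sqrt{h(k)}$, and threshold the $Q$-side asymmetrically: $Q_T := \{w \in Q : \deg_H(w) \leq (1-\xi)\beta\}$. The contradiction hypothesis together with the degree cap $\beta$ forces $|Q_T| \leq \xi|P|/2$, i.e.\ there are very few low-degree $Q$-vertices. Moreover, any $P$-vertex having a level-$k$ edge to $Q\setminus Q_T$ must have degree at most $\xi\beta$ (not merely $\beta/2$), because that level-$k$ edge is subject to $\deg_{H_k}(e)\leq \beta$ while its $Q$-endpoint already has degree $>(1-\xi)\beta$. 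Deleting the vertices of $Q_T$ and the edges of these $P$-vertices removes all level-$k$ edges at a cost of only $\tfrac{3}{2}\xi\beta|P|$ edges, leaving a \HEDCS{\beta}{k-1} of deficiency $\delta+3\xi$ to which the induction hypothesis applies; the recursion closes via $h(k-1)-12\sqrt{h(k)} \geq h(k)$. In short, the key ideas your proposal is missing are (a) running the induction under the contradiction hypothesis so that the low-degree $Q$-vertices can be counted directly, and (b) using the threshold $(1-\xi)\beta$ rather than $\beta/2$, which makes the offending $P$-vertices have degree $O(\xi\beta)$ and hence cheap to delete; with these, your troublesome mirror sub-case never arises.
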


Let us first see how \Cref{cl:cclhxbhs-th123} proves \Cref{lem:large-k}.

\begin{proof}[Proof of \Cref{lem:large-k}]
	First, it can be confirmed that $1 - 3\sum_{i=1}^{k} \sqrt{h(i)} > 0.2$ for any $k \geq 1$. As a result, the condition $0 \leq \delta \leq 0.2$ of \Cref{lem:large-k} always satisfies the condition $0 \leq \delta \leq 1 - 3\sum_{i=1}^{k}\sqrt{h(i)}$ of \Cref{cl:cclhxbhs-th123}.
	
	Next, note from \Cref{def:f} that since \Cref{cl:cclhxbhs-th123} holds for any bipartite \HEDCS{\beta}{k}, we have
	$$
		f(k, \beta, \beta^-) \geq \frac{1}{2} \cdot (1+h(k)-4\delta).
	$$
	On the other hand, by \Cref{def:alpha} we have $\alpha(k, \beta, \beta^-) = \frac{2f(k, \beta, \beta^-)}{2f(k, \beta, \beta^-)+1}$, which means
	$$
		\alpha(k, \beta, \beta^-) \geq \frac{1+h(k)-4\delta}{2+h(k)-4\delta} \geq \frac{1}{2} +\frac{h(k)-4\delta}{6} = \frac{1}{2} + \frac{h(k)}{6} - \frac{2}{3} \delta.
	$$
	The second inequality above comes from the fact that $\frac{1+x}{2+x} \geq \frac{1}{2} + \frac{x}{6}$ for any $0 \leq x \leq 1$.
\end{proof}

So it remains to prove \Cref{cl:cclhxbhs-th123}.

\begin{proof}
	The function $h$ in the statement is specifically defined in a way that for any integer $k \geq 2$, it satisfies the following condition which will be useful later in the proof 
	\begin{equation}\label{eq:hcllrch9123-123}
		h(k-1) - 12\sqrt{h(k)} \geq h(k).
	\end{equation}
	Let us, for brevity, define $d_P := m/|P| = (1-\delta)\beta/2$ to be the average degree of the $P$ side of $H$. Let us also assume that $H_1, \ldots, H_k$ is a hierarchical decomposition for \HEDCS{\beta}{k} which must exist by definition of HEDCS.
	
	We now prove the claim by induction on $k$.
	
	\paragraph{Base case $k=1$:} Since for $k=1$ a \HEDCS{\beta}{1} is by definition equivalent to a $\beta$-EDCS, from the known bounds for EDCS we have $|Q| \geq \frac{d_P}{\beta - d_P}|P|$ (see e.g. \cite[Lemma 2.2]{BehnezhadEDCS21} or \cite{AssadiBernsteinSOSA19}). Thus:
	$$
		|Q| \geq \frac{d_P}{\beta - d_P}|P| = \frac{(1-\delta)\beta/2}{\beta-(1-\delta)\beta/2} |P| = \frac{1-\delta}{1+\delta} |P| \geq (1-2\delta) |P| = (2-4\delta)\frac{|P|}{2} > (1+\frac{1}{2^{2^2}} - 4\delta)|P|.
	$$
	
	\paragraph{Induction step:} Now let us assume that the claim holds for $k-1$ and prove it for $k$. Let us for brevity define $\xi := \sqrt{h(k)}$ and assume for the sake of contradiction that 
	\begin{equation}\label{eq:hclrchh-9102398}
	|Q| < (1+h(k)-4\delta)\frac{|P|}{2} = (1+\xi^2 - 4\delta)\frac{|P|}{2}.
	\end{equation}
	Define $Q_T := \{ w \in Q \mid \deg_H(w) \leq (1-\xi)\beta\}$. Since the maximum degree of any HEDCS is bounded by $\beta$ as proved in \Cref{obs:HEDCS-max-degree}, we have
	$$
		(|Q| - |Q_T|) \cdot \beta + |Q_T| \cdot (1-\xi)\beta \geq m \geq (1-\delta) \beta |P|/2.
	$$
	The LHS can be simplified to $\beta |Q| - \xi \beta |Q_T|$. Then by canceling the $\beta$ terms on both sides, we get
	$$
		|Q| - \xi |Q_T| \geq (1-\delta)|P|/2.
	$$
	This, in turn, implies that
	\begin{equation}\label{eq:clr921987-9}
		|Q_T| \leq \frac{|Q| - (1-\delta)\frac{|P|}{2}}{\xi} 
		\stackeq{(\ref{eq:hclrchh-9102398})}{\leq} \frac{(1+\xi^2-4\delta)\frac{|P|}{2} - (1-\delta)\frac{|P|}{2}}{\xi} = \frac{\xi^2-3\delta}{\xi}\cdot \frac{|P|}{2} \stackeq{$(\delta \geq 0)$}{\leq} \xi \cdot \frac{|P|}{2}.
	\end{equation}
	
	Now let $P_k$ be the vertices in $P$ that have at least one edge in $H_k \setminus H_{k-1}$ with the other endpoint in $Q \setminus Q_T$. Observe that by definition of $Q_T$, for any $v \in Q \setminus Q_T$ we have $\deg_H(v) > (1 - \xi)\beta$. As such, for property $(i)$ of HEDCS to hold for $H$, every vertex in $P_k$ must have degree at most $\xi \beta$. (As otherwise, its edge in $H_k \setminus H_{k-1}$ that goes to $Q \setminus Q_T$ violates property $(i)$ of HEDCS.)
	
	We remove all the {\em vertices} in $Q_T$ (along with their edges) from graph $H$. We also remove all the {\em edges} of $P_k$ from $H$, leaving their vertices as singletons in the graph. Observe that this removes all the edges of $H_k \setminus H_{k-1}$ from the graph since any such edge must be either connected to $P_k$ or to $Q_T$, and so we end up with a \HEDCS{\beta}{k-1} and can apply the induction hypothesis. Before applying it, though, let us upper bound the total number of removed edges. Since as discussed every vertex in $H$ has degree at most $\beta$, and every vertex in $P_k$ has degree at most $\xi \beta$, 	
	\begin{flalign*}
		\text{\# of removed edges} \leq |Q_T| \cdot \beta + |P_k| \cdot \xi \beta \stackeq{(\ref{eq:clr921987-9})}{\leq} \xi \frac{|P|}{2} \cdot \beta + |P_k| \cdot \xi \beta \leq \frac{3}{2} \xi |P| \beta.
	\end{flalign*}
	Hence, the resulting \HEDCS{\beta}{k-1} still has $|P|$ vertices in one part, and the number of its edges is
	$$
		(1-\delta)\beta |P|/2 - \frac{3}{2} \xi |P| \beta = (1 - \delta  - 3 \xi)\beta |P| /2.
	$$
	We have $\delta \leq 1 - 3 \sum_{i=1}^{k} \sqrt{h(i)} = 1 - 3\sum_{i=1}^{k-1} \sqrt{h(i)} - 3 \xi$ from the statement of the claim. This means that, first, $1 - \delta - 3\xi \geq 3\sum_{i=1}^{k-1} \sqrt{h(i)} \geq 0$, and second, $\delta + 3\xi \leq 1 - 3 \sum_{i=1}^{k-1}\sqrt{h(i)}$. Hence, the number of edges of the resulting \HEDCS{\beta}{k-1} satisfies the constraint of the lemma  (i.e., the induction hypothesis) with $\delta' = \delta + 3 \xi$ and so we get
	\begin{flalign*}
		|Q| \geq |Q \setminus Q_T| &\geq \left( 1 + h(k-1) - 4(\delta + 3\xi) \right) \frac{|P|}{2}\\
		&= (1 + h(k-1) - 12 \xi - 4\delta) \frac{|P|}{2}\\
		&= (1 + h(k-1) - 12 \sqrt{h(k)} - 4 \delta) \frac{|P|}{2}\\
		&\geq (1 + h(k) - 4\delta) \frac{|P|}{2}. \tag{By (\ref{eq:hcllrch9123-123}).}
	\end{flalign*}
	This is exactly the needed inequality for $k$ and so the proof is complete.
\end{proof}

\section{Proof of \Cref{prop:apx-HEDCS}}\label{sec:prop-apx-HEDCS-proof}

\begin{restate}{\Cref{prop:apx-HEDCS}}
	\propapxstatement
\end{restate}

As discussed, the proof of is obtained by adapting the technique of Assadi and Bernstein \cite{AssadiBernsteinSOSA19} for the analysis of EDCS. Particularly, the analysis for the bipartite case is completely due to \cite{AssadiBernsteinSOSA19}, and the analysis of the general case is also based on their approach, with an additional idea.

It would be convenient to consider a slightly more definition of HEDCS. 

\begin{definition}\label{def:generalized-HEDCS}
	Let $\beta > \beta^- \geq 1$ and $k \geq 1$ be integers. We say  graph $H$ is a \HEDCS{(\beta, \beta^-)}{k} of $G$ iff there is a hierarchical decomposition $\emptyset = H_0 \subseteq H_1 \subseteq H_2 \subseteq \ldots \subseteq H_k = H$ satisfying:
\begin{enumerate}[label=$(\roman*)$]
	\item For every $i \in [k]$ and any edge $e \in H_i \setminus H_{i-1}$, $\deg_{H_i}(e) \leq \beta$.
	\item For any edge $e \in G \setminus H$, $\deg_{H}(e) \geq \beta^-$.
\end{enumerate}
\end{definition}

Note that only property $(ii)$ has changed from the original \Cref{def:HEDCS} of HEDCS.  In particular, a \HEDCS{\beta}{k} is simply a \HEDCS{(\beta, \beta-1)}{k}.

We start with the following lemma that directly follows from the approach of \cite{AssadiBernsteinSOSA19}:

\begin{lemma}\label{lem:HEDCS-bipartite}
Let $H$ and $U$ be subgraphs of a bipartite graph $G$ and suppose $H$ is a \HEDCS{(\beta, \beta^-)}{k} of $G \setminus U$. Then $\mu(H \cup U) \geq \alpha(k, \beta, \beta^-) \mu(G)$.
\end{lemma}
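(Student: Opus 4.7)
The plan is to adapt the argument of Assadi--Bernstein \cite{AssadiBernsteinSOSA19} from EDCS to the hierarchical setting, while accounting for the extra edge-set $U$. First, fix a maximum matching $M^*$ of $G$. Since $G$ is bipartite, so is $H\cup U$, and K\"onig's theorem yields a minimum vertex cover $C = C_P\cup C_Q$ of $H\cup U$ with $|C| = \mu(H\cup U)$. Decompose $M^* = M_C\sqcup M_R$, where $M_C$ consists of the $M^*$-edges touching $C$ and $M_R$ consists of the $M^*$-edges with both endpoints in $\bar C_P\cup \bar C_Q$. Since each vertex of $C$ meets at most one $M^*$-edge, $|M_C|\leq |C| = \mu(H\cup U)$. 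Crucially, since $\bar C_P$ and $\bar C_Q$ have no $H\cup U$-edge between them, $M_R\subseteq G\setminus(H\cup U)\subseteq G\setminus U$, and so property~$(ii)$ of the HEDCS applies: every $e\in M_R$ satisfies $\deg_H(e)\geq \beta^-$.

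Next, let $A := V(M_R)\cap \bar C_P$ and $B := V(M_R)\cap \bar C_Q$; since $M_R$ is a matching, $|A|=|B|=|M_R|$. Because $\bar C_P$ has no $H$-edge to $\bar C_Q$ (as $H\subseteq H\cup U$ and $C$ covers $H\cup U$), every $H$-neighbor of a vertex in $A$ lies in $C_Q$, and symmetrically every $H$-neighbor of a vertex in $B$ lies in $C_P$. Summing the edge-degree lower bound over the edges of $M_R$ gives the key inequality $e_H(A,C_Q)+e_H(B,C_P)\geq \beta^-|M_R|$. Moreover, each of the restricted bipartite graphs $H\cap(A\times C_Q)$ and $H\cap(B\times C_P)$ is itself a $\HEDCS{(\beta,\beta^-)}{k}$: property~$(i)$ is preserved under vertex-restriction (edge-degrees can only shrink, and the hierarchical decomposition is inherited automatically), and property~$(ii)$ holds vacuously when the restriction is viewed as a HEDCS of itself. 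Hence the definition of $f$ is directly applicable to either side.

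The main technical obstacle is to turn the one-sided bound into the symmetric bound $|C_P|+|C_Q|\geq 2f(k,\beta,\beta^-)|M_R|$, which is exactly what the factor $\alpha = 2f/(2f+1)$ requires. A single straightforward application of $f$ to the denser of the two sides only gives $|C_P|+|C_Q|\geq f|M_R|$, which would yield the weaker ratio $f/(f+1)$. Upgrading to $2f$ exploits both sides of the edge-count inequality simultaneously: applying $f$ to $(A,C_Q)$ and to $(B,C_P)$ in a balanced fashion, and if necessary to appropriate subsampled subsets so that the density threshold $\ge \beta^-|P|/2$ holds on each side, one argues (following Assadi--Bernstein's double-counting template) that $|C_P|$ and $|C_Q|$ cannot both be simultaneously small. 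This is precisely the step that I expect to be the crux of the proof, since the extension to HEDCS requires that the $f$-bound be invoked on restrictions of $H$ that still respect the hierarchical decomposition inherited from $H$.

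Once $|C|\geq 2f|M_R|$ is established, the conclusion is immediate:
\[
|M^*| \;=\; |M_C|+|M_R| \;\leq\; \mu(H\cup U) + \tfrac{1}{2f}\,\mu(H\cup U) \;=\; \tfrac{2f+1}{2f}\,\mu(H\cup U),
\]
which rearranges to $\mu(H\cup U)\geq \tfrac{2f}{2f+1}\,\mu(G) = \alpha(k,\beta,\beta^-)\,\mu(G)$, as required.
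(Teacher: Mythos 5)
Your setup (K\"onig cover $C=C_P\cup C_Q$ of $H\cup U$, splitting $M^*$ into $M_C$ and $M_R$, observing $M_R\subseteq G\setminus(H\cup U)$ so property $(ii)$ gives $\deg_H(e)\geq\beta^-$ for $e\in M_R$, and the final arithmetic) is sound and essentially mirrors the paper's argument, which uses an extended Hall witness set in place of the cover. But the proof has a genuine gap exactly where you flag it: the bound $|C_P|+|C_Q|\geq 2f(k,\beta,\beta^-)\,|M_R|$ is never established, and the route you sketch for it --- applying $f$ separately to $(A,C_Q)$ and $(B,C_P)$ ``in a balanced fashion,'' with subsampling to restore the density threshold --- does not work as described. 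The edge mass $\beta^-|M_R|$ can be entirely concentrated on one side (e.g.\ every vertex of $B$ may have $H$-degree $0$ while every vertex of $A$ has $H$-degree $\geq\beta^-$); then no subsampling can make the $(B,C_P)$ side dense, a single application of $f$ to the $(A,C_Q)$ side only gives $f|M_R|$, and the definition of $f$ (whose hypothesis is exactly $\geq\beta^-|P|/2$ edges) does not let you convert the doubled density on that side into a factor $2f$.

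The missing idea, which is the actual crux in the paper, is to make \emph{one} application of $f$ to a single auxiliary bipartite graph rather than two one-sided applications. Take $P:=A\cup B$, i.e.\ \emph{all} $2|M_R|$ endpoints of $M_R$ (from both sides of $G$'s bipartition), and let $H'$ consist of all $H$-edges incident to $P$. Since $A\subseteq\bar C_P$ and $B\subseteq\bar C_Q$ and $C$ covers $H\cup U$, no $H$-edge joins two vertices of $P$, so $H'$ is bipartite with parts $P$ and $Q'\subseteq C_P\cup C_Q$, and it inherits property $(i)$ of the decomposition (property $(ii)$ is vacuous for $H'$ viewed as an HEDCS of itself), exactly as you argued for your restrictions. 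Summing $\deg_H(e)\geq\beta^-$ over $e\in M_R$ gives $|H'|\geq\beta^-|M_R|=\beta^-|P|/2$, because $|P|=2|M_R|$; so \Cref{def:f} applied once to $H'$ yields $|C_P|+|C_Q|\geq|Q'|\geq f(k,\beta,\beta^-)|P|=2f(k,\beta,\beta^-)|M_R|$, which is the bound your last display needs. In short: the factor $2$ comes from counting both endpoints of each $M_R$-edge in the $P$-side of a single graph, not from balancing two separate invocations of $f$; without this step your argument only certifies the weaker ratio $f/(f+1)$.
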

\begin{myproof}[Proof sketch]
	The proof is essentially identical to the proof of \cite[Lemma~3.1]{AssadiBernsteinSOSA19} and we follow the same terminology. Let $L$ and $R$ be the vertex parts of the base graph $G$. Consider an extended Hall's witness set $A$ for graph $H \cup U$, suppose w.l.o.g. that $A \subseteq L$ and let $B = N_{H \cup U}(A) \subseteq R$. Also define $\bar{A} := L \setminus A$ and $\bar{B} := R \setminus \bar{B}$.
	
	There must be a matching $M$ in $G$ going from $A$ to $\bar{B}$ such that $|M| = \mu(G) - \mu(H \cup U)$. Additionally, it must hold that $\mu(H \cup U) \geq |\bar{A}| + |B|$ (see \cite{AssadiBernsteinSOSA19} Eq (1)). 
	
	Let $H'$ be the subset of edges of $H$ with one endpoint matched by $M$. Let $P$ be the subset of vertices of $H'$ that touch $M$ and let $Q$ the rest of the vertices of $H'$. Note that $H'$ must be bipartite with $P$ and $Q$ being a valid partitioning.  Now it can be confirmed that $H'$ has at least $|P|\beta^-/2$ edges because every edge in $M$ is missing from $H \cup U$ and so its edge-degree by the second property of HEDCS must be at least $\beta^-$ in $H$. From the definition of $f(k, \beta, \beta^-)$ applied to graph $H'$, we get that the $Q$ side of $H'$ has at least $f(k, \beta, \beta^-)|P| = 2f(k, \beta, \beta^-)|M|$ vertices. But since $Q \subseteq \bar{A} \cup B$, this implies that
	$$
		\mu(H \cup U) \geq |\bar{A}| + |B| \geq 2f(k, \beta, \beta^-)|M| \geq 2f(k, \beta, \beta^-)(\mu(G) - \mu(H \cup U)).
	$$
	Moving the terms, we get 
	$$
		\mu(H \cup U) \geq \frac{2f(k, \beta, \beta^-)}{2f(k, \beta, \beta^-) + 1} \mu(G) = \alpha(k, \beta, \beta^-) \mu(G).\qedhere
	$$
\end{myproof}

Note that \Cref{lem:HEDCS-bipartite} immediately implies the first part of \Cref{prop:apx-HEDCS} for bipartite graphs. This can also be extended to the general case using the L\'ovasz Local Lemma (LLL) as in \cite{AssadiBernsteinSOSA19}, leading to the following result (which note is slightly different from \Cref{prop:apx-HEDCS}):

\begin{lemma}\label{lem:general-via-LLL}
	Let $H$ and $U$ be subgraphs of a general (i.e., not necessarily bipartite) graph $G$ and suppose that $H$ is a \HEDCS{\beta}{k} of $G \setminus U$. Then for some $\gamma = O(\sqrt{\beta \log(k\beta)})$, it holds that $\mu(H \cup U) \geq \alpha(k, \frac{1}{2}\beta + \gamma, \frac{1}{2}\beta - \gamma ) \mu ( G )$.
\end{lemma}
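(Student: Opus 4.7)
The plan is to reduce the general (non-bipartite) case to the bipartite case (where \Cref{lem:HEDCS-bipartite} applies) via a randomized bipartition argument using the Lovász Local Lemma, in the spirit of \cite{AssadiBernsteinSOSA19}. Concretely, I would fix a maximum matching $M^*$ of $G$, and construct a random bipartition $(L, R)$ of $V(G)$ as follows: for each edge $(u,v) \in M^*$, independently flip a fair coin to place exactly one of $u,v$ into $L$ and the other into $R$; for every vertex not saturated by $M^*$, place it into $L$ or $R$ independently with probability $1/2$. Let $G'$ denote the bipartite graph of edges of $G$ crossing $(L,R)$, and let $H' = H \cap G'$, $U' = U \cap G'$, $H'_i = H_i \cap G'$. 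By construction $M^* \subseteq G'$, so $\mu(G') \geq \mu(G)$.

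Next I would argue that with positive probability the induced graph $H'$ is a $\HEDCS{(\tfrac{\beta}{2}+\gamma,\,\tfrac{\beta}{2}-\gamma)}{k}$ of $G' \setminus U'$ (in the sense of \Cref{def:generalized-HEDCS}) with the chain $H'_1 \subseteq \cdots \subseteq H'_k$. For each vertex $v$ and each level $i \in [k]$, define the bad event $B_{v,i}$ that the number of neighbors of $v$ in $H_i$ lying on the opposite side of the bipartition deviates from $\tfrac{1}{2}\deg_{H_i}(v)$ by more than $\gamma/2$. Since $\deg_{H_i}(v) \le \beta$ by \Cref{obs:HEDCS-max-degree} and the side-assignments of distinct neighbors are pairwise independent (the only dependency introduced by $M^*$ is between matched pairs, which can be absorbed into a constant factor in the concentration), a Chernoff/Hoeffding bound gives $\Pr[B_{v,i}] \le 2\exp(-\Omega(\gamma^2/\beta))$. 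Each $B_{v,i}$ depends only on the side-assignments of $v$ and its $H$-neighbors, so it is mutually independent of all events $B_{u,j}$ with $u$ at $H$-distance at least $3$ from $v$; hence each bad event depends on at most $O(k\beta^2)$ others. Choosing $\gamma = C\sqrt{\beta \log(k\beta)}$ for a sufficiently large constant $C$, the symmetric LLL condition $e \cdot \Pr[B_{v,i}] \cdot (k\beta^2+1) \le 1$ is satisfied, so with positive probability no bad event occurs.

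Conditioning on such a partition, I would verify the two HEDCS conditions for $H'$ inside $G' \setminus U'$. For condition $(i)$, any edge $e = (u,v) \in H'_i \setminus H'_{i-1}$ satisfies
\[
\deg_{H'_i}(e) = \deg^{\mathrm{cross}}_{H_i}(u) + \deg^{\mathrm{cross}}_{H_i}(v) \le \tfrac{1}{2}\deg_{H_i}(u) + \tfrac{1}{2}\deg_{H_i}(v) + \gamma = \tfrac{1}{2}\deg_{H_i}(e) + \gamma \le \tfrac{\beta}{2}+\gamma,
\]
using the original HEDCS bound $\deg_{H_i}(e) \le \beta$. For condition $(ii)$, any edge $e \in (G' \setminus U') \setminus H'$ also lies in $(G \setminus U) \setminus H$, so $\deg_H(e) \ge \beta - 1$, and the concentration gives $\deg_{H'}(e) \ge \tfrac{1}{2}(\beta-1) - \gamma \ge \tfrac{\beta}{2}-\gamma$ after inflating $\gamma$ by an additive constant (which is absorbed into the $O(\cdot)$).

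Finally, applying \Cref{lem:HEDCS-bipartite} to $G'$, $H'$, $U'$ yields
\[
\mu(H \cup U) \ge \mu(H' \cup U') \ge \alpha\!\left(k,\, \tfrac{\beta}{2}+\gamma,\, \tfrac{\beta}{2}-\gamma\right) \mu(G') \ge \alpha\!\left(k,\, \tfrac{\beta}{2}+\gamma,\, \tfrac{\beta}{2}-\gamma\right) \mu(G),
\]
as claimed. The main technical obstacle is the LLL bookkeeping: one must carefully verify that the dependency graph of the events $\{B_{v,i}\}$ has degree $\poly(k,\beta)$ independent of $n$, so that the deviation $\gamma$ need only grow with $\sqrt{\beta \log(k\beta)}$ and not with $\sqrt{\beta \log n}$; this is what makes the application of LLL (rather than a crude union bound) essential for the stated bound.
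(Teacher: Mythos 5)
Your proposal is correct and follows essentially the same route as the paper's proof: fix a maximum matching, split matched pairs across a random bipartition, use per-vertex, per-level Chernoff bounds combined with the Lov\'asz Local Lemma (with dependency degree $\poly(k,\beta)$, independent of $n$) to show the crossing subgraph is a generalized $\HEDCS{(\frac{\beta}{2}+\gamma,\,\frac{\beta}{2}-\gamma)}{k}$, and then invoke \Cref{lem:HEDCS-bipartite}. The only cosmetic differences (restricting $U$ to crossing edges and your bookkeeping of the additive error from the matched partner) do not change the argument.
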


\begin{myproof}
	Fix an arbitrary maximum matching $M^\star$ of $G$ and then construct a random bipartite subgraph $\widetilde{G}=(L, R, \widetilde{E})$ of $G$ by putting one endpoint of each $e \in M^\star$ in $L$ and the other in $R$ chosen randomly, and allocating the rest of the vertices not matched by $M^\star$ independently and uniformly to $L$ and $R$. This way, the maximum matching of $\widetilde{G}$ remains exactly equal to $G$. Now let $\widetilde{H} := H \cap \widetilde{G}$, let $(H_1, \ldots, H_k)$ be the hierarchical decomposition of $H$, and let $\widetilde{H}_i = H_i \cap \widetilde{G}$.
	
	As in \cite{AssadiBernsteinSOSA19}, the key observation is that the expected degree of every vertex $v$ in $\widetilde{H}_i$ is essentially $\deg_{H_i}(v)/2$ up to an additive error of one. Letting $\lambda = 6\sqrt{\log(k\beta)/\beta}$, the Chernoff bound gives
	$$
		\Pr\Big[|\deg_{\widetilde{H}_i}(v) - \deg_{H_i}/2| \geq \lambda \beta + 1\Big] \leq 2\exp\left( -\frac{\lambda^2 \beta^2}{3\beta} \right) = 2\exp(-12\log(k\beta)) < (k\beta)^{-10}.
	$$
	
	Now let $\mc{E}_{v, i}$ be the event that $|\deg_{\widetilde{H}_i}(v) - \deg_{H_i}/2| \geq \lambda \beta + 1$. Note that $\mc{E}_{v, i}$ depends only on the part that vertices in $N_H(v)$ are assigned to. Noting that $|N_H(v)| \leq \beta - 1$ for all $v$ by \Cref{obs:HEDCS-max-degree}, we get that $\mc{E}_{v, i}$ depends on at most $\beta^2 k$ other events $\mc{E}_{u, j}$. As such, the events $\{\overline{\mc{E}_{v, i}}\}_{v \in V, i \in [k]}$ satisfy the LLL condition, and so there is a partitioning where $\cap_{v \in V, i \in [k]} \overline{\mc{E}_{v, i}}$ happens. We consider this partitioning from this point on in the analysis.
	
	Let $\gamma := 2 \lambda \beta + 3 \leq 13 \sqrt{\beta \log(k\beta)}$. We show that $\widetilde{H}$ is a \HEDCS{(\frac{1}{2}\beta + \gamma, \frac{1}{2}\beta - \gamma)}{k} of $\widetilde{G}$. Indeed, we show that $\widetilde{H}_1, \ldots, \widetilde{H}_k$ is a hierarchical decomposition, satisfying the HEDCS properties for $\widetilde{H}$. We just have to prove the two properties of HEDCS holds.

	For property $(i)$ of HEDCS, for any $i \in [k]$ and any edge $e = (u, v) \in \widetilde{H}_i$ we have $$\deg_{\widetilde{H}_i}(e) \leq \frac{1}{2}\deg_{H_i}(e) + 2\lambda \beta +2 \leq \frac{1}{2}\beta+2\lambda \beta +2 \leq \frac{1}{2}\beta + \gamma$$ where the first inequality holds because of the events $\overline{\mc{E}_{v, i}}$ and $\overline{\mc{E}_{u, i}}$, and the second inequality holds because $H$ is a \HEDCS{\beta}{k}. 
	
	For property $(ii)$ of HEDCS, take an edge $e = (u, v) \in \widetilde{G} \setminus \widetilde{H}$. We have $$\deg_{\widetilde{H}}(e) \geq \frac{1}{2}\deg_{H}(e) - 2\lambda \beta - 2 \geq \frac{1}{2}(\beta -1) - 2\lambda\beta -2 > \frac{1}{2}\beta - \lambda,$$ where again the first inequality holds because of the events $\overline{\mc{E}_{v, i}}$ and $\overline{\mc{E}_{u, i}}$, and the second inequality holds because $H$ is a \HEDCS{\beta}{k} and $e \in G \setminus H$ by construction.
	
	Plugging \Cref{lem:HEDCS-bipartite}, we thus get $\mu(\widetilde{H} \cup U) \geq \alpha(k, \frac{1}{2}\beta + \gamma, \frac{1}{2}\beta-\gamma) \mu(\widetilde{G})$. The claim then follows since $\mu(H \cup U) \geq \mu(\widetilde{H} \cup U)$ as $\widetilde{H} \subseteq H$, and since $\mu(G) = \mu(\widetilde{G})$ as discussed.
\end{myproof}

To go from the guarantee of \Cref{lem:general-via-LLL} to the guarantee of \Cref{prop:apx-HEDCS}, we prove the following claim using the {\em dependent rounding} scheme of Gandhi {\em et al.}~\cite{GandhiKPS06}.

\begin{claim}
	Let $k \geq 1$ and $\beta \geq 1$ be integers, let $\beta$ be such that $\beta \geq \generalbeta{\beta' k}$ for some sufficiently large constant $c \geq 1$ and integer $\beta' \geq 2k - 1$, and let $\gamma$ be as in \Cref{lem:general-via-LLL}. Then it holds that $\alpha(k,\frac{1}{2}\beta+\gamma, \frac{1}{2}\beta-\gamma) \geq (1 - o(1)) \cdot \alpha(k,\beta'+2k-1,\beta')$.
\end{claim}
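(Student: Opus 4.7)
Since $\alpha(k,B,B^-)=2f(k,B,B^-)/(2f(k,B,B^-)+1)$ is strictly increasing in $f$ and a small multiplicative drop in $f$ translates to at most an equally small multiplicative drop in $\alpha$ (because $1-\alpha\le 1/2$ on the relevant range), it suffices to prove the corresponding $f$-inequality $f(k,\tfrac12\beta+\gamma,\tfrac12\beta-\gamma)\ge(1-o(1))\,f(k,\beta'+2k-1,\beta')$. Fix any bipartite $\HEDCS{\tfrac12\beta+\gamma}{k}$ $H$ on parts $(P,Q)$ with hierarchy $H_0\subseteq\cdots\subseteq H_k$ and $|H|\ge(\tfrac12\beta-\gamma)|P|/2$. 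The plan is to exhibit a subgraph $H'\subseteq H$ (inheriting its hierarchy layer-by-layer from $H$) that almost witnesses membership in the class controlling $f(k,\beta'+2k-1,\beta')$, from which the definition of $f$ applied to $H'$ will yield the required lower bound on $|Q|/|P|$.

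The construction is to apply the dependent rounding of Gandhi et al.\ to each layer $H_j\setminus H_{j-1}$ independently, with common marginal probability $p=(\beta'-1)/(\tfrac12\beta+\gamma)$. The degree-preserving guarantee yields $\deg_{H'_j\setminus H'_{j-1}}(v)\in\{\lfloor p\deg_{H_j\setminus H_{j-1}}(v)\rfloor,\lceil p\deg_{H_j\setminus H_{j-1}}(v)\rceil\}$ for every $v$ and $j$, and summing across layers gives $\deg_{H'_i}(v)\in p\deg_{H_i}(v)\pm i$. Consequently, for any $e\in H'_i\setminus H'_{i-1}$ we obtain $\deg_{H'_i}(e)\le p\deg_{H_i}(e)+2i\le p(\tfrac12\beta+\gamma)+2k=\beta'+2k-1$, so $H'$ is indeed a bipartite $\HEDCS{\beta'+2k-1}{k}$. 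For the edge count, the negative association of dependent rounding gives Chernoff-type concentration of $|H'|$ around $p|H|$, so $|H'|\ge(1-o(1))\,p(\tfrac12\beta-\gamma)|P|/2=(1-o(1))(\beta'-1)\cdot\tfrac{\tfrac12\beta-\gamma}{\tfrac12\beta+\gamma}\cdot|P|/2$. Under the hypothesis $\beta\ge c(\beta'k)^2\log(\beta'k)$ for a large constant $c$, with $\gamma=O(\sqrt{\beta\log(k\beta)})$ as in \Cref{lem:general-via-LLL}, the fraction $\tfrac{\tfrac12\beta-\gamma}{\tfrac12\beta+\gamma}=1-O(\gamma/\beta)=1-o(1)$, giving $|H'|\ge(1-o(1))(\beta'-1)|P|/2$.

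Since $H'$ uses the same vertex parts $(P,Q)$ as $H$, the ratio $|Q|/|P|$ is unchanged, and applying the definition of $f(k,\beta'+2k-1,\beta')$ to $H'$ (after handling the slightly deficient edge count) yields $|Q|/|P|\ge(1-o(1))\,f(k,\beta'+2k-1,\beta')$. As $H$ was arbitrary, this is exactly the desired $f$-inequality, which then translates into the claimed $\alpha$-inequality. The main obstacle will be the ``off by one'' in the edge-count deficit: the $\pm 2k$ rounding error in edge-degree produced by per-layer rounding consumes the $2k-1$ slack between $\beta'+2k-1$ and $\beta'$ exactly, leaving $|H'|\approx(\beta'-1)|P|/2$ rather than $\beta'|P|/2$. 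I would resolve this either (i) via a perturbation/continuity argument on the factor-revealing LP of \Cref{sec:fk-LP}, establishing $f(k,B,B^--1)\ge(1-O(1/B^-))\,f(k,B,B^-)$, or (ii) by strengthening per-layer rounding to a coupled rounding that controls every cumulative degree $\deg_{H'_i}(v)$ within $\pm 1$ of $p\deg_{H_i}(v)$; option (ii) reduces the edge-degree slack to $\pm 2$ and actually yields an edge surplus for $k\ge 2$, with the $k=1$ case falling back to the standard EDCS analysis.
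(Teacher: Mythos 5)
Your overall route is the same one the paper takes: reduce the $\alpha$-inequality to the corresponding $f$-inequality, take an arbitrary bipartite $\HEDCS{(\frac{1}{2}\beta+\gamma)}{k}$ on parts $(P,Q)$ with at least $(\frac{1}{2}\beta-\gamma)|P|/2$ edges, scale/subsample each layer, round with the dependent rounding of Gandhi et al., and apply \Cref{def:f} to the rounded graph. The gap is exactly the one you flag yourself, and it is genuine: with marginal $p=(\beta'-1)/(\frac{1}{2}\beta+\gamma)$ the rounded graph has only about $(1-o(1))(\beta'-1)|P|/2$ edges, short of the $\beta'|P|/2$ that \Cref{def:f} requires, and since $\beta'$ is a fixed constant independent of $\beta$ (e.g.\ $\beta'=217$ for $k=2$ in general graphs), this shortfall is a factor $1-\Theta(1/\beta')$ that does \emph{not} vanish as $\beta$ (equivalently the constant $c$) grows, which is the regime in which the claim's $o(1)$ must tend to zero. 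Consequently your fix (i), even if carried out, yields only $\alpha(k,\tfrac{1}{2}\beta+\gamma,\tfrac{1}{2}\beta-\gamma)\geq(1-O(1/\beta'))\,\alpha(k,\beta'+2k-1,\beta')$, which is strictly weaker than the claim and would degrade the constants in \Cref{thm:main}; moreover a continuity argument ``on the LP'' does not transfer to $f$, since \Cref{lem:LPlowerboundsf} only gives $f\geq \LP{\cdot}$, so you would need a perturbation argument for $f$ itself. Your fix (ii) requires a rounding that preserves every \emph{cumulative} degree $\deg_{H'_i}(v)$ within $\pm 1$ across all $k$ layers simultaneously, which is not what Gandhi et al.\ provide and is left unproven (and you would additionally need to raise $p$ to create the surplus you mention).

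The paper closes this gap not with either of your fixes but with a sharper accounting of the rounding error, which lets it scale by $\beta'/(\frac{1}{2}\beta+\gamma)$ rather than $(\beta'-1)/(\frac{1}{2}\beta+\gamma)$: rounding the layers sequentially as you do, for an edge $(u,v)$ at level $i$ either all per-layer fractional degrees $x^{(j)}(u),x^{(j)}(v)$ with $j\leq i$ are integral, in which case rounding leaves them untouched and the edge-degree stays at most $\beta'$, or at least one is fractional, in which case the sum of floors is at most $\beta'-1$ and hence the sum of ceilings is at most $\beta'-1+2i\leq\beta'+2k-1$. This yields the $(\beta'+2k-1)$ edge-degree bound without spending a unit of $\beta'$, so the rounded graph retains at least $(1-o(1))\beta'|P|/2$ edges, the only loss being the factor $\frac{\beta/2-\gamma}{\beta/2+\gamma}=1-O\big(\sqrt{\log(k\beta)/\beta}\big)$, which genuinely is $o(1)$ under the hypothesis $\beta\geq\generalbeta{\beta'k}$ and is absorbed by a $(1-o(1))$ perturbation of the second parameter of $\alpha$. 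A minor further point: invoke the dependent-rounding guarantee as an existence statement (degree preservation holds with probability one and the expected number of retained edges equals the fractional mass), rather than via concentration, since $p|H|$ need not be large.
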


\begin{myproof}
It suffices to prove
\begin{equation}\label{eq:rc019283-20}
f(k, \frac{1}{2}\beta + \gamma, \frac{1}{2} \beta - \gamma) \geq (1-2\epsilon) \cdot f(k, \beta' + 2k-1, \beta').
\end{equation}
To see why (\ref{eq:rc019283-20}) suffices, observe that $\frac{2f}{2f+1} \geq \frac{2f'}{2f'+1}$ for any $f \geq f' > 0$. Thus from (\ref{eq:rc019283-20}) and from \Cref{def:alpha} we get
\begin{flalign*}
	\alpha(k, \frac{1}{2}\beta + \gamma, \frac{1}{2} \beta - \gamma) &= \frac{2f(k, \frac{1}{2}\beta + \gamma, \frac{1}{2} \beta - \gamma)}{2f(k, \frac{1}{2}\beta + \gamma, \frac{1}{2} \beta - \gamma) + 1} \geq \frac{2(1-2\epsilon)f(k, \beta' + 2k-1, \beta')}{2(1-2\epsilon)f(k, \beta' + 2k-1, \beta')+1}\\
	&\geq (1-2\epsilon)\frac{2f(k, \beta' + 2k-1, \beta')}{2f(k, \beta' + 2k-1, \beta')+1} = (1-2\epsilon)\alpha(k, \beta'+2k-1, \beta'),
\end{flalign*}
which is the needed guarantee of the claim. So from now on, we focus on proving (\ref{eq:rc019283-20}).

Take any bipartite graph $G = (P \cup Q, E)$ that is a \HEDCS{(\frac{1}{2}\beta + \gamma)}{k} with at least $|E| \geq (\frac{1}{2} \beta - \gamma)|P|/2$ edges. We construct from $G$ another bipartite graph $G' = (P \cup Q, E')$ that is a \HEDCS{(\beta' + 2k-1)}{k} with at least $|P| \beta'/2$ edges. \Cref{eq:rc019283-20} then immediately follows from \Cref{def:f}.

To go from $G$ to $G'$, we first construct a fractional solution $G_1$ obtained by assigning a weight of $\beta'/(\frac{1}{2}\beta + \gamma)$ to each edge in $G$. This ensures that $G_1$ is a {\em fractional} \HEDCS{\beta'}{k} (where the edge-degree constraints are generalized to fractional edge-degrees) with total weight at least
$$
\frac{\beta'}{\tfrac{1}{2}\beta+\gamma} \cdot |E| \geq \frac{\beta'}{\tfrac{1}{2}\beta+\gamma} \cdot (\tfrac{1}{2} \beta - \gamma)|P|/2 \geq (1-o(1)) \beta' |P|/2.
$$
Note that the average degree of the vertices in $P$ is now lower bounded by $(1-o(1))\beta'/2$ instead of $\beta'/2$ but it is straightforward to show that for any $\beta_1 > \beta_2 > 1$, we have $\alpha(k, \beta_1, (1 - o(1))\beta_2) \geq (1 - o(1)) \cdot \alpha(k, \beta_1, \beta_2)$. So we will now simply focus on showing that we can round the fractional solution above, into an integral one containing at least as many edges such that the degrees of end-points of any edge increases by at most $2k - 1$ as a result of the rounding.

Our rounding algorithm proceeds in iterations such that in the iteration $i$ of rounding, we round all edges whose level is $i$. In doing this rounding, we will ensure that the degree of an edge at a level higher than $i$ can increase by at most $2$ due to the rounding of level $i$ edges. As such this implies that after all $k$ levels have been rounded, the resulting integral solution $G'$ satisfies the property that all edge degrees are bounded by $\beta' + 2k$. However, with just a slightly more careful analysis, we will be able to improve this to $\beta' + (2k - 1)$, giving us the desired result.

We now describe the rounding scheme. We start by considering edges at level 1. We can view this as a fractional solution $x^{(1)}$ (obtained by restricting the fractional solution to only level $1$ edges) that induces a fractional degree $x^{(1)}(u)$ at each vertex $u$. Now using the dependent rounding scheme in Theorem 2.3 of~\cite{GandhiKPS06}, we know that there exists an integral solution that has (i) at least as many edges as the weight of $x$, and (ii) ensures that the degree of each vertex $u$ is either $\lfloor x^{(1)}(u) \rfloor$ or $\lceil x^{(1)}(u) \rceil$. Thus we can obtain an integral solution where the degree of each vertex goes up by only $1$ while preserving the total fractional mass of edges at level $1$. This means that any edge at levels $1$ through $k$ sees an increase of at most $2$ in the total degree of its end-points as a result of this rounding. We can now repeat this process on edges at levels $2$ through $k$, ultimately obtaining an integral solution such that degree of each edge in the final rounded solution can be bounded by $\beta' + 2k$.

To improve the bound to $\beta' + (2k-1)$, we observe the following. Fix any edge $(u,v)$ at some level $i$. If each of $x^{(1)}(u), x^{(2)}(u), ..., x^{(i)}(u)$, $x^{(1)}(v), x^{(2)}(v), ..., x^{(i)}(v)$ are integral, then the dependent rounding scheme leaves these degrees unaltered, and hence the degree of the edge $(u,v)$ in the rounded solution continues to be bounded by $\beta'$. On the other hand, if at least one of $x^{(1)}(u), x^{(2)}(u), ..., x^{(i)}(u)$, $x^{(1)}(v), x^{(2)}(v), ..., x^{(i)}(v)$ is fractional, then it must be that before rounding, the degree of the edge $(u,v)$ satisfies

$$ \sum_{j=1}^{i} \left( \lfloor x^{(j)}(u) \rfloor  + \lfloor x^{(j)}(v) \rfloor \right) \le \beta' - 1.$$ 
It then follows that after rounding, the degree of the edge $(u,v)$ is at most
$$ \sum_{j=1}^{i} \left( \lceil x^{(j)}(u) \rceil  + \lceil x^{(j)}(v) \rceil \right) \le \beta' - 1 + (2i) \le \beta' + (2k-1),$$ 
since $i \le k$. As discussed, this completes the proof.
\end{myproof}

\section{Proof of \Cref{cl:potential}}\label{sec:proofs}

\begin{restate}{Claim~\ref{cl:potential}}
	\Cref{alg:H} reaches \Cref{line:addToH-algH} at most  $4\mu_i\beta^2$ times.
\end{restate}

\begin{myproof}
	\Cref{alg:H} reaches \Cref{line:addToH-algH} every time that it encounters an $(H_i, \beta)$-underfull edge. Hence, it suffices to show that the total number of $(H_i, \beta)$-underfull edges encountered cannot is $\leq 2\mu_i\beta^2$. Our proof combines a potential function defined previously in \cite{BernsteinSteinSODA16,AssadiBBMSSODA19,AssadiBernsteinSOSA19,BernsteinICALP20} (see e.g., \cite[Proposition~2.4]{AssadiBernsteinSOSA19} or \cite[Lemma~4.2]{BernsteinICALP20}) with a simple additional idea.
	
	Define the following potential functions
	$$
		\Phi_1 := (2\beta - 1) |H_i|, \qquad \Phi_2 := \sum_{e \in H_i} \deg_{H_i}(e), \qquad \Phi := \Phi_1 - \Phi_2.
	$$
	
	We first show that $\Phi$ is non-negative at the start of \Cref{alg:H} after we set $H_i \gets H_{i-1}$. Recall from \Cref{obs:degree-of-H} that the maximum degree in $H_{i-1}$ is at most $\beta$. This means that $\Phi_2 \leq \beta |H_{i-1}|$, implying
	$$
	\Phi \geq (2\beta - 1)|H_{i-1}| - \beta|H_{i-1}| = (\beta - 1)|H_{i-1}| \geq 0.
	$$ 
	
	The key observation is that every time that we insert an $(H_i, \beta)$-underfull edge to $H_i$ or remove an $(H_i, \beta)$-overfull edge from $H_i$, the value of $\Phi$ increases by at least 1. The proof of this part is completely due to \cite{BernsteinSteinSODA16,AssadiBBMSSODA19,AssadiBernsteinSOSA19,BernsteinICALP20} and proceeds as described next.
	
	Suppose that we remove an $(H_i, \beta)$-overfull edge $e$ from $H_i$. This reduces $\Phi_1$ by exactly $2 \beta - 1$. Let us now analyze the change to $\Phi_2$. On the one hand, removing $e$ from $H_i$ decreases $\Phi_2$ by at least $\beta + 1$ because $e$ no longer participates in the sum and its edge-degree before removing it was at least $\beta + 1$ for being $(H_i, \beta)$-overfull. On the other hand, $e$ must have had at least $\deg_{H_i}(e) - 2 \geq \beta - 1$ incident edges in $H_i$ before being deleted. Deleting $e$ reduces the edge-degree of each of these edges by one, and so $\Phi_2$ overall decreases by $\beta + 1 + \beta - 1 = 2\beta$. Since $\Phi_1$ decreases by exactly $2\beta - 1$ and $\Phi_2$ decreases by at least $2\beta$, $\Phi$ increases by at least 1.
	
	Now consider inserting an $(H_i, \beta)$-underfull edge $e$ into $H_i$. This increases $\Phi_1$ by exactly $2\beta - 1$. Now on the one hand, inserting $e$ increases $\Phi_2$ by at most $\beta$ because $e$ will now participate in the sum and its edge-degree before inserting it was at most $\beta - 2$ for being $(H_i, \beta)$-underfull. On the other hand, $e$ has at most $\beta - 2$ incident edges in $H_i$ (or else it would not have been $(H_i, \beta)$-underfull before being inserted) and adding $e$ increases their edge-degrees by one. Therefore in total $\Phi_2$ increases by at most $\beta + (\beta - 2) = 2\beta - 2$. Since $\Phi_1$ increases by exactly $2\beta - 1$ and $\Phi_2$ increases by at most $2\beta - 2$, the value of $\Phi$ increases by at least 1 in this case too.
	
	In the next step of the proof, we show that $\Phi \leq 2 \mu_i \beta^2$ at all times. From \Cref{obs:monotonicity} recall that $H_i \subseteq G_i$. Since $\mu_i$ is the size of a maximal matching of $G_i$, there exists a vertex cover $W$ of $G_i$ (and thus $H_i$) with size $|W| = 2\mu_i$. Since each vertex in $H_i$ has maximum degree $\beta$ by \Cref{obs:degree-of-H} and each of these edges are connected to $W$, we get $|H_i| \leq 2\mu_i \beta$. This in particular implies $\Phi_1 \leq (2\beta - 1)2\mu_i \beta \leq 4 \mu_i \beta^2$. By non-negativity of $\Phi_2$, we get $\Phi \leq 4\mu_i \beta^2$.
	
	To finish the proof, note on the one hand that in every iteration that we encounter an $(H_i, \beta)$-underfull edge, we add it to $H$ and thus the value of $\Phi$ increases by at least 1 by our discussion above. On the other hand, the value of $\Phi$ is upper bounded by $4 \mu_i \beta^2$ as discussed. Hence, the number of $(H_i, \beta)$-underfull edges encountered by Algorithm~\ref{alg:H} is $\leq 4 \mu_i \beta^2$, which as discussed at the beginning, completes the proof.
\end{myproof}

\end{document}